\documentclass{article}

\usepackage{tikz}
\usepackage{amsfonts}
\usepackage{url,hyperref}
\usepackage{amsmath,amsthm,amsfonts,amssymb,mathrsfs}
\usepackage{commath}
\usepackage{mathtools}
\usepackage{thmtools}
\usepackage{thm-restate}
\usepackage{fullpage,complexity}
\usepackage{todonotes}
\usepackage{comment}
\usepackage{complexity}
\usepackage{MnSymbol}

\usetikzlibrary{graphs}
\usetikzlibrary{graphs.standard}

\declaretheorem[name=Theorem]{theorem}

\declaretheorem[name=Lemma, sibling=theorem]{lemma}
\declaretheorem[name=Proposition, sibling=theorem]{proposition}
\declaretheorem[name=Definition, sibling=theorem]{definition}
\declaretheorem[name=Corollary, sibling=theorem]{corollary}

\declaretheorem[name=Observation, sibling=theorem]{observation}
\declaretheorem[name=Problem]{problem}

\declaretheorem[name=Claim, sibling=theorem]{claim}

\title{Parameterized Shortest Path Reconfiguration}

\author{
Nicolas Bousquet\footnote{Univ. Lyon, LIRIS, CNRS, Université Claude Bernard Lyon 1, Villeurbanne, France. E-mail: \href{mailto:nicolas.bousquet@cnrs.fr}{nicolas.bousquet@cnrs.fr}.}
\and Kshitij Gajjar\footnote{International Institute of Information Technology Hyderabad (IIIT-H), Hyderabad, India. E-mail: \href{mailto:kshitij@iiit.ac.in}{kshitij@iiit.ac.in}.}  
\and Abhiruk Lahiri\footnote{Department of Computer Science, Heinrich Heine University, D\"{u}sseldorf, Germany. E-mail: \href{mailto:abhiruk@hhu.de}{abhiruk@hhu.de}.} 
\and Amer~E.~Mouawad\footnote{Department of Computer Science, American University of Beirut, Beirut, Lebanon. E-mail: \href{mailto:aa368@aub.edu.lb}{aa368@aub.edu.lb}.}
}

\date{}

\begin{document}
\maketitle

\begin{abstract}
An $st$-shortest path, or $st$-path for short, in a graph $G$ is a shortest (induced) path from $s$ to $t$ in~$G$. Two $st$-paths are said to be adjacent if they differ on exactly one vertex. A reconfiguration sequence between two $st$-paths $P$ and $Q$ is a sequence of adjacent $st$-paths starting from $P$ and ending at $Q$. Deciding whether there exists a reconfiguration sequence between two given $st$-paths is known to be \textsf{PSPACE}-complete, even on restricted classes of graphs such as graphs of bounded bandwidth (hence pathwidth). On the positive side, and rather surprisingly, the problem is polynomial-time solvable on planar graphs.
In this paper, we study the parameterized complexity of the \textsc{Shortest Path Reconfiguration} (SPR) problem. We show that SPR is \textsf{W[1]}-hard parameterized by $k + \ell$, even when restricted to graphs of bounded (constant) degeneracy; here $k$ denotes the number of edges on an $st$-path, and $\ell$ denotes the length of a reconfiguration sequence from $P$ to $Q$. We complement our hardness result by establishing the fixed-parameter tractability of SPR parameterized by $\ell$ and restricted to nowhere-dense classes of graphs. Additionally, we establish fixed-parameter tractability of SPR when parameterized by the treedepth, by the cluster-deletion number, or by the modular-width of the input graph.
\end{abstract}

\section{Introduction}
Many algorithmic questions can be posed as follows:
given the description of a system state and the description of a state we would ``prefer'' the system to be in, is it possible to transform the system
from its current state into a more desired one without ``breaking'' the system in the process? And if yes, how many steps are needed? Such problems naturally arise in the fields of mathematical puzzles, operational research, computational geometry~\cite{DBLP:journals/comgeo/LubiwP15}, bioinformatics,  and quantum computing~\cite{DBLP:conf/icalp/GharibianS15}. These questions received a substantial amount of attention under the so-called \emph{combinatorial reconfiguration framework} in the last decade.
We refer the reader to the surveys by van den Heuvel~\cite{DBLP:books/cu/p/Heuvel13}, Nishimura~\cite{DBLP:journals/algorithms/Nishimura18}, and Bousquet \textit{et al.}~\cite{BousquetMNS22+} for more background on combinatorial reconfiguration.

\subparagraph{Shortest path reconfiguration.}
In this work, we focus on the reconfiguration of $st$-shortest paths (or $st$-paths for short) in undirected, unweighted, simple graphs. It is well-known that one can easily find an $st$-path in a graph in polynomial time. In order to define the reconfiguration variant of the problem, we first require a notion of adjacency between $st$-paths.

As is common in the combinatorial reconfiguration framework, we focus on two models; the token-jumping model (TJ) and the token-sliding model (TS). We say that two $st$-paths are \emph{TJ-adjacent} if they differ on exactly one vertex, i.e., all the vertices are the same except at a unique position $p$. We say that two $st$-paths $P$ and $Q$ are \emph{TS-adjacent} if they are TJ-adjacent and the $p$\textsuperscript{th} vertex of $P$ and the $p$\textsuperscript{th} vertex of $Q$ are adjacent. A \emph{reconfiguration sequence from $P$ to $Q$} (if it exists) is a sequence of adjacent shortest paths starting at $P$ and ending at $Q$. In the \textsc{Shortest Path Reconfiguration} (SPR) problem, we are given a graph $G$, two vertices $s$ and $t$, two $st$-paths $P$ and $Q$ of length $k$ each, and the goal is to decide whether a reconfiguration sequence from $P$ to $Q$ exists. In the \textsc{Shortest Shortest Path Reconfiguration} (SSPR) problem, we are additionally given an integer $\ell$ which is an upper bound on the length of the desired reconfiguration sequence. 
Reconfiguration of shortest paths has many applications, e.g., in network design and operational research (we refer the interested reader to~\cite{GajjarJ0L22} for a detailed discussion around these applications). 

Many reconfiguration problems, SPR and SSPR included, naturally lie in the class \textsf{PSPACE}. Since there are no simple polynomial-time checkable certificates (as reconfiguration sequences are possibly of exponential length), they are generally not in \textsf{NP}. A decade ago, Bonsma~\cite{Bonsma13} proved that SPR is \textsf{PSPACE}-complete. In fact, the problem remains \textsf{PSPACE}-complete even when restricted to bipartite graphs~\cite{Bonsma13}, line graphs~\cite{GajjarJ0L22}, and graphs of bounded bandwidth/pathwidth/treewidth~\cite{Wrochna18}.
Several groups studied the complexity of the problem in other restricted graph classes such as grid graphs~\cite{AsplundEHHNW18}, claw-free graphs, chordal graphs~\cite{Bonsma13}, and circle graphs~\cite{GajjarJ0L22}. The most notable result has been obtained by Bonsma who showed that \textsc{Shortest Path Reconfiguration} can be decided in polynomial time for planar graphs~\cite{Bonsma17}. This result is rather surprising in the reconfiguration setting since most reconfiguration problems are known to be \textsf{PSPACE}-complete on planar graphs, see e.g.~\cite{DBLP:journals/tcs/ItoDHPSUU11, DBLP:journals/dam/ItoKD12, BousquetHKMS23}.
%In the last decade, the problem polynomial algorithms to decide the problem have been designed in many restricted graph classes such as grid graphs [ Asplund et al., 2018] \textcolor{red}{N. Isn't it a restriction of planar?}, claw-free graphs~\cite{Bonsma13}, chordal graphs~\cite{Bonsma13} and circle graphs~\cite{GajjarJ0L22}.

\subparagraph{Our results.}
Our focus is on the parameterized complexity of shortest path reconfiguration problems; which, to the best of our knowledge, has not been studied so far.  Other reconfiguration problems have been widely studied from a parameterized perspective in the last decade, see,  e.g.,~\cite{BousquetMNS22+} for a survey. 
A problem is \emph{fixed-parameter tractable}, \textsf{FPT} for short, on a class $\mathcal{C}$ of graphs with respect to a parameter $\kappa$, if there is an algorithm deciding whether a given input instance with graph $G \in \mathcal{C}$ admits a solution in time $f(\kappa) \cdot |V(G)|^c$, for a computable function~$f$ and constant $c$.  A \emph{kernelization algorithm} is a polynomial-time algorithm that reduces an input instance to an equivalent instance of size bounded in the parameter only (independent of the input size), known as a \emph{kernel}; we will say that two instances are \emph{equivalent} if they are both yes-instances or both no-instances.  Every fixed-parameter tractable problem admits a kernel, however, possibly of exponential or worse size. For efficient algorithms, it is therefore most desirable to obtain polynomial, or even linear, kernels. The \textsf{W}-hierarchy is a collection of parameterized complexity classes  $\textsf{FPT} \subseteq \textsf{W[1]} \subseteq \textsf{W[2]} \subseteq \cdots \subseteq \textsf{W[t]}$, for $t \in \mathbb{N}$. The conjecture $\textsf{FPT} \subsetneq \textsf{W[1]}$ can be seen as the analogue of the conjecture that $\textsf{P} \subsetneq \textsf{NP}$. Before stating our results precisely, let us formally define the problems we are interested in:

\medskip
\noindent
\textsc{Shortest Path Reconfiguration (SPR)} \\
\textbf{Input:} A graph $G$, two vertices $s,t$, 
 and two $st$-shortest paths $P,Q$ (each of length $k$). \\
\textbf{Question:} Is there a reconfiguration sequence from $P$ to $Q$?
\medskip

\medskip
\noindent
\textsc{Shortest Shortest Path Reconfiguration (SSPR)} \\
\textbf{Input:} A graph $G$, two vertices $s,t$, two $st$-shortest paths $P,Q$ (each of length $k$), and an integer $\ell$. \\
\textbf{Question:} Is there a reconfiguration sequence from $P$ to $Q$ of length at most $\ell$?
\medskip

In parameterized complexity, one is usually interested in two types of parameters; parameters related to the size of the solution or parameters related to the structure of the input graph.
For shortest path reconfiguration, there are two parameters related to the size of the solution which are the length $\ell$  of a reconfiguration sequence, and the length $k$ of the shortest $st$-paths (number of edges on the shortest $st$-paths) in $G$. Our first results will focus on these parameters. We will then discuss some parameters related to the graph structure such as treedepth and modular width. 
Our first result is a hardness result. We prove that the following holds (in both the token jumping and the token sliding model):

\begin{theorem}\label{thm:hardness1}
SPR is $\W[1]$-hard parameterized by $k$, and SSPR is $\W[1]$-hard parameterized by $k+\ell$.
\end{theorem}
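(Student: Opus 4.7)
The plan is to reduce from \textsc{Multicolored Clique} parameterized by the clique size $k$, one of the canonical \textsf{W[1]}-hard problems. Given an instance $(H, V_1 \cup \cdots \cup V_k)$, we will construct in polynomial time an SPR instance $(G, s, t, P, Q)$ in which every shortest $st$-path has length $N = \Theta(k^2)$, such that $P$ reconfigures to $Q$ if and only if $H$ admits a multicolored $k$-clique. Bounding the length of the witnessing reconfiguration sequence by a polynomial in $k$ will simultaneously give the SSPR hardness parameterized by $k + \ell$.

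The graph $G$ will be built as a layered graph with $N+1$ layers, where every shortest $st$-path selects exactly one vertex per layer. The core component is an \emph{edge-verification track} composed of $\binom{k}{2}$ consecutive blocks, one per pair of color classes: the block for $\{i,j\}$ encodes the choice of an edge of $H$ with endpoints in $V_i$ and $V_j$, and inter-block ``memory'' edges force the $V_i$-choice (resp.\ $V_j$-choice) to remain consistent across all blocks involving $V_i$ (resp.\ $V_j$). Consequently, shortest $st$-paths traversing the verification track correspond bijectively to multicolored $k$-cliques of $H$. Two additional small gadgets, attached around $s$ and $t$, will host the source and target paths $P$ and $Q$: we will tune these so that $P$ and $Q$ are themselves shortest $st$-paths, but so that any one-vertex modification of $P$ (resp.\ $Q$) that remains a shortest path already exits the source (resp.\ target) gadget and enters the verification track. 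Under this design, a reconfiguration from $P$ to $Q$ exists only if the verification track admits some shortest $st$-path, equivalently, only if $H$ contains a multicolored $k$-clique; conversely, given such a clique, we will exhibit an explicit $O(k^2)$-step reconfiguration, which also furnishes the $\ell$-bound needed for SSPR.

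The main technical obstacle we anticipate is the design of the inter-block memory edges: we need to enforce consistency of the chosen $V_i$-vertex across all blocks involving color $i$ without creating shortcuts that would shrink the $s$-to-$t$ distance below $N$ and thereby collapse the layered structure. Striking this balance is what will force the reduction to use $\Theta(k^2)$ layers in total, and careful length padding inside each block will be needed. A secondary step is lifting the TJ hardness to the TS model: we plan to subdivide the inter-layer edges and insert auxiliary pivot vertices inside each layer, so that any legal TJ step can be realized by a short sequence of TS steps, none of which exits the set of shortest $st$-paths; this will yield hardness in both models uniformly.
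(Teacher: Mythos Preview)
Your high-level framework matches the paper: reduce from \textsc{Multicolored Clique}, build a layered graph with $\Theta(k^2)$ layers so that every shortest $st$-path picks one vertex per layer, and arrange the layers so that each ordered pair of colours appears consecutively somewhere. The paper also obtains the $\ell$-bound by showing the forward sequence has length polynomial in $k$, as you plan.

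The gap is exactly at the step you flag as the ``main technical obstacle'': enforcing that the $V_i$-vertex chosen is the same in every block that involves colour $i$. Your proposal handles this with ``inter-block memory edges'' inside a single verification track, and you claim that shortest $st$-paths through the track are in bijection with multicoloured cliques. This cannot be achieved by a single layered track of polynomial width. In a layered graph a shortest path is determined by \emph{local} adjacencies (layer $p$ to layer $p{+}1$); the choice made in a $V_i$-layer can only influence a distant $V_i$-layer through the intermediate layers, which encode other colour classes. To carry the $V_i$-commitment forward you would have to inflate each layer to record the choices made so far, which is exponential in $k$. Edges that jump across blocks, on the other hand, either lie inside a single distance class (hence never appear on a shortest path and constrain nothing) or create genuine shortcuts that destroy the layered structure---the very issue you anticipate but do not resolve. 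The paper is explicit that the naive single-track gadget fails for precisely this reason: paths through it need not be ``properly coloured''.

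The paper's fix is structurally different from memory edges. It stacks, below the selection gadget $\Gamma_\star$, one \emph{verification level} $\Gamma^i$ per colour $i$; each $\Gamma^i$ consists of $n$ pairwise \emph{disjoint} copies of the selection gadget, the $j$th copy \emph{collapsed} so that every $V_i$-layer contains only the vertex $v^i_j$. Boundary gadgets $\Gamma_{i,\star}$ separate the levels. Any reconfiguration from $P$ to $Q$ must descend through every level; a monotonicity lemma (no local or backward moves in a shortest sequence) forces the entire path to sit inside a single $\Gamma_{i,j}$ while crossing level $i$, which pins the $V_i$-choice to $v^i_j$ everywhere. Consistency is thus enforced by \emph{disconnection between parallel rows}, not by edges within one row. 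Your plan would need to be rewritten around this (or an equivalent) mechanism; the ``memory edges'' idea as stated does not carry the proof.
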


We will prove Theorem~\ref{thm:hardness1} in Section~\ref{sec:hardness}.
The idea of the proof is a reduction from the \textsc{Multicolored Clique} problem. Let $(V_i)_{i \le k}$ be the vertices of an instance of the \textsc{Multicolored Clique} problem. Intuitively (the real proof being more technical), we will construct a graph where the length of the $st$-paths will be in $\mathcal{O}(k^2)$, each integer representing a vertex of the set $V_i$. 
The goal would be to transform a path $P$ into a path $Q$, forcing us to select a vertex in each set.
For every pair $i,j$, there exists an integer $r$ such that the $r$\textsuperscript{th} vertex corresponds to a vertex in $V_i$ and the $(r+1)$\textsuperscript{th} vertex corresponds to a vertex in $V_j$. The key argument of the proof consists in finding a mechanism to ensure that the vertex selected in each copy of $V_i$ is the same, which permits us to conclude that the subset of selected vertices is a multicolored clique of the desired size.

One can then naturally wonder if this hardness result can be pushed further. The answer is yes, and in fact, we prove that the problems are hard (in both the token jumping and the token sliding model) even restricted to a very simple class of graphs:

\begin{theorem}\label{thm:hardness2}
SPR is $\W[1]$-hard parameterized by $k$, and SSPR is $\W[1]$-hard parameterized by $k+\ell$, even when the inputs are restricted to graphs of constant degeneracy.
\end{theorem}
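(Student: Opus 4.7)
The plan is to take the reduction constructed for Theorem~\ref{thm:hardness1} and carefully ``sparsify'' its gadgets so that the output graph has constant degeneracy, while preserving all three key quantities of the reduction: the equivalence with \textsc{Multicolored Clique}, the length $k$ of the shortest $st$-paths, and (for SSPR) the upper bound $\ell$ on the reconfiguration sequence. Recall that a graph is $d$-degenerate exactly when it admits a vertex ordering in which every vertex has at most $d$ later neighbors, so it suffices to build the modified reduction from bounded-degree pieces glued along a few ``connective'' vertices that admit an obvious elimination order.

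The first step is to identify the high-degree (hence potentially high-degeneracy) parts of the original construction. In the sketch given above Theorem~\ref{thm:hardness1}, each set $V_i$ of the \textsc{Multicolored Clique} instance is encoded by choices made at specific positions of the $st$-path, and this is typically realized by ``selection'' vertices adjacent to every element of $V_i$. The natural remedy is to replace each such selection hub by a sparse gadget: arrange the candidate vertices corresponding to $V_i$ as the leaves (or along a single path) of a bounded-degree tree whose root plays the role of the original hub. Edges encoding the adjacency relation of the \textsc{Multicolored Clique} instance, together with the ``same-vertex-in-each-copy'' forcing mechanism, can then be rerouted through these sparse gadgets by subdividing them and attaching each subdivision through an auxiliary vertex of constant degree.

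Once the gadgets are sparse, I would equalize the lengths of the alternative subpaths inside each gadget by uniform padding, so that the shortest-path distances used in the original analysis are preserved up to an additive constant (depending only on $k$). This guarantees that the same token-jumping and token-sliding dynamics as in Theorem~\ref{thm:hardness1} are implemented: the forcing arguments showing that a reconfiguration sequence from $P$ to $Q$ corresponds to selecting the same vertex in each copy of $V_i$, and thus to a multicolored clique, go through verbatim. Constant degeneracy is then immediate from an explicit peeling order: first the leaves of the tree-like selection gadgets, then their internal vertices, then the subdivision vertices on edge gadgets, and finally the small number of connective vertices that glue the pieces together; every vertex loses at most a constant number of later neighbors when removed in this order.

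The main obstacle I anticipate is the verification step, not the design step: replacing a dense hub by a tree or path creates many new vertices and, a priori, potential shortcuts or spurious alternative shortest $st$-paths that do not correspond to valid clique choices. Keeping the BFS-layer structure intact — so that every shortest path still visits exactly one representative per color class, and that sliding/jumping inside the new gadgets cannot ``leak'' across unrelated color classes — is the delicate part. I expect token-sliding to require a bit more care than token-jumping, since sliding enforces local adjacency and hence limits how tokens may traverse the added subdivisions; this will likely dictate the precise lengths of the padding paths used in the gadget construction.
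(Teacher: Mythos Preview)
Your plan contains the right ingredient but misdiagnoses the source of the high degeneracy, and the ``tree gadget'' idea would not survive the shortest-path constraint.

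The high-degree vertices you single out --- the ``selection hubs'' on $P$ and $Q$ that see an entire layer --- are \emph{not} the obstacle. Degeneracy is about finding a low-degree vertex in every subgraph, not about bounding the maximum degree; once all the internal gadget vertices have been peeled, $P\cup Q$ is just a cycle, so those hub vertices can safely sit at the end of the elimination order. What actually drives the degeneracy up in the construction of Theorem~\ref{thm:hardness1} is the dense bipartite graph between the last layer of one group $R_i$ and the first layer of the next group $R_{i+1}$: that bipartite piece is (a copy of) the edge relation of the \textsc{Multicolored Clique} instance, and it can have arbitrarily large minimum degree on both sides. Your secondary suggestion --- subdividing the edges that encode adjacency in $G$ --- is exactly what the paper does: it inserts an extra ``edge layer'' $H_{i,j}$ (one vertex per edge of $E_{\mu(i),\mu(j)}$) between every two consecutive groups. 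Each such edge-vertex has only two neighbours inside its own $\Gamma$ plus one neighbour in each of the two adjacent $\Gamma$'s, so these vertices can be peeled first and the whole graph becomes $4$-degenerate. No tree gadgets, no padding, and the layered BFS structure is preserved verbatim because every edge layer simply adds one to all subsequent distances.

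Your tree-gadget plan, by contrast, runs into a structural problem: in the shortest-path setting every vertex has a well-defined distance from $s$, so replacing a hub by a rooted tree forces the leaves to sit at different BFS levels than the hub did, and ``uniform padding'' cannot repair this without either re-introducing high-degree merge points or destroying the one-vertex-per-layer property on which the entire forcing argument rests. For token sliding the paper needs a further twist you have not anticipated: the aligned construction of Corollary~\ref{thm:hardness_ts} has vertical matchings from each $\Gamma_{i,\star}$ to all $n$ graphs in $\Gamma^{i+1}$, so even the vertex layers become dense; the paper replaces each $H$-layer by vertices indexed by \emph{triangles} (or incident edges) of $G$, so that every vertex in a verification gadget encodes enough local information to have constant degree into the boundary gadgets, yielding $6$-degeneracy. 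Your sketch does not address this sliding-specific difficulty.
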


In order to prove Theorem~\ref{thm:hardness2}, we adapt the proof of Theorem~\ref{thm:hardness1} to appropriately reduce the degeneracy of the graph. We then complement these negative results with the following positive ones.

\begin{theorem}\label{thm:easy1}
SSPR is $\FPT$ parameterized by $\ell$ on nowhere-dense classes of graphs (in both the token jumping and the token sliding model).
\end{theorem}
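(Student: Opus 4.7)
The plan is to reduce SSPR parameterized by $\ell$ to first-order model checking on an expansion of the input structure, and then invoke the theorem of Grohe, Kreutzer and Siebertz that first-order model checking is $\FPT$ on every nowhere-dense class of graphs. Given $(G, P, Q, \ell)$, I expand $G$ to a relational structure $\mathcal{G}$ with unary predicates $\text{OnP}, \text{OnQ}$ marking the vertices of $P$ and $Q$, and binary predicates $\text{NextP}, \text{NextQ}$ recording the successor relation along the two paths. Since $P$ and $Q$ are paths in $G$, every tuple of $\text{NextP}$ or $\text{NextQ}$ is already an edge of $G$, so the Gaifman graph of $\mathcal{G}$ coincides with $G$ and still lies in the given nowhere-dense class.

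For each fixed $\ell$, I construct a sentence $\phi_\ell$ of size bounded by some function $f(\ell)$ which holds on $\mathcal{G}$ if and only if some reconfiguration sequence of length at most $\ell$ transforms $P$ into $Q$. A candidate sequence is encoded by $2\ell$ vertex variables $a_1, v_1, \ldots, a_\ell, v_\ell$, where move $i$ is interpreted as ``replace the current occupant of the position of $a_i$ in $P$ by $v_i$''. The delicate point is that positions are integers up to $k$, which cannot be quantified in FO; I sidestep this by using anchor vertices $a_i \in V(P)$ as proxies for positions, which lets predecessors, successors, and the current occupant of any position be accessed via $\text{NextP}$ and bounded disjunctions over $\{1, \ldots, \ell\}$. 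Concretely, the current vertex at anchor $a$ after $i-1$ moves is defined by an FO formula $\text{cur}_{i-1}(a, x)$ of size polynomial in $\ell$: $x = v_j$ for the largest $j < i$ with $a_j = a$, or $x = a$ if no such $j$ exists. Validity of move $i$ in the TJ model then asserts (a)~$a_i \in V(P) \setminus \{s,t\}$, (b)~$v_i$ is adjacent in $G$ to the current occupants of the predecessor and successor of $a_i$ in $P$ (found through $\text{NextP}$), and (c)~$v_i$ is distinct from the current occupant of every position; the TS model adds the requirement that $v_i$ is adjacent to the current occupant of $a_i$ itself. Sequences of length strictly less than $\ell$ are accommodated by allowing each step to optionally act as a no-op.

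For the terminal condition $P_m = Q$, I use that any TJ move preserves length $d_G(s,t)$, so every $P_i$ is automatically a shortest $st$-path as soon as it is a simple path, and two shortest $st$-paths coincide if and only if they share the same vertex set (their internal ordering is forced by distances from $s$). Thus $P_m = Q$ is expressible as $\forall v.\,(v \in V(P_m) \leftrightarrow \text{OnQ}(v))$, where ``$v \in V(P_m)$'' unfolds, using only the move variables and $\text{OnP}$, into ``either $\text{OnP}(v)$ and $v \ne a_i$ for all $i$, or $v = v_j$ for some $j$ such that $a_k \ne a_j$ for every $k > j$''. Assembling these ingredients yields the desired sentence $\phi_\ell$ of size $f(\ell)$, and applying Grohe-Kreutzer-Siebertz then solves the problem in time $f(\ell) \cdot |V(G)|^{1+o(1)}$. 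The principal obstacle is the absence of integer position labels in the FO signature; the anchor-vertex encoding described above is precisely what keeps the entire formula of size bounded in the parameter.
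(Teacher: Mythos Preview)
Your proof is correct and takes a genuinely different route from the paper's. The paper proceeds in two stages: it first establishes a combinatorial claim that any length-$\ell$ reconfiguration sequence can be assumed to modify only positions within distance $\ell$ of the symmetric difference of $P$ and $Q$, uses this to contract the instance down to one with $k = O(\ell^2)$, and only then writes an FO sentence with $O(k\ell)$ variables $x_{i,j}$ (one per position, per step) together with named constants for the vertices of $P'$ and $Q'$. Your approach sidesteps the reduction entirely: by existentially quantifying $2\ell$ anchor--value pairs and using vertices of $P$ as proxies for positions, you obtain a sentence of size bounded in $\ell$ directly, without first bounding $k$. This is cleaner and avoids constructing an auxiliary graph; the paper's route, in exchange, isolates a reusable structural fact (positions far from the symmetric difference are irrelevant) that your encoding does not surface. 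Two minor remarks: (i) since shortest paths are induced, the predecessor and successor of $a_i$ in $P$ are precisely its two neighbours inside $V(P)$, so the adjacency check can be written with $\text{OnP}$ and $E$ alone; this keeps you within the vertex-colored-graph setting where the Grohe--Kreutzer--Siebertz theorem applies verbatim, rather than relying on its (standard but not always explicitly stated) extension to structures whose Gaifman graph is nowhere dense. (ii) The distinctness clause in your move-validity check is redundant: any $st$-walk of length $d_G(s,t)$ is automatically simple, as you yourself observe when justifying the terminal condition.
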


The idea of the proof of Theorem~\ref{thm:easy1} consists in proving that if $k$ is too large compared to $\ell$ then there are many positions along the shortest paths that are already occupied by tokens that never have to move. Using this fact, we then contract parts of the paths in order to get $st$-paths of length $\mathcal{O}(f(\ell))$, for some computable function $f$. Now, since $k$ is bounded by some function of $\ell$, one can prove that the existence of a reconfiguration sequence of length $\ell$ can be verified via model checking a first-order formula $\phi$ whose size depends only on $\ell$. Combining this observation with the black-box result of~\cite{DBLP:conf/stoc/GroheKS14} that ensures that the model checking problem can be decided in time $\mathcal{O}(f(|\phi|) \cdot |V(G)|)$ on nowhere-dense graphs, we get the desired~result.

We proceed by considering some of the most commonly studied structural graph parameters. In particular, we prove the following:

\begin{theorem}\label{thm:easy2}
SPR and SSPR (in both the token jumping and the token sliding model) are $\FPT$ when parameterized by either the treedepth, the cluster deletion number, or the modular width of the input graph. 
\end{theorem}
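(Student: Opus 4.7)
The plan is to handle each of the three parameters separately, but following in each case a common two-step strategy: first, bound the length $k$ of shortest $st$-paths by a function of the parameter; then, exploit the associated graph decomposition, either to kernelize to an equivalent instance of bounded size or to perform a bottom-up dynamic program on the decomposition tree. All arguments work uniformly for both the token-jumping and token-sliding models, since the transition rules differ only in a local adjacency check.

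\emph{Treedepth.} Since a graph of treedepth $d$ contains no simple path on $2^d$ edges, we immediately have $k < 2^d$. Fixing a treedepth decomposition $T$ of depth $d$, I would run a bottom-up dynamic program over $T$. The crucial structural fact is that, for any vertex $v$ of $T$, its ancestors in $T$ (at most $d$ of them) form a separator in $G$ between the subtree rooted at $v$ and the rest of the graph. The DP state at $v$ records, for every possible placement of path-positions on the ancestors, the set of reconfiguration-equivalence classes of partial $st$-paths inside the subtree. Since the number of such placements is bounded by a function of $d$, and child states are combined by matching tokens on shared ancestors, the DP runs in time $f(d) \cdot n^{O(1)}$.

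\emph{Cluster-deletion number.} Let $X$ be a cluster-deletion set of size $c$, so that $G - X$ is a disjoint union of cliques. Since a shortest subpath inside a clique uses at most two vertices, any shortest $st$-path interleaves at most $c$ vertices from $X$ with clique-segments of length at most two, giving $k = O(c)$. I would then partition each clique's vertices by their $X$-neighborhoods and observe that two vertices of the same clique with the same $X$-neighborhood are true twins, hence interchangeable under reconfiguration. After retaining only $O(k)$ representatives per twin class, and further keeping a bounded number of representative cliques per ``type'' (i.e., cliques with the same multiset of $X$-neighborhoods among their vertices), one obtains an equivalent kernel of total size $f(c)$. A brute-force BFS on the reconfiguration graph of the kernel then solves SPR and SSPR.

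\emph{Modular width.} I would first argue that a connected graph of modular width $w$ has diameter at most $w$: the modular decomposition root is either a series node (diameter at most $2$) or a prime node with at most $w$ children, in which case any two vertices in distinct modules are at distance bounded by the diameter of the quotient (at most $w-1$), while two in the same module share an external neighbor (distance at most $2$); the parallel case is ruled out by connectivity. Hence $k \leq w$. The FPT algorithm is a DP on the modular decomposition tree. Series and parallel internal nodes are handled directly; at a prime node with at most $w$ children, the state records which modules the $st$-path visits, in what order, and the reconfiguration-equivalence class inside each visited module, which is retrieved recursively. The per-node state space is bounded by $f(w)$, giving FPT.

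The main obstacle in all three cases is proving correctness of the DP (or of the kernel reductions): what is needed is a commutativity-type lemma ensuring that an arbitrary reconfiguration sequence in $G$ can be reordered so that moves affecting disjoint parts of the decomposition occur contiguously, so that the global sequence can be recovered from the locally computed pieces. This is the usual ``swap adjacent independent moves'' argument, but some care is required for reconfiguration steps whose endpoints straddle the boundary of a subtree or a module, and for SSPR one must additionally track the length of each local subsequence so that the bound $\ell$ is respected globally.
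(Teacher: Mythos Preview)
Your treatment of the cluster-deletion case is essentially the paper's argument: bound $k$ linearly in $c$, collapse true twins inside each clique by $X$-neighbourhood type, then bound the number of cliques by clique-type. That part is fine.

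For treedepth and modular width, however, there is a genuine gap. In both cases you propose a DP whose state records ``the set of reconfiguration-equivalence classes of partial $st$-paths'' inside a subtree or module, but you never define this equivalence nor bound the number of classes by a function of the parameter. That bound is exactly what would make the algorithm FPT, and it does not come for free: a priori the reconfiguration graph restricted to a subtree can have many components and a complicated quotient, and nothing you wrote rules out $n^{\Omega(1)}$ distinct classes. The ``commutativity-type lemma'' you defer to the end is also not routine here: a single reconfiguration step can move a token \emph{across} the ancestor separator (for treedepth) or from one top-level module to another module at the same distance from $s$ (for modular width), so a global sequence does not decompose into independent local sequences that can simply be concatenated. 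Without making this precise, the DP is a sketch of a strategy rather than a proof.

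The paper takes a different route and avoids these issues by kernelizing rather than doing DP. For treedepth it invokes the $S$-flap reduction of Demaine et al.: for each node $v$ of the treedepth tree, the subtrees below $v$ are $S_v$-flaps (where $S_v$ is the set of ancestors of $v$), and one keeps only boundedly many flaps per isomorphism type; iterating bottom-up yields a kernel of size $f(d)$. For modular width the key structural lemma is sharper than your diameter bound: if $s$ and $t$ lie in different top-level modules then \emph{every} shortest $st$-path uses at most one vertex per module, and any module appearing on two shortest paths appears at the same position in both. This lets one reduce each module to one or two representatives (with some extra care for token sliding, handled by guessing which tokens never leave their module and contracting connected components inside modules), again yielding a kernel of size $O(w)$ on which brute force suffices. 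Both arguments sidestep the need to reason about equivalence classes of partial reconfiguration sequences.
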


To motivate the study of these parameters, we refer the reader to Figure~\ref{fig:classes} (formal definitions provided later). Recall that SPR is \textsf{PSPACE}-complete even when restricted to graphs of bounded bandwidth, pathwidth, treewidth, and cliquewidth~\cite{Wrochna18}. This implies para-\textsf{PSPACE}-hardness on the aforementioned classes. Hence, our Theorem~\ref{thm:easy2} almost completes the picture for structural parameterizations of the problems, leaving open the case of feedback vertex set number. 

\begin{figure}[ht]
    \centering
    \includegraphics[scale=0.7]{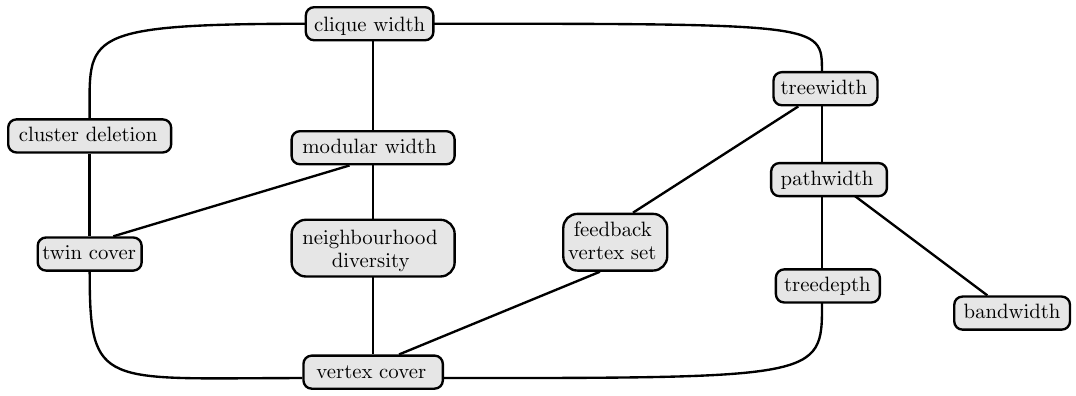}
    \caption{The graph parameters studied in this paper. A connection between two parameters
indicates the existence of a function in the one above that lower-bounds the one below.}
    \label{fig:classes}
\end{figure}

\subparagraph{Further discussions and open problems.} As we later show, it turns out that when solving the SPR problem parameterized by the feedback vertex set number of the graph, one can assume that $k$, the length of $st$-paths, is bounded linearly in the parameter. Hence, the following  remains an interesting open question:

\begin{problem}~\label{q:fvs}
Is SPR fixed-parameter tractable when parameterized by feedback vertex set number?
\end{problem}

When the feedback vertex set number is bounded, the graph can be seen as a disjoint union of trees plus a bounded number of additional vertices. One can easily remark that if vertices of the feedback vertex set are far apart in the $st$-paths then the structure is very rigid and very few tokens can move in the graph. However, when vertices of the feedback vertex set are close to one another (along the $st$-paths), there might exist some arbitrarily long paths between two layers in the layered partition of the graph. Here, the layered partition refers to the partitioning of the vertex set based on distance either from $s$ or from $t$. Tokens along these (layer) paths that do not belong to the feedback vertex set are not restricted and can traverse their corresponding layer path in both directions an unbounded number of times. In particular, it implies that, if there exists a reconfiguration sequence, that sequence might be arbitrarily long. So in order to design a reconfiguration sequence (from a kernelization perspective at least, which is known to be equivalent to fixed-parameter tractability), we have to find a way to reduce these long structures into structures of bounded length (see Figure~\ref{fig:hardinstances} for an example of such an instance). We were not able to solve this very special case of the problem.

\begin{figure}[ht]
    \centering
    \includegraphics[scale=0.4]{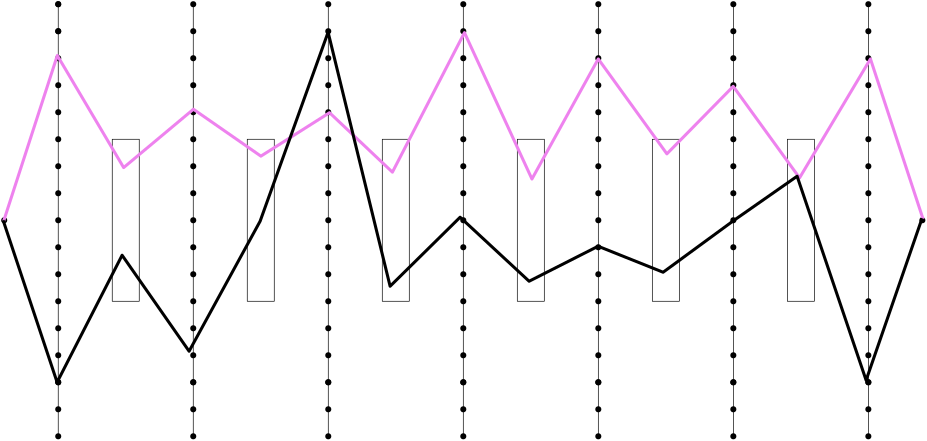}
    \caption{Example of a hard instance. Only edges of starting and ending paths are shown. Rectangles represent vertices of the feedback vertex set who neighborhood into the adjacent paths can be arbitrary.}
    \label{fig:hardinstances}
\end{figure}

As far as we know, it also remains an open question whether SPR is in \textsf{P} or is \textsf{NP}-complete on graphs of constant feedback vertex set number. Note that an \textsf{XP} algorithm follows immediately from the fact that (after appropriately discarding parts of the input) the number of $st$-paths is roughly $|V(G)|^f$, where $f$ denotes the feedback vertex set number.
Regardless, in case of a positive answer to Problem~\ref{q:fvs}, the next natural question is the following:

\begin{problem}
Is SPR fixed-parameter tractable when parameterized by $k$ on graphs of bounded pathwidth? What about bounded treewidth? How about parameterization by $k$ plus the treewidth? 
\end{problem}

It is an easy exercise to remark that SPR is \textsf{PSPACE}-complete on graphs of bounded bandwidth, pathwidth, and treewidth using a  reduction from \textsc{$H$-Word Reconfiguration}~\cite{Wrochna18}. 
When the treewidth is $1$, there exists a unique minimum $st$-path and the problem is simple. Trees and forests are graphs which are $1$-degenerate and every $1$-degenerate graph is a forest, however, the complexity of both SPR and SSPR remains open for $2$-degenerate graphs.

%In Section~\ref{sec:hardness}, we prove that the SSPR is W[1]-hard parameterized by $k+\ell$ on general graphs but only that it is W[1]-hard parameterized by $k$ on $3$-degenerate graphs. This is due to the fact that our adaptation of our reduction needed to expand by a function of $n$ the length of the transformation in order to get a bounded degenerate graph. We conjecture that the following also holds:

\begin{problem}
What is the complexity of SPR and SSPR on $2$-degenerate graphs? How about $3$-degenerate and $4$-degenerate graphs?
\end{problem}

\subparagraph{Related work.}
Reconfiguration of paths and other subgraphs has been considered before. In the example of paths, Demaine~\textit{et al.} proved in~\cite{DemaineEHJLUU19} that the problem of reconfiguring (arbitrary) paths is \textsf{PSPACE}-complete in general, and polynomial-time solvable for some restricted graph classes. When not restricted to shortest paths, the problem is quite different, since the extremities of the paths are not fixed and the goal is not necessarily to reconfigure shortest paths.

Reconfiguration problems on graphs of bounded feedback vertex set number and on graphs of bounded treewidth have already received a considerable amount of attention, and they are usually not easy to place in \textsf{FPT} (unlike their optimization counterparts, where a simple branching strategy or dynamic programming algorithm is usually enough to get an \textsf{FPT} algorithm). For instance, \textsc{Independent Set Reconfiguration} (in the token sliding model) on graphs of bounded feedback vertex set number is \textsf{FPT}; this fact follows easily from the multi-component reduction in~\cite{DBLP:journals/corr/abs-2204-05549}. However, the question is still open for the reconfiguration of dominating sets, for instance. The case of bounded treewidth graphs is open for both \textsc{Independent Set Reconfiguration} and \textsc{Dominating Set Reconfiguration} (in the sliding model)~\cite{BousquetMNS22+}.
\section{Hardness results}\label{sec:hardness}
We start with the case of SPR parameterized by $k$ on general graphs. The same reduction will imply the hardness of SSPR parameterized by $k + \ell$. We then describe how to modify the construction to obtain a graph of constant degeneracy. 
\subsection{General graphs}
Our reduction is from the \textsc{Regular Multicolored Clique} (RMC) problem, which is known to be \textsf{NP}-complete and \textsf{W[1]}-hard when parameterized by solution size $\kappa$~\cite{pietrzak2003parameterized}. The problem is defined as follows. We are given a $\kappa$-partite graph $G = (V,E)$ such that $V$ is partitioned into $\kappa$ independent sets $V = V_1\cupdot V_2\cupdot \cdots \cupdot V_\kappa$ and each partition has size exactly $n$, i.e., $|V| = \kappa n$. 
We denote the vertices of $V_i$ by $v_1^i,v_2^i,\ldots,v_{n}^i$. Moreover, every vertex $v_j^i \in V_i$ has exactly $r$ neighbors in every set $V_{i'}$, $i \neq i'$. In other words, every vertex in $G$ has degree exactly $r(\kappa - 1)$. Given an instance $(G,\kappa)$ of RMC, the goal is to decide if $G$ contains a clique of size $\kappa$, which we call a multicolored clique since it must contain exactly one vertex from each $V_i$, $i \in [\kappa]$. 
We reduce $(G,\kappa)$ to an instance $(G',s,t,P,Q)$ of SPR, where $P$ and $Q$ are $st$-paths in $G'$ of length $k = \mathcal{O}(\kappa^2)$.

\subparagraph{Properly colored $st$-paths.}
Before discussing $G'$, we start by describing a key gadget of our construction which is a graph called $H$. The graph $H$ consists of $\alpha = 6\kappa^2$ sets of vertices 
$H_1, H_2, \ldots, H_\alpha$ such that $|H_i| = n$ for each $i \in [\alpha]$. We group every three consecutive sets into $\beta = 2\kappa^2$ groups $R_1 = \{H_1, H_2, H_3\}$, $R_2 = \{H_4, H_5, H_6\}$, $R_3 = \{H_7, H_8, H_9\}$, $\ldots$, and $R_\beta = \{H_{\alpha - 2}, H_{\alpha - 1}, H_\alpha\}$. We call $H_i$ the $i$th layer of $H$ and $R_i$ the $i$th group of $H$; it will become clear later that a shortest path will select a vertex from each $H_i$. We also define a mapping $\mu: [\beta] \rightarrow [\kappa]$ such that each $R_i$ is mapped to some $V_j$, for $i \in [\beta]$ and $j \in [\kappa]$. In other words, each $R_i = \{H_a, H_b, H_c\}$ will  correspond to taking three copies of some $V_j$. We sometimes abuse notation and write $\mu(R_i) = V_j$ to denote the image of a set. We also overload notation and write $\mu(H_p) = V_j$ whenever $H_p \in R_i$ and $\mu(R_i) = V_j$. In other words, we also define a mapping $\mu: [\alpha] \rightarrow [\beta] \rightarrow [\kappa]$. 

Furthermore, we construct $\mu$ in such a way that, for every pair $(j,j')$, $j \neq j'$ and $j,j' \in [\kappa]$, there exists at least one integer $i < \beta$ such that $\mu(i) = j$, $\mu(i+1) = j'$. In other words, for every two sets $V_j$ and $V_{j'}$, there must exist two consecutive groups $R_i$ and $R_{i+1}$ such that $R_i$ is mapped to $V_j$ and $R_{i+1}$ is mapped to $V_{j'}$. One can easily check that it is indeed possible to construct such a function $\mu$ when $\beta = 2\kappa^2$. 
We define $\mu$ as follows:
\begin{align*}
    \text{For each $i \in [\beta]$, $R_i$ is mapped to $V_{\mu(i)}$, where }
    \displaystyle{
        \mu(i) = \begin{cases}
        1 + \lfloor (i-1)/2\kappa \rfloor &\text{if $i$ is odd};
        \\
        1 + ((i-2) \bmod 2\kappa)/2 &\text{if $i$ is even}.
    \end{cases}}
\end{align*}

\begin{observation}\label{obs:allpairs}
For each $(j,j')\in [\kappa]\times[\kappa]$ such that $j \neq j'$, there exists an $i\in[\beta - 1]$ such that $\mu(i) = j$ and $\mu(i+1) = j'$.
\end{observation}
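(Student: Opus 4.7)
The plan is to directly verify the observation by exhibiting, for each admissible pair $(j,j')$, an explicit index $i \in [\beta - 1]$ witnessing $\mu(i)=j$ and $\mu(i+1)=j'$. The intuition is that $\mu$ organizes $[\beta]$ into $\kappa$ consecutive blocks of length $2\kappa$, namely $B_j = \{2(j-1)\kappa + 1, \ldots, 2j\kappa\}$ for $j \in [\kappa]$. On $B_j$, the odd positions all take the value $j$: indeed, $\lfloor (i-1)/(2\kappa) \rfloor = j-1$ for every odd $i \in B_j$. Meanwhile, the even positions of $B_j$ sweep through $1, 2, \ldots, \kappa$ in order, since $(i - 2) \bmod 2\kappa$ cycles through $0, 2, \ldots, 2\kappa - 2$ as $i$ runs over the even elements of $B_j$. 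Hence, within $B_j$ one encounters every consecutive pair of the form $(j, j')$ with $j' \in [\kappa]$.

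Concretely, given $(j,j')$ with $j \neq j'$, I would set $i = 2(j-1)\kappa + 2j' - 1$. This $i$ is odd, lies in $B_j$, and satisfies $i \leq 2j\kappa - 1 \leq 2\kappa^2 - 1 = \beta - 1$, so $i \in [\beta - 1]$. The odd-case formula yields $\mu(i) = 1 + \lfloor (j-1) + (j'-1)/\kappa \rfloor = j$, where the floor simplifies because $0 \leq (j'-1)/\kappa < 1$. Similarly, $i+1 = 2(j-1)\kappa + 2j'$ is even, and since $0 \leq 2j' - 2 < 2\kappa$, the even-case formula gives $\mu(i+1) = 1 + (2j'-2)/2 = j'$, as required.

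The verification is purely mechanical arithmetic; the only mild subtlety is confirming that $\beta = 2\kappa^2$ is large enough to accommodate every ordered pair, which is precisely why one needs $\kappa$ blocks of $2\kappa$ positions to cover all $\kappa(\kappa-1)$ pairs with distinct coordinates. There is no real obstacle beyond careful bookkeeping of the indices and keeping the floor and modulo computations straight.
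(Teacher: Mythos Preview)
Your argument is correct: the explicit choice $i = 2(j-1)\kappa + 2j' - 1$ works, and your computations of $\mu(i)$ and $\mu(i+1)$ are accurate (the only point to double-check is the upper bound $i \le 2\kappa^2 - 1$, which indeed holds with equality when $j=j'=\kappa$, so for $j\neq j'$ it is strict). The paper does not supply a proof of this observation at all---it simply states the formula for $\mu$ and asserts that one can easily check the property---so your explicit verification is precisely the routine bookkeeping the paper omits, carried out in the natural way.
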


We also define a mapping $\pi: R_i \rightarrow V_{\mu(i)}$ (and $\pi: H_i \rightarrow V_{\mu(i)}$) that maps every vertex of $R_i$ ($H_i$) to its corresponding vertex in $V_{\mu(i)}$. We note that each vertex of $V_{\mu(i)}$ appears three times in $R_i$ (once in each layer) and all three vertices map to the same vertex of $V_{\mu(i)}$. 
Let us now describe the edge set of $H$. For every $i \in [\beta]$, we add a matching between vertices of $H_j$ and $H_{j+1}$ and a matching between vertices of $H_{j+1}$ and $H_{j+2}$ whenever there exists a group $R_i$ such that $R_i = \{H_{j}, H_{j +1}, H_{j+2}\}$. For every two consecutive groups $R_i = \{H_{j}, H_{j+1}, H_{j+2}\}$ and $R_{i+1} = \{H_{j+3}, H_{j+4}, H_{j+5}\}$, we add in $H$ the edges of $G$ between $H_{j+2}$ and $H_{j+3}$. That is, we add between consecutive sets corresponding to different sets of $G$ the edges corresponding to the edges between those two sets in $G$. 
More formally, let $a \in H_{j+2}$, $b \in H_{j+3}$, $\pi(a) \in V_{\mu(i)}$, and $\pi(b) \in V_{\mu(i + i)}$. Then, there is an edge between vertices $a$ and $b$ in $H$ if and only if there is an edge between vertices $\pi(a)$ and $\pi(b)$ in $G$. 

Assume that we create a new graph $H'$ consisting of $H$ plus two additional vertices $s$ and $t$,  where $s$ is connected to all the vertices of $H_1$ and $t$ is connected to all the vertices of $H_{\alpha}$.
Note that any $st$-path in $H'$ must contain exactly one vertex from every layer. We say that an $st$-path $P$ is \emph{properly colored} whenever for any $a \in H_i$ and $b \in H_j$ (on the path) such that $\mu(i)=\mu(j)$, we have $\pi(a) = \pi(b)$. In other words, whenever two layers of $H$ (containing vertices of $P$) map to the same set of $V$ we must select the same vertices in both. We note that any $st$-path $P$ in $H'$ can intersect with a group $R_i$ in one of $n$ ways, i.e., the vertices of $P$ in $R_i$ all map to the same vertex of $V_{\mu(i)}$. 

\begin{observation}\label{obs:goodpath}
$H'$ contains a properly colored $st$-path $P$ (consisting of $6\kappa^2 + 2$ vertices) if and only if $G$ contains a multicolored clique of size $\kappa$. 
\end{observation}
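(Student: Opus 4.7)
The plan is to prove both implications of the biconditional by explicitly translating between multicolored cliques in $G$ and properly colored $st$-paths in $H'$.

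For the direction ``clique $\Rightarrow$ path'', suppose $\{c_1, \ldots, c_\kappa\}$ is a multicolored clique with $c_j \in V_j$. I would construct a candidate $st$-path by selecting, in each layer $H_p$, the vertex $v_p$ whose image under $\pi$ is $c_{\mu(p)}$, and then prepending $s$ and appending $t$. Since $H'$ has edges only between consecutive layers (and from $s$ to $H_1$ and from $H_\alpha$ to $t$), the remaining obligation is to verify $v_p v_{p+1} \in E(H')$ for each $p$. Within a group $R_i$ the layers are joined by a matching of corresponding $\pi$-copies, so $v_p v_{p+1}$ is always present since both map to $c_{\mu(i)}$; between the last layer of $R_i$ and the first layer of $R_{i+1}$ the edges of $H'$ mirror those of $G$ between $V_{\mu(i)}$ and $V_{\mu(i+1)}$, and the clique hypothesis guarantees the needed edge $c_{\mu(i)} c_{\mu(i+1)}$. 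The resulting $st$-path has $\alpha + 2 = 6\kappa^2 + 2$ vertices and is properly colored since $\pi(v_p) = c_{\mu(p)}$ depends only on $\mu(p)$.

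For the direction ``path $\Rightarrow$ clique'', let $P$ be a properly colored $st$-path in $H'$. Since $s$ (respectively $t$) is only adjacent to $H_1$ (respectively $H_\alpha$) and $H'$ contains edges only between consecutive layers, $P$ contributes exactly one vertex $v_p$ to each layer $H_p$. The proper coloring condition means that $\pi(v_p) = \pi(v_q)$ whenever $\mu(p) = \mu(q)$, so for each color $j \in [\kappa]$ I can unambiguously define $c_j \in V_j$ as $\pi(v_p)$ for any $p$ with $\mu(p) = j$; this yields exactly one vertex per color class. To see that $\{c_1, \ldots, c_\kappa\}$ forms a clique I would fix an arbitrary pair $j \neq j'$ and apply Observation~\ref{obs:allpairs} to obtain an index $i$ with $\mu(i) = j$ and $\mu(i+1) = j'$; the corresponding consecutive path vertices $v_{3i} \in H_{3i}$ and $v_{3i+1} \in H_{3i+1}$ lie in the boundary layers of two distinct groups, and the edge $v_{3i} v_{3i+1} \in E(H')$ exists precisely because $\pi(v_{3i})\pi(v_{3i+1}) = c_j c_{j'}$ is an edge of $G$.

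The main potential obstacle is merely the careful bookkeeping of the three edge types (attachments to $s,t$, intra-group matchings, and inter-group $G$-style edges) and verifying that an $st$-path is forced to use exactly one vertex per layer; both follow by direct inspection of the construction. Once this is in place, each direction of the equivalence is essentially a single reading-off: Observation~\ref{obs:allpairs} delivers the clique edges on one side, while the clique hypothesis delivers the path edges on the other.
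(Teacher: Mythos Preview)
Your proposal is correct and follows essentially the same approach as the paper's own proof: both directions translate directly between a multicolored clique $\{c_1,\ldots,c_\kappa\}$ and the choice $\pi(v_p)=c_{\mu(p)}$ in each layer, using the intra-group matchings for edges within a group and Observation~\ref{obs:allpairs} together with the inter-group $G$-edges to certify all pairwise adjacencies. Your bookkeeping (in particular the identification of the boundary layers as $H_{3i}$ and $H_{3i+1}$) is accurate, and if anything your write-up is slightly more explicit than the paper's.
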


\begin{proof}
The proof follows from the definition of a properly colored $st$-path in $H'$. Let $P$ be such a path. For path vertices $a \in H_i$ and $b \in H_j$ such that $\mu(i) = \mu(j)$ we have $\pi(a) = \pi(b)$. Moreover, for path vertices $a \in H_j$, $b \in H_{j + 1}$, and $c \in H_{j+2}$, where there exists $i \in [\beta]$ such that $R_i = \{H_j, H_{j + 1}, H_{j+2}\}$,  we have $\pi(a) = \pi(b) = \pi(c)$; recall that $\mu(R_i) = \mu(H_j) = \mu(H_{j + 1}) = \mu(H_{j+2})$. Hence, the internal vertices of $P$ correspond to exactly $\kappa$ distinct vertices of $G$, one vertex in each part of $V$. The fact that those vertices must form a multicolored clique in $G$ follows from Observation~\ref{obs:allpairs} and the fact that, within each group, we connect all three copies of a vertex by a path.

The converse follows by similar arguments. That is, suppose that $G$ contains a multicolored clique $C = \{v_1,\ldots,v_{\kappa}\}$ of size $\kappa$. Then, for every $i \le \beta$ such that $\mu(i)=j$, the vertices of $P$ in $R_i$ correspond to $\pi^{-1}(v_j)$. And, once these vertices are chosen, it is clear that they form a path $P$ in $H'$, since each pair of vertices in the multicolored clique is adjacent in the graph $G$. More formally, since we chose the same vertex in each layer of each group, we know that the chosen vertices within a group form a path. The fact that the paths within the groups connect to form an $st$-path $P$ is immediate from our construction and $C$ being a multicolored clique. 
\end{proof}

\subparagraph{Outline of the reduction.}
Assume that we add to the graph $H'$  two new (internally) vertex-disjoint $st$-paths $P$ and $Q$ each containing exactly $\alpha + 2$ vertices ($s$ and $t$ and one vertex per layer of $H$). We add all the edges between the $i$-th vertex of $P$ and the vertices in layers $i$, $i-1$, and $i+1$ of $H$ (with the assumption that $H_0 = \{s\}$ and $H_{\alpha + 1} = \{t\}$). Similarly, we add all the edges between the $i$-th vertex of $Q$ and the vertices in layers $i$, $i-1$, and $i+1$ of $H$. We denote the resulting graph by $H' + P + Q$.

Consider the instance $(H' + P + Q, s, t, P, Q)$ of SPR. If there exists a multicolored clique in $G$ then there exists a properly colored $st$-path in $H'$ by Observation~\ref{obs:goodpath}. By the definition of the edge set, on can easily see that we can transform $P$ into $Q$ by first moving the vertices of $P$ onto a properly colored $st$-path in $H'$ and then moving all the vertices to $Q$ one by one. Unfortunately, the converse is not necessarily true since we might not be consistent in the selection of vertices in $H'$. In other words, i.e. we have no reason to select a properly colored path (said differently, we might select vertices $a \in H_i$ and $b \in H_j$ in the path such that $\mu(i) = \mu(j)$, $\pi(a) \ne \pi(b)$, and $H_i$ and $H_j$ belong to different groups).

By considerably complicating the gadgetry, we will prove that we can handle this issue. To do so, we create a new gadget that will force us to select the same vertex for a fixed value of the image of $\mu$. We replicate our gadget to enforce the consistency of all the images of $\mu$. In addition to enforcing consistent selection of vertices, our construction further guarantees that choices cannot be undone.

Another issue in the simplistic construction of $H'$ described above is that we implicitly assume that we move from $P$ to a path fully contained in $H$ before going to $Q$. But nothing prevents an $st$-path to contain some vertices of $P$, then some vertices from $H$. then some vertices from $Q$, then more vertices from $H$, and so on. To avoid this phenomena, we shall add what we call  buffer space. We formalize all these ideas next.

\subparagraph{Buffers and collapses.} 
Most of the time, we will consider matchings and edges between sets of size $n$. Given two sets of size $n$ (with an implicit ordering), we define the natural matching as the matching that matches the vertices in increasing index order (in the natural way). We will sometimes consider edges between a set $A$ of size $n$ and a set $B$ of size larger than $n$ with a canonical mapping function to $\{1,\ldots,n \}$. By abuse of notation, we still denote by the natural matching the set of edges (that is not a matching anymore) that links the $i$-th vertex of $A$ and all the vertices that map to $i$~in~$B$.

We denote by $I_n$ and $J_n$ independent sets on $n$ vertices\footnote{These two notations for the same grapph will permit to simply description of the constructions in the rest of the proof.}. We let $\mathcal{I}^q$ and $\mathcal{J}^q$ denote the graphs obtained by taking $q$ copies of $I_n$ (resp. $J_n$) where consecutive copies of $I_n$ ($J_n$) are linked with the natural matching. Note that $\mathcal{I}^q$ (resp. $\mathcal{J}^q$) consists of  exactly $n$ paths on $q$ vertices. 

Let $R=R_1,R_2,\ldots,R_\gamma$ be a graph where edges are between consecutive sets and there is a canonical mapping from $R_1$ and $R_\gamma$ to $\{1,\ldots,n \}$ (in our proof, $R$ will be $H$ or a graph close to $H$).
We write $\Gamma(p,H,q) = \mathcal{I}^p \oplus H \oplus \mathcal{J}^q$ (or $\Gamma$ when $p,q,H$ are clear from context) to denote the graph obtained by taking a copy of $\mathcal{I}^p$, a copy of $\mathcal{J}^q$, a copy of $H$, and then adding the natural matching between the vertices of $I_q$ and $H_1$ as well as a matching between the vertices of $H_\alpha$ and $J_1$ (see Figure~\ref{fig:buffergadget}).  
%We write $\Gamma(p, H, q')$ to denote the graph $\mathcal{I}^q \oplus H \oplus \mathcal{J}^{q'}$.

\begin{figure}[ht]
    \centering
    \includegraphics[scale=0.3]{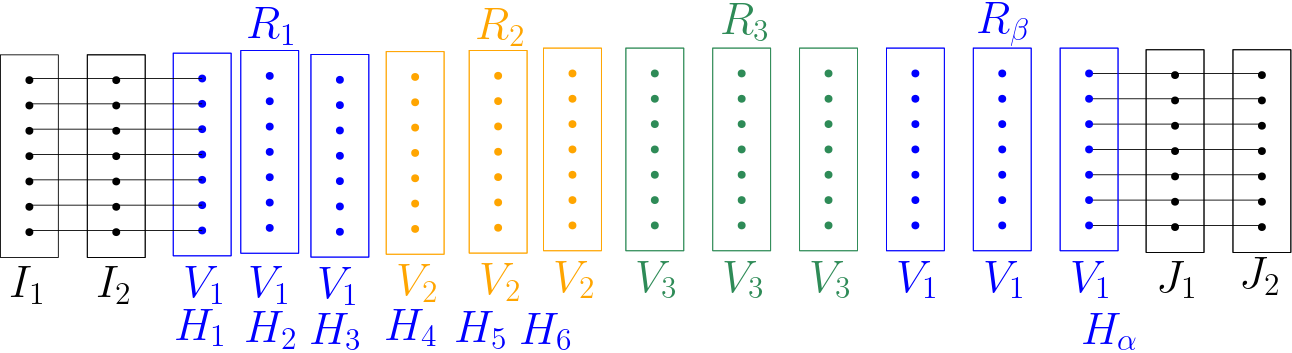}
    \caption{Example of a graph $\Gamma(2, H, 2) = \mathcal{I}^2 \oplus H \oplus \mathcal{J}^{2}$, where $\mu(R_1) = \mu(H_1) = \mu(H_2) = \mu(H_3) = V_1$, $\mu(R_2) = V_2$, $\mu(R_3) = V_3$, and $\mu(R_\beta) = V_1$. Edges inside $H$ are omitted. }
    \label{fig:buffergadget}
\end{figure}

If we denote by $I_i$ the sets of $\mathcal{I}^p$ and $J_i$ the sets of $\mathcal{J}^q$,
for $i \in [p + \alpha + q]$, we call $L_i$ the $i$-th layer of $\Gamma(p, H, q)$, where $L_i = I_i$ when $i \leq p$, $L_i = H_{i - q}$ when $p < i \leq p + \alpha$, and $L_i = I_{i - (q + \alpha)}$ when $i > q + \alpha$.

Let us now define collapses which are an important tool in our reduction. Let $i \le \kappa$ and $v_j \in V_i$.
We let $H(v^i_j)$ denote the induced subgraph obtained from $H$ by deleting in $H_{i'}$, for every $i'$ such that $\nu(i')=i$, all the vertices of $H_{i'}$ but the vertex $h$ such that $\pi(h)=v^i_j$. That is, we restrict all the layers of $H$ corresponding to $V_{\mu(i)}$ to a single vertex (the same vertex).
By abuse of notation, We will sometimes use $H(h^i_j)$, for some vertex $h_j^i \in H_i$, to denote the graph $H(v^{a}_j)$ where $a=\nu(i)$.  We say that $H(h_j^i)$ is a \emph{collapse of $H$ on $h_j^i$}, or, equivalently, collapsing $H$ on $h_j^i$ results in $H(h_j^i)$ (see Figure~\ref{fig:collapsegadget}). 
Note that in the graph $H(v_j^i)$ we are indeed forced to be properly colored for the $i$th set since we have deleted all the other vertices in $V_i$.

\begin{figure}[ht]
    \centering
    \includegraphics[scale=0.3]{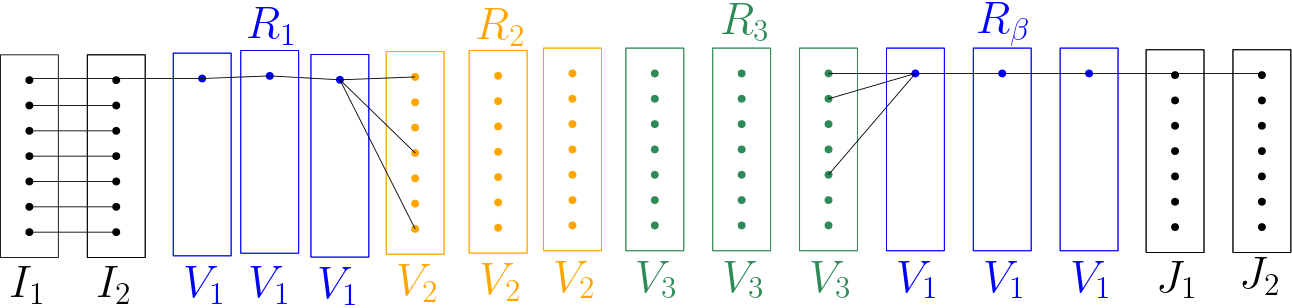}
    \caption{Example of a graph $\Gamma(2, H(h^1_1), 2)$ obtained after collapsing $H$ on $h^1_1$.}
    \label{fig:collapsegadget}
\end{figure}

%Given the graph $H$ (recall that $H$ consists of $\alpha = 6\kappa^2$ sets of vertices 
%$H_1, \ldots, H_\alpha$) and a vertex $h^i_j \in H_i$, we let $H(h^i_j)$ denote the graph obtained from $H$ by deleting all but one vertex from each set $H_{i'}$, where $\mu(i') = \mu(i)$; we delete all vertices except for $h^{i'}_j \in H_{i'}$ (deleting vertices implies the deletion of edges incident on those vertices). That is, we restrict all the layers of $H$ corresponding to $V_{\mu(i)}$ to a single vertex (the same vertex); we have $\pi(h_j^{i}) = \pi(h_j^{i'})$ for all $i,i'$ with $\mu(i) = \mu(i')$. We say that $H(h_j^i)$ is a \emph{collapse of $H$ on $h_j^i$}, or, equivalently, collapsing $H$ on $h_j^i$ results in $H(h_j^i)$. 

%Similarly, given $H$ and $i \in [\kappa]$, we let $H \mid i$ denote the graph obtained by replacing each set $H_{i'}$, where $\mu(i') = \mu(i)$, by a single vertex adjacent to all vertices in the layers around it. 

Now, for every $i \le \kappa,j \le n$, we define $\Gamma_{i,j}(p,q)$ as $\Gamma(p, H(h_j^i), q)$  (it can be interpreted as the graph $H_j^i$ where we added some empty space before and after it). Finally, we let $\Gamma^i(p,q)$ denote the union of the $n$ graphs $\Gamma_{i,j}(p,q)$. We write $\Gamma_{i,j} = \Gamma_{i,j}(p,q)$ whenever $p$ and $q$ are clear from context. Note that all the $\Gamma_{i,j}$ being disjoint, if we have a path fully included in one of the $\Gamma_{i,j}(p,q)$ at some point, then all the selected vertices in sets mapping to $i$ by $\mu$ are the same. That is, $\Gamma^i(p,q)$ will allow us to verify that for any $H_j,H_j'$ in the selection gadget such that $\mu(j) = \mu(j') = i$ we always pick vertices $a \in H_j$, $b \in H_{j'}$ such that $\pi(a) = \pi(b)$. 

\subparagraph{Construction.}
We are now ready to describe the construction of the instance $(G',s,t,P,Q)$ of SPR. We begin by considering the token jumping model and discuss the changes required for sliding later. 

We start from an empty graph $G'$ and add two new vertices $s$ and $t$. We let $q = 2\kappa^2$ and $\delta = 2q + \alpha = 10\kappa^2$ (recall that  $\alpha = 6\kappa^2$). We add two internally vertex-disjoint $st$-paths $P$ and $Q$ consisting of $\delta$ internal vertices each. 

The next step consists of adding $\Gamma_\star = \Gamma(q, H, q)$ to $G'$ and connecting $s$ to every vertex in $I_1$ and $t$ to every vertex in $J_q$. Moreover, we let the $i$th internal vertex of $P$, $i \geq 2$, be adjacent to every vertex in layer $i-1$ of $\Gamma_\star$. %Recall that a layer of $\Gamma_\star$ is either a set of $H$ (of size $n$) or an independent set (also of size $n$) with vertices having either degree-two or degree at most $2r$ in $\Gamma_\star$; as we assume that we have an instance of \textsc{Regular Multicolored Clique}, we can also bound the degree of vertices in $H$ by at most $2r$ (at most $r$ vertices in each adjacent layer). 
We call $\Gamma_\star$ the \emph{selection gadget}. The rest of the gadgets will be verification and boundary gadgets that allow us to guarantee that properties similar to those in Observation~\ref{obs:goodpath} will hold. 

We then create a graph $\Gamma^{1}(q-1,q+1)$ denoted by $\Gamma^1$ which will be the \emph{verification gadget for $i = 1$}. 
%We start with $i = 1$ and, for each $j \in [n]$, we create a graph $\Gamma_{1,j}$, where $\Gamma_{1,j} = \Gamma(q - 1, H(h_j^1), q + 1)$. We connect $s$ to all the first-layer vertices and $t$ to all the last-layer vertices in the obvious way. Let $\Gamma^1$ denote the collection of the $n$ graphs of the form $\Gamma_{1,j}$. $\Gamma^1$ (the collection of all graphs in $\Gamma^1$) is the \emph{verification gadget for $i = 1$}. That is, $\Gamma^1$ will allow us to verify that for any $H_j,H_j'$ in the selection gadget such that $\mu(j) = \mu(j') = 1$ we always pick vertices $a \in H_j$, $b \in H_{j'}$ such that $\pi(a) = \pi(b)$. 
We deal with the graphs of $\Gamma^1$ first 
(and slightly differently than the rest) as they require special attention given that they exist at the ``boundary'' of our construction. Notice that, in $G'$, all the graphs in $\Gamma^1$ are ``shifted one position to the left with respect to $\Gamma_\star$'' (in the sense that the number of independent sets at the left has reduced by one), see Figure~\ref{fig:jumpingreduction} for an illustration. In particular, the graph $H$ of each $\Gamma_{1,j}$, $j \in [n]$, starts (or appears) one layer before the graph $H$ in $\Gamma_\star$. We now describe the edges between $\Gamma_\star$ and any $\Gamma_{1,j}$ (in $\Gamma^1$). Let $L$ denote some layer of $\Gamma_{1,j}$ (ignoring the last layer) and let $L'$ be the layer after $L$ in $\Gamma_\star$. If  $L$ and $L'$ correspond to independent sets (not sets of $H$) they are connected by the natural matching. Otherwise, we have two cases:

\begin{itemize}
    \item If layer $L$ of $\Gamma_{1,j}$ corresponds to a set $H_p$ with $\mu(p) = 1$ then we deleted all vertices of $L$ except for $h_j^{p}$ (collapse). We connect $h_j^{p}$ to its image in $L'$, which must exists since layer $L'$ of $\Gamma_\star$ corresponds to a set $H_{p'}$ with $\mu(p) = \mu(p') = 1$. 
    \item Otherwise, we have the same number of vertices in $L$ and $L'$ and we add a matching between the pairs of vertices having the same image in $G$. 
\end{itemize}

We now add a boundary gadget that will separate all the verification gadgets and allow us to simplify some of the arguments. Picturing the graph being constructed from top to bottom with $P$ and $Q$ encircling all of the graph, we assume that $\Gamma_{i,j}$ is drawn before $\Gamma_{i,j+1}$. Similarly, we only insert  $\Gamma_{i + 1,j}$ after inserting all graphs of $\Gamma^i$ (see again Figure~\ref{fig:jumpingreduction} for an illustration). After $\Gamma_{1,n}$ is inserted, we insert another graph (connecting $s$ and $t$) that we denote by $\Gamma_{1,\star} = \Gamma(q - 2, H, q + 2)$ which is called the \emph{boundary gadget of $\Gamma^1$}.
Note that $\Gamma_{1,\star}$ is again shifted one position to the left compared to all the graphs in $\Gamma^1$. We add edges between layers of $\Gamma_{1,\star}$ and layers of $\Gamma_{1,j}$, for each $j \in [n]$. Like before, we let $L$ denote some layer of $\Gamma_{1,\star}$ (ignoring the last layer) and let $L'$ be the layer after $L$ in $\Gamma_{1,j}$. If $L$ and $L'$ correspond to independent sets (not sets of an $H$) then we connect them via a matching in the natural way. Otherwise, we have again two cases: 

\begin{itemize}
    \item $|L| = n$, $|L'| = 1$, and we connect by an edge the unique vertex of $L'$ to its image in $L$; or 
    \item $|L| = |L'| = n$ (by construction) and we connect the two layers by a matching. 
\end{itemize}

Informally speaking, we collapse the set $V_1$ into $n$ sets in all the layers $i'$ satisfying $\nu(i')=i$ between $\Gamma_\star$ and $\Gamma_1$ which replaces a single connected graph by $n$ of them. And then, we group them back in the gadget $\Gamma_{1,\star}$. This collapse will permit to ensure that we select  the same vertex everywhere and the grouping back phase will permit to avoid a combinatorial explosion (i.e. will allow to check the coherences of the different sets in $V_i$ independently).

We can now complete the construction as follows. For $i \in [\kappa - 1]$, after $\Gamma_{i,\star}$ is inserted we proceed just like before by assuming that $\Gamma_{i,\star}$ now takes the role of the \ selection gadget $\Gamma_\star$. Formally, for $i \in [\kappa - 1]$ and $j \in [n]$ (processing in increasing order), we create a graph $\Gamma_{i + 1,j}$, where $\Gamma_{i + 1,j} = \Gamma(q - (2i + 1), H(h_j^{i+1}), q + (2i + 1))$. We connect $s$ to all the first-layer vertices and $t$ to all the last-layer vertices in the obvious way. Let $\Gamma^{i + 1}$ denote the collection of the $n$ graphs of the form $\Gamma_{i + 1,j}$. We add edges between $\Gamma_{i, \star}$ and graphs in $\Gamma^{i + 1}$ just like before. Similarly, we then add a new graph $\Gamma_{i + 1,\star}$ and proceed as described until we reach $\Gamma_{\kappa,\star}$. We connect all the vertices of a layer of $\Gamma_{\kappa,\star}$ to the vertex of $Q$ on the preceding layer (see Figure~\ref{fig:jumpingreduction}). This completes the construction of the SPR instance $(G',s,t,P,Q)$\footnote{We note that most of the buffer space ``to the right'' of the construction is not needed but was added to favor a symmetric construction.}. Note that $|V(P)| = |V(Q)| = 10\kappa^2 + 2$. 

\subparagraph{Safeness of the reduction.}
Before we dive into the technical details of the proof, let us give some high-level intuition. Simply put, the purpose of every set of graphs $\Gamma^{i}$, $i \in [\kappa]$, is to verify that all the sets/layers of $\Gamma_\star$ mapping to the same $V_i$ use the same vertex of $V_i$. The trickier part of the proof is in showing that tokens are ``well-behaved''. 

\begin{figure}[ht]
    \centering
    \includegraphics[scale=0.25]{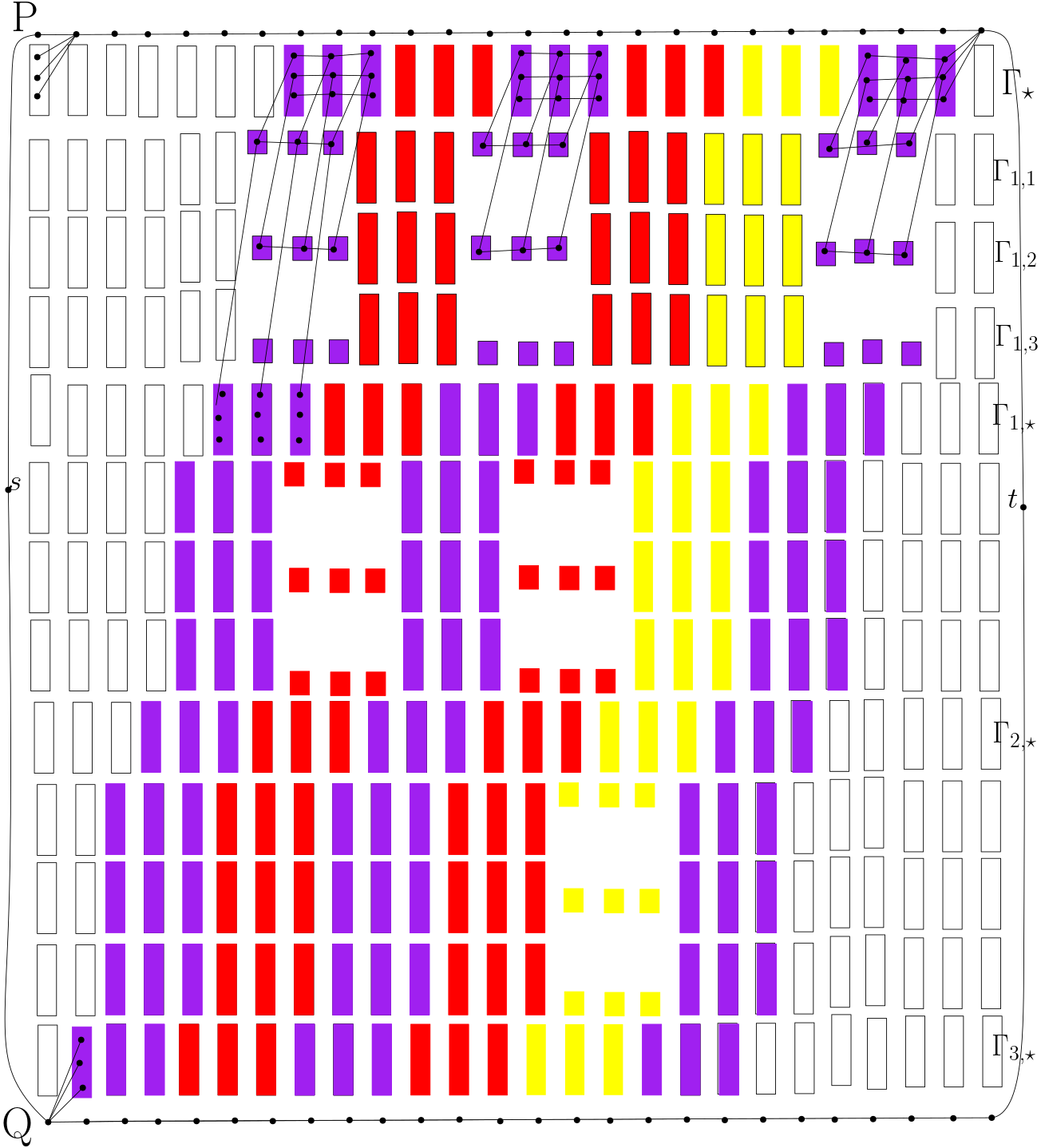}
    \caption{An example of our reduction in the case of token jumping.}
    \label{fig:jumpingreduction}
\end{figure}

Let us start by proving the easier direction. We assume, without loss of generality, that all of our gadgets $H$ start with a copy of $V_1$ and end with a copy of $V_\kappa$. Moreover no two consecutive groups of any $H$ map to the same $V_i$.  

\begin{lemma}\label{lem:hard_easydir}
If $(G,\kappa)$ is a yes-instance of \textsc{(Regular) Multicolored Clique} then there exists a reconfiguration sequence from $P$ to $Q$ whose length is $20(\kappa^3 + \kappa^2)$. 
\end{lemma}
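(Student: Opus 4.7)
My plan is to build an explicit reconfiguration sequence of length exactly $(2\kappa+2)\delta = 20(\kappa^3+\kappa^2)$, organised into $2\kappa+2$ phases, each performing exactly $\delta = 10\kappa^2$ token jumps. Given the multicoloured clique $C = \{v_{j_1}^1, \ldots, v_{j_\kappa}^\kappa\}$, I first define a \emph{canonical} shortest $st$-path in each gadget encountered by the sequence. In $\Gamma_\star$ this is the properly coloured $st$-path $P_{\Gamma_\star}$ whose existence is granted by Observation~\ref{obs:goodpath}: in every $H$-layer with $\mu = i'$ it selects the vertex with $\pi$-image $v_{j_{i'}}^{i'}$, and in every buffer layer it uses the vertex in the ``column'' extending that selection through the natural matching. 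The analogous canonical paths $P_{\Gamma_{i,j_i}}$ and $P_{\Gamma_{i,\star}}$ are defined identically; they are well-defined because in $\Gamma_{i,j_i}$ the layers with $\mu = i$ have already been collapsed to the single vertex $h_{j_i}^{i}$, which is consistent with the clique-based selection, and because $C$ being a clique of $G$ guarantees that all the required inter-group edges inside each copy of $H$ exist.

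I then schedule the sequence along
\[
P \to P_{\Gamma_\star} \to P_{\Gamma_{1,j_1}} \to P_{\Gamma_{1,\star}} \to P_{\Gamma_{2,j_2}} \to \cdots \to P_{\Gamma_{\kappa,j_\kappa}} \to P_{\Gamma_{\kappa,\star}} \to Q,
\]
a total of $2\kappa+2$ transitions, and implement each transition $P_{\Gamma} \to P_{\Gamma'}$ by jumping the $\delta$ internal tokens one at a time, from the token closest to $s$ to the token closest to $t$. The key structural fact is that $\Gamma'$ is shifted exactly one layer to the left with respect to $\Gamma$, so the inter-gadget edges (whether a buffer matching, an $H$-matching between uncollapsed layers, or the single collapsed-vertex-to-image edge) link the $i$-th vertex of $P_{\Gamma'}$ to the $(i+1)$-th vertex of $P_{\Gamma}$. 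Consequently, after $i$ jumps in a phase, the intermediate configuration is
\[
s,\; w_1',\; w_2',\; \ldots,\; w_i',\; v_{i+1},\; v_{i+2},\; \ldots,\; v_\delta,\; t,
\]
which is a valid shortest $st$-path: the adjacencies $w_{\ell-1}' \sim w_\ell'$ and $v_\ell \sim v_{\ell+1}$ come from $\Gamma'$ and $\Gamma$ respectively, the crucial $w_i' \sim v_{i+1}$ edge is the aforementioned cross-gadget edge, and the boundary adjacencies $s \sim w_1'$ and $v_\delta \sim t$ come from $s$ (resp.\ $t$) being connected to every vertex of the first (resp.\ last) layer of every gadget. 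The endpoint transitions $P \to P_{\Gamma_\star}$ and $P_{\Gamma_{\kappa,\star}} \to Q$ work identically, using the prescribed edges from $p_i$ (resp.\ the $i$-th internal vertex of $Q$) to layer $i-1$ of the adjacent gadget in the role of the cross-gadget edges.

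The main obstacle I anticipate is the case analysis verifying that the inter-gadget edge linking $w_i'$ to $v_{i+1}$ really exists under the clique-based selection, since three distinct rules govern inter-gadget edges depending on whether the pair of layers is (i) both buffer, (ii) both $H$-layers with $\mu$ differing from the verification index, or (iii) an uncollapsed $H$-layer matched to a collapsed one. In each case the canonical selections pick vertices in the same ``column'' or with the same $\pi$-image, so the required edge is present by construction; the collapsed case even simplifies matters, as the only surviving vertex on that side is forced to be the one on the canonical path. Summing up, the sequence performs $(2\kappa+2)\cdot\delta = (2\kappa+2)\cdot 10\kappa^2 = 20(\kappa^3+\kappa^2)$ token jumps, matching the claimed bound.
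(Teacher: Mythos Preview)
Your proposal is correct and follows essentially the same approach as the paper's own proof: you traverse the same sequence of canonical paths $P \to P_{\Gamma_\star} \to P_{\Gamma_{1,j_1}} \to P_{\Gamma_{1,\star}} \to \cdots \to P_{\Gamma_{\kappa,\star}} \to Q$ and perform each of the $2\kappa+2$ transitions by jumping the $\delta$ internal tokens in order of increasing distance from $s$, exploiting the one-layer left shift between consecutive gadgets. Your write-up is in fact somewhat more explicit than the paper's about why each intermediate configuration is a valid shortest $st$-path and about the case analysis for the cross-gadget edge $w_i' \sim v_{i+1}$.
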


\begin{proof}
Let $\{v_{j_1}^1, v_{j_2}^2, \ldots, v_{j_i}^i, \ldots, v_{j_\kappa}^\kappa\}$ denote the vertices of a multicolored clique in $G$. Let us exhibit a reconfiguration sequence from $P$ to $Q$. To do so, let us first give a reconfiguration sequence from $P$ to a path that contains vertices in $\Gamma_\star$ as follows:
\begin{itemize}
    \item We move one  by one the tokens of $P$ to $\Gamma_\star$ by increasing distance to $s$ (in ascending order). 
    \item For every layer $i \le q$, we jump (in order) the token at layer $i \geq 1$ in $P$ to vertex $v_{j_1}$ in the $i$th layer of $\Gamma_\star = \Gamma(q, H, q)$ as long as $i \leq q + 1$ (as $H_1$ maps to $V_1$ by assumption). In other words, we map all the vertices at the beginning of the path to the copy of vertex $v_{j_1}^1$.
    \item Then, for any layer $q + 1 < i \leq q + 1 + \alpha$, we jump the token at layer $i$ of $P$ to vertex $h_{j_{\mu(i)}}^{\mu(i)}$ of $\Gamma_\star$. 
    \item For every $i>q+1+\alpha$, we jump the $i$th vertex of $P$ to vertex $v_{j_\kappa}$ (since we assume that $H$ ends with a set that maps to $V_\kappa$). 
\end{itemize}
The fact that we maintain an $st$-path after every token jump follows from Observation~\ref{obs:goodpath} combined with the fact that vertices of $P$ are connected to all vertices of the preceding layer of $\Gamma_\star$. 

Once we have reached a properly colored $st$-path $P_1$ fully contained in $\Gamma_\star$ (in exactly $10\kappa^2$ steps), we can use a similar  strategy to reach a properly colored $st$-path $P_2$ fully contained in $\Gamma_{1,j_1}$. More formally, we move by increasing order all the tokens of $P_1$ in such a way the $i$-th vertex of $P_2$ is a the copy of the $(i+1)$-th vertex of $P_1$. Note that it is well-defined since, for every $i$ such that $\mu(i)=1$, the vertex $h_{j_1}$ belongs to $P_2$. Observe that during that transformation the vertices ``shift one layer to the left''. We then use a similar transformation to transform $P_2$ into a path $P_3$ fully contained in $\Gamma_{1,\star}$. We use $20\kappa^2$ steps from $P_1$ to $P_3$. 

We repeat this procedure for every $2 \le i \le \kappa$ to transform the path in $\Gamma_{i-1,\star}$ into a path in $\Gamma_{i,\star}$ in $20\kappa^2$ jumps. Then we need an extra $10\kappa^2$ steps to go from $\Gamma_{\kappa,\star}$ to $Q$ (using the converse of the transformation from $P$ to $\Gamma_\star$). 
Hence, the length of the reconfiguration sequence is exactly $20(\kappa^3 + \kappa^2)$. 
\end{proof}

In order to prove the other direction, we first establish some useful properties of our construction. We let $\Gamma_{\star} = \Gamma_{0,0}$ and $\Gamma_{i,\star} = \Gamma_{i,n+1}$. We say $\Gamma_{i,j}$ \emph{comes before or above} $\Gamma_{i',j'}$ whenever $i < i'$ or $i = i'$ and $j < j'$ (we also assume that $P$ appears first and $Q$ appears last, i.e., $P = \Gamma_{-1,-1}$ and $Q = \Gamma_{n+1,n+1}$). 
We say that two consecutive internal vertices $v_p$ and $v_{p+1}$ of an $st$-path $P$ are \emph{siblings} if they belong to the same graph $\Gamma_{i,j}$ (that is they belong to the same row in the representation of Figure~\ref{fig:jumpingreduction}). Otherwise, we say $v_p$ is \emph{above} (or \emph{below}) $v_{p+1}$ if the graph of $v_p$ is above (below) that of $v_{p+1}$ (that is $v_p$ is in the row above or below $v_{p+1}$ in the representation of Figure~\ref{fig:jumpingreduction}). 

\begin{lemma}\label{lem:props}
Let $P$ be a shortest path from $s$ to $t$ in $G'$. Let $v_p$ denote the $p$th internal vertex of $P$. Then: 
\begin{itemize}
    \item For every $p$, $v_p$ is a vertex of the $p$th layer of $G'$.
    \item For every two consecutive internal vertices of $P$, $v_p$ and $v_{p+1}$, either $v_p$ and $v_{p+1}$ are siblings or $v_p$ is below $v_{p+1}$. 
    \item For every $p$, if $v_p$ belongs to $\Gamma_{i,j}$ then no vertex $v_{p'}$ with $p' \ge p$ is below $v_p$. 
    \item For every $p$, if $v_p$ belongs to $\Gamma_{i,j}$ then $v_{p - 1}$ is either in $\Gamma_{i,j}$ or $\Gamma_{i,n + 1}$ and $v_{p + 1}$ is either in $\Gamma_{i,j}$ or $\Gamma_{i - 1, n + 1}$. 
\end{itemize}
\end{lemma}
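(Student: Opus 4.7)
The plan is to extract all four properties from the layered structure that the construction imposes on $G'$. I would begin by fixing a global notion of layer: the construction divides every gadget $\Gamma_{i,j}$ (including $\Gamma_\star$ and the boundary gadgets $\Gamma_{i,\star}$) into $\delta = 10\kappa^2$ layers numbered so that layer $k$ vertices are adjacent only to layer $k{-}1$ and $k{+}1$ vertices, either inside the same gadget or in a neighboring one. The vertex $s$ is adjacent to every layer-$1$ vertex of every gadget and to the first internal vertex of $P$ and $Q$, and symmetrically for $t$ at layer $\delta$. A short induction on $k$ then shows that every layer-$k$ vertex of $G'$ is at distance exactly $k$ from $s$, and hence the $st$-distance is $\delta+1$. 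Since $P$ is a shortest $st$-path with $\delta$ internal vertices, its $p$th internal vertex $v_p$ must lie at distance $p$ from $s$, i.e.\ in the $p$th layer. This is property~1.

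For property~2, I would enumerate the edge types in $G'$ and observe that each edge is of exactly one of two kinds: (a) an \emph{intra-gadget} (sibling) edge between layer $k$ and layer $k{+}1$ of the same $\Gamma_{i,j}$, or (b) an \emph{inter-gadget} edge produced by the construction's shift-by-one matchings. The key point, which I would verify by walking through each matching described in the construction (between $\Gamma_\star$ and the $\Gamma_{1,j}$'s, between $\Gamma_{1,\star}$ and the $\Gamma_{1,j}$'s, between $\Gamma_{i,\star}$ and the $\Gamma_{i+1,j}$'s, and the corresponding collapsed matchings on layers of $H$), is that an inter-gadget edge always joins layer $k$ of the lower gadget to layer $k{+}1$ of the upper gadget. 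Together with property~1, which forces $v_p$ to be at layer $p$ and $v_{p+1}$ at layer $p{+}1$, this leaves only two possibilities for the edge $v_pv_{p+1}$: sibling, or $v_p$ below and $v_{p+1}$ above, which is exactly property~2.

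Property~3 then follows by immediate induction on $p'\geq p$: by property~2 each $v_{p'+1}$ is either a sibling of $v_{p'}$ or strictly above it, so by transitivity of the vertical ordering on gadgets no $v_{p'}$ with $p'\geq p$ can be strictly below $v_p$. For property~4, I would read off the adjacency pattern of $\Gamma_{i,j}$ for $1\leq j\leq n$ directly from the construction: the only non-sibling edges incident to $\Gamma_{i,j}$ are the shift-by-one matchings going up to $\Gamma_{i-1,\star}=\Gamma_{i-1,n+1}$ and down to $\Gamma_{i,\star}=\Gamma_{i,n+1}$. Combining this observation with property~2 (which forces $v_{p-1}$ to be below or sibling of $v_p$, and $v_{p+1}$ to be above or sibling) yields exactly the two clauses of property~4.

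The main obstacle I expect is not conceptual but bookkeeping: one must verify, gadget by gadget and matching by matching, that the ``shift-by-one'' alignment between consecutive vertical rows is respected in every case, including the collapsed rows $H(h_j^i)$ where the matching degenerates to a single edge per layer. Once the layering is established carefully and the exhaustive list of inter-gadget matchings is checked, the four bullets reduce to essentially routine case analysis.
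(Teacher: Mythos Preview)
Your proposal is correct and follows the same approach as the paper: all four properties are extracted directly from the layered, shift-by-one structure of the construction. The paper's own proof is extremely terse (essentially asserting each point follows from the construction), so your version is just a more explicit unpacking of the same argument; in particular, your derivations of property~3 by induction from property~2 and of property~4 from the adjacency pattern of $\Gamma_{i,j}$ with its two boundary neighbours match the paper exactly.
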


\begin{proof}
The first point is trivial. 
The second point follows from the fact that there are no edges connecting a vertex of the $(p+1)$-th layer of $\Gamma_{i,j}$ to the $p$th layer of $\Gamma_{i',j'}$ with $i'<i$. The third point is a simple consequence of the second. The fourth point follows from the construction of the graph.
%The remaining points follow from the construction and the edges between different $\Gamma$ graphs. In particular, whenever a graph $\Gamma$ comes before $\Gamma'$ then we have no edges that connect a layer of $\Gamma'$ to an earlier layer in $\Gamma$. Moreover, there is no edge that connects vertices of different graphs in $\Gamma^i$, where $i \in [\kappa]$. Finally, vertices in $\Gamma_{i,j}$ have neighbors only in $\Gamma_{i,j}$, $\Gamma_{i - 1,n+1}$, and $\Gamma_{i,n + 1}$  
\end{proof}

Our next result states that the reconfiguration sequence described in Lemma~\ref{lem:hard_easydir} is the best possible. We say that a token $\tau$  makes a \emph{forward move} (\emph{backward move}) whenever the move decreases (increases) the number of moves still required by $\tau$ to reach $Q$. When a move causes neither a decrease nor an increase we say that token $\tau$ makes a \emph{local move}. We will prove that, if it exists, there is a transformation that does not make any local or backward move. That permits to control the shape of a shortest reconfiguration sequence and then ensures that this reconfiguration sequence can be interpreted as a multicolored clique of $G$.

\begin{lemma}\label{lem:hard_length}
Any reconfiguration sequence from $P$ to $Q$ requires at least $20(\kappa^3 + \kappa^2)$ token moves. Moreover, if there exists a reconfiguration sequence from $P$ to $Q$ then there exists one of length exactly $20(\kappa^3 + \kappa^2)$.
\end{lemma}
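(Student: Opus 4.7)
The plan is to establish both assertions via a potential-function argument, and then use an exchange argument for the equality. Order the gadgets by $P \prec \Gamma_\star \prec \Gamma^1 \prec \Gamma_{1,\star} \prec \Gamma^2 \prec \Gamma_{2,\star} \prec \dots \prec \Gamma^\kappa \prec \Gamma_{\kappa,\star} \prec Q$ and assign integer rows $0, 1, 2, \dots, 2\kappa+2$ in this order, with all $n$ subgraphs of $\Gamma^i$ sharing row $2i$. For any $st$-shortest path $P'$ in $G'$, set $\Phi(P') = \sum_{p=1}^{\delta} r(v_p)$, where $v_p$ is the $p$-th internal vertex and $r(v_p)$ denotes the row of the gadget containing $v_p$, and recall $\delta = 10\kappa^2$. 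Then $\Phi(P) = 0$ and $\Phi(Q) = \delta(2\kappa+2) = 20(\kappa^3 + \kappa^2)$.

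The first key step is to prove that a single token move changes $\Phi$ by at most $1$. Consider a move sending the token at layer $p$ from $v_p$ to $v_p'$; the neighbors $v_{p-1}, v_{p+1}$ are fixed, and $v_p'$ must be adjacent to both. By the construction of $G'$ together with Lemma~\ref{lem:props}(4), the only edges crossing different rows of the hierarchy go between \emph{consecutive} rows: $P$ is connected only to $\Gamma_\star$; $\Gamma_\star$ only to the gadgets of $\Gamma^1$; each $\Gamma_{i,j}$ (with $j \le n$) only to the preceding and following boundary gadgets; and $\Gamma_{\kappa,\star}$ only to $Q$. Moreover, no edges exist between distinct $\Gamma_{i,j}, \Gamma_{i,j'}$ of the same $\Gamma^i$. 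Hence the row of $v_p'$ differs from each of $r(v_{p-1})$ and $r(v_{p+1})$ by at most $1$, which forces $|r(v_p') - r(v_p)| \le 1$. Summing over all moves of any reconfiguration sequence immediately yields the lower bound of $20(\kappa^3 + \kappa^2)$.

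For the second assertion, classify each move of a given reconfiguration sequence $\mathcal{R}$ as \emph{forward} (increases $\Phi$ by $1$), \emph{backward} (decreases $\Phi$ by $1$), or \emph{local} ($\Phi$ unchanged), with counts $f, b, \ell$. Then $|\mathcal{R}| = f + b + \ell$ and $f - b = 20(\kappa^3 + \kappa^2)$, so $|\mathcal{R}| = 20(\kappa^3 + \kappa^2) + 2b + \ell$. It therefore suffices to transform $\mathcal{R}$ into an equivalent sequence with $b = \ell = 0$, which would automatically have length exactly $20(\kappa^3 + \kappa^2)$. I plan an exchange argument by induction on $b + \ell$: given the first backward move of a token $\tau$, locate the earliest subsequent forward move of $\tau$ that reverses it (which must exist since $\tau$ ultimately lands in the strictly higher row of $Q$), cancel the pair, and patch the intermediate configurations accordingly. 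Local moves within a single $\Gamma_{i,j}$ can analogously be deferred until the token exits $\Gamma_{i,j}$ and then absorbed into the exiting row-changing move.

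The main obstacle will be verifying rigorously that each such cancellation preserves validity of every intermediate shortest-path configuration. This relies on the strong structural rigidity imposed by Lemma~\ref{lem:props} together with the collapse defining each $\Gamma_{i,j}$: in every layer of $\Gamma_{i,j}$ corresponding to $V_i$ only the single vertex $h_j^i$ is available, and the buffer matchings essentially pin down the adjacencies at the remaining layers, so tokens have almost no freedom to make meaningful alternative moves. Once this rigidity is spelled out, swapping or removing the canceled moves cannot break adjacency in any intermediate path, so the induction goes through and produces a sequence of length exactly $20(\kappa^3 + \kappa^2)$.
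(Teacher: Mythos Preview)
Your potential-function proof of the lower bound is clean and essentially equivalent to the paper's counting argument: assigning rows $0,1,\dots,2\kappa+2$ and observing that a single jump changes $\Phi$ by at most $1$ recovers exactly the paper's claim that each token needs $2\kappa+2$ row-changes. (Note that your inference ``$|r(v_p')-r(v_p)|\le 1$'' needs the monotonicity of rows along a shortest path from Lemma~\ref{lem:props}(2), not only the fact that cross-row edges are between consecutive rows; otherwise $r(v_{p-1})=r(v_{p+1})$ would leave room for a jump of $2$.)

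The real gap is in the exchange argument. You propose to cancel the \emph{first backward} move of a token $\tau$ against the \emph{earliest subsequent forward} move of $\tau$. This pairing does not give you the control you need: after the first backward move the sequence is unconstrained, so between those two moves $\tau$ may make further backward or local moves, and its neighbours at positions $p\pm 1$ may themselves move repeatedly. There is then no reason why ``patching the intermediate configurations'' keeps every path valid, and your final paragraph simply asserts this without argument. The paper avoids this entirely by pairing in the \emph{other} direction: it takes the first non-forward move and, when it is backward, cancels it against the \emph{last preceding forward} move of the same token $\tau$. Because everything before the first non-forward move is forward, one can argue that the tokens at positions $p-1$ and $p+1$ do not move between those two events (any forward move of a neighbour would force a later backward move of that neighbour before $\tau$'s backward step, contradicting minimality), and hence the two cancelled moves are exact inverses ($x_p=y_p$), so deletion is trivially valid.

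Your treatment of local moves is also off. ``Deferring and absorbing'' a local move into the exiting row-change is not justified: while $\tau$ sits at the shifted vertex, its neighbours may have moved to vertices adjacent to the new position but not the old one. The paper instead proves that local moves are (essentially) impossible: if a neighbour is in a different row the cross-row edges are matchings, and if both neighbours are siblings then one shares a group with $\tau$ and the intra-group matchings pin $\pi$, so no second vertex in that layer is adjacent to both neighbours. You should replace the deferral heuristic by this impossibility argument and reverse the direction of your cancellation pairing; then the induction on $b+\ell$ goes through exactly as in the paper.
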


\begin{proof}
Recall that the number of internal vertices in any $st$-path in $G'$ is exactly $10\kappa^2$. Hence, to establish the lower bound of $20(\kappa^3 + \kappa^2)$, it suffices to show that every token on an internal vertex of $P$ has to make at least $2\kappa + 2$ forward moves. 

In order to go from the $i$-th vertex of $P$ to the $i$-th vertex of $Q$, one has to pass through a vertex of $\Gamma_{j,\star}$ for every $j$ in increasing order (since the vertices of $\Gamma_{j,\star}$ in the $i$-th layer separate $\Gamma_{j',r}$ with $j'\le j$ from $\Gamma_{j',r}$ with $j'>j$). As vertices of $\Gamma_{j,\star}$ are anti-complete to $\Gamma_{j',\star}$, at least $2\kappa$ moves are needed. Since we also need to move from $P$ to $\Gamma_*$ and from $\Gamma_{\kappa,\star}$ to $Q$, we get the desired bound.

So to conclude we have to prove that all the shortest reconfiguration sequences have length $20(\kappa^3 + \kappa^2) + 1$. Assume by contradiction that $P$ can be transformed to $Q$ and that every reconfiguration sequence from $P$ to $Q$ has length at least $20(\kappa^3 + \kappa^2) + 1$. Let $\sigma$ denote such a shortest reconfiguration sequence from $P$ to $Q$. It follows that at least one token must make at least $2\kappa + 3$ moves in $\sigma$. This implies that the token must make at least one local or backward move; a token on $P$ reaches $Q$ after a sequence of exactly $2\kappa + 2$ forward moves. We claim that the existence of a local or backward move in $\sigma$ contradicts the assumption that $\sigma$ is a shortest reconfiguration sequence. 

Consider the first token $\tau$ that makes a non-forward move in $\sigma$. Assume that $\tau$ makes a local move. By Lemma~\ref{lem:props}, when $\tau$ jumps from $v_p$ to $v'_p$ (in the same layer of the same row), it must be the case that $v_{p-1}$ is either a sibling of $v_p$ or below $v_p$ and $v_{p + 1}$ is either a sibling of $v_p$ or above $v_p$. If one of $v_{p - 1}$ or $v_{p + 1}$ is not a sibling of $v_p$ then no local move is possible; since only matching edges exist between non-siblings\footnote{Local moves are possible when $v_p$ is in $\Gamma_\star$ and $v_{p+1}$ is in $P$ but similar arguments imply that we can delete all local moves that happen while $v_{p+1}$ is in $P$ and only jump $\tau$ from $P$ to its final position in $\Gamma_\star$ immediately before the first move of the token on $v_{p+1}$. The same is true when $v_{p}$ is in $Q$ and $v_{p+1}$ is in $\Gamma_{\kappa,\star}$.}. Hence, both $v_{p - 1}$ and $v_{p + 1}$ must be sibling of $v_p$. Note that either $v_{p - 1}$ and $v_p$ or $v_{p}$ and $v_{p+1}$ must belong to the same group. Assume, without loss of generality, that $v_{p - 1}$ and $v_p$ belong to the same group. This implies that $\pi(v_{p - 1}) = \pi(v_p)$; since only matching edges are added within a group. Therefore, as $\pi(v'_p) \neq \pi(v_p) = \pi(v_{p - 1})$, $v'_p$ is not adjacent to $v_{p-1}$ and the local move is impossible, a contradiction. 

So we can assume that the first non-forward move in $\sigma$ is a backward move of token $\tau$. We denote the resulting path by $P_b$. We let $P_f$ denote the path resulting from the last forward move of $\tau$ prior to the backward move. We modify $\sigma$ to $\sigma'$ by deleting both the first backward move and the closest preceding forward move of $\tau$. We claim that $\sigma'$ is a valid reconfiguration sequence from $P$ to $Q$ that is shorter than $\sigma$, obtaining the required contradiction. 

To see why $\sigma'$ is valid, we consider the vertices occupied by $\tau$ (and its two neighbors) between $P_f$ and $P_b$. The last forward move of token $\tau$ jumps forward from some vertex $x_p$ above $z_p$ to $z_p$. Let $x_{p-1}$ and $x_{p + 1}$ denote the two neighbors of $x_p$ and $z_p$ in the path preceding $P_f$ and in $P_f$. 
The first backward jump of $\tau$ jumps from vertex $z_p$ to vertex $y_p$ above $z_p$. 
Let $y_{p-1}$ and $y_{p + 1}$ denote the two neighbors of $y_p$ and $z_p$ in the path preceding $P_b$ and in $P_b$. 
By Lemma~\ref{lem:props}, 
it must be the case that $x_{p-1}$ is a sibling of $z_p$ and below $x_p$ and $x_{p + 1}$ is a sibling of $x_p$ and above $z_p$. 
Also by Lemma~\ref{lem:props}, 
it must be the case that $y_{p-1}$ is a sibling of $z_p$ and below $y_p$ and $y_{p + 1}$ is a sibling of $y_p$ and above $z_p$. 
By construction, we have $\pi(x_{p-1}) = \pi(x_{p})$, $\pi(z_p) = \pi(x_{p + 1})$,
$\pi(y_{p-1}) = \pi(y_{p})$, and $\pi(z_p) = \pi(y_{p + 1})$. 

The token on $x_{p-1}$ cannot move between $P_f$ and $P_b$ as any forward move of said token will require a backward move prior to $P_b$, contradicting our choice of $\tau$. Consequently, we have $x_{p-1} = y_{p - 1}$. For the same reason, the token on $x_{p+1}$ cannot move between $P_f$ and $P_b$, implying  $x_{p+1} = y_{p+1}$. Putting it all together, we have $x_{p-1} = y_{p-1}$, $x_{p + 1} = y_{p+1}$, and $x_{p} = y_{p}$. Said differently, the token preceding $\tau$ and the token succeeding $\tau$ do not move between $P_f$ and $P_b$. Combined with the fact that $x_{p} = y_{p}$, we know that if we execute all moves of $\sigma$ prior to $P_b$ excluding the moves of $P_f$ and $P_b$ then we still reach $P_b$. However, this contradicts the fact that $\sigma$ is a shortest reconfiguration sequence. Hence, any shortest reconfiguration sequence from $P$ to $Q$ consists of exactly $20(\kappa^3 + \kappa^2)$ forward moves, as needed. 
\end{proof}

Given Lemma~\ref{lem:props} and Lemma~\ref{lem:hard_length}, it is easy to see that a shortest reconfiguration from $P$ to $Q$ in $G'$ must be \emph{monotone}, i.e., tokens always move towards $Q$ and every path in the reconfiguration sequence consists of a sequence of vertices (ordered from $s$ to $t$) whose distance from $Q$ monotonically increases. The last crucial brick in our proof requires a few  more definitions. We use $H$-layer to denote a layer in some $\Gamma_{i,j}$ that belongs to $H$. Moreover, when $\Gamma_{i,j}$ is clear from context,  we let $\mu^{-1}(i) = \{H_{j_1}, H_{j_2}, \ldots\}$ denote the set of all $H$-layers in $\Gamma_{i,j}$ that map to $V_i$, $i \in [\kappa]$. For a reconfiguration sequence $\sigma$, we let $\Gamma(\sigma)$ denote the set of all graphs $\Gamma$ that are touched by $\sigma$, i.e., a graph is touched by $\sigma$ if it contains a touched vertex and a vertex is touched by $\sigma$ if it ever receives a token. Note that $\Gamma(\sigma)$ contains the selection gadget, the boundary gadgets, and at least one graph from each verification gadget.  

\begin{lemma}\label{lem:clique-path}
Assume that there exists a reconfiguration sequence $\sigma$ from $P$ to $Q$ in $G'$. For $i \in [\kappa]$, let $\mu^{-1}(i) = \{H_{j_1}, H_{j_2}, \ldots\}$ denote the $H$-layers in $\Gamma_\star$ that map to $V_i$. Then: 

\begin{itemize}
    \item For every two consecutive sets $H_j$ and $H_{j+1}$ in $\Gamma \in \Gamma(\sigma)$ there exists at least one $st$-path $P'$ in the sequence $\sigma$ such that $P'$ contains one vertex in  $H_j$ and one vertex in $H_{j+1}$. 
    \item If $\sigma$ is a shortest reconfiguration sequence then the intersection of $\bigcup_{P' \in \sigma}{V(P')}$ with $\bigcup_{H_j \in \mu^{-1}(i)}{V(H_j)}$ includes only vertices that map to the same vertex of $V_i$. In other words, for any two vertices $w$ and $w'$ in $W = \bigcup_{P' \in \sigma}{V(P')} \cap \bigcup_{H_j \in \mu^{-1}(i)}{V(H_j)}$, we have $\pi(w) = \pi(w')$.
\end{itemize}
\end{lemma}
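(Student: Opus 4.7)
The plan has two components, one per bullet, and we may assume by Lemma~\ref{lem:hard_length} that $\sigma$ is a shortest reconfiguration sequence of length exactly $20(\kappa^3 + \kappa^2)$, and hence monotone: every token makes exactly $2\kappa + 2$ forward moves and no backward or local moves. The strong structural restrictions of Lemma~\ref{lem:props} then tightly control the shape of every intermediate $st$-path.

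For the first bullet, I would fix $\Gamma = \Gamma_{i,j} \in \Gamma(\sigma)$ and two consecutive $H$-layers $H_j, H_{j+1}$ of $\Gamma$, occupying columns $c$ and $c+1$ of $G'$, and then analyze the time-intervals along $\sigma$ during which the tokens at columns $c$ and $c+1$ occupy $\Gamma$. By Lemma~\ref{lem:props} (item~4), whenever a path in $\sigma$ has its column-$c$ vertex in $\Gamma$, its column-$(c+1)$ vertex must lie either in $\Gamma$ itself or in the gadget immediately above $\Gamma$ (namely $\Gamma_{i-1,\star}$), and symmetrically for the other direction. Tracking when each token enters and leaves $\Gamma$, and using monotonicity (each token passes through $\Gamma$ in a single contiguous stay), I would show that the two ``in-$\Gamma$'' intervals must overlap: otherwise one can derive a forbidden edge between $\Gamma_{i-1,\star}$ at one column and $\Gamma_{i,\star}$ at the next, violating the edge structure of $G'$. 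At any moment in the intersection, the corresponding path $P'$ has a vertex in $H_j$ and a vertex in $H_{j+1}$, yielding the first claim.

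For the second bullet, I would track the trajectory of a single token whose column corresponds to some $H$-layer in $\mu^{-1}(i)$ of $\Gamma_\star$. Since every token makes exactly $2\kappa + 2$ forward moves, such a token visits, in order, $P$, $\Gamma_\star$, a sub-gadget $\Gamma_{1,j_1^*}$ of $\Gamma^1$, $\Gamma_{1,\star}$, \ldots, $\Gamma_{\kappa,j_\kappa^*}$, $\Gamma_{\kappa,\star}$, $Q$. The key observation is that at every moment when the current path meets $\Gamma^i$, all of its vertices in $\Gamma^i$ lie in a single sub-gadget $\Gamma_{i,j_i^*}$, because the sub-gadgets $\{\Gamma_{i,j'} : j' \in [n]\}$ are pairwise vertex-disjoint and are only connected to each other via $\Gamma_{i-1,\star}$ and $\Gamma_{i,\star}$. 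Inside $\Gamma_{i,j_i^*}$, the collapse forces every $V_i$-layer to contain only the vertex $h_{j_i^*}^i$. Chaining this backward through the matching edges between consecutive gadgets, which by construction preserve the projection $\pi$ onto $V(G)$, I conclude that each token in a $V_i$-layer column of $\Gamma_\star$ sits at the unique vertex projecting to $v_{j_i^*}$ throughout $\sigma$. Since all these tokens share a common $j_i^*$, the second bullet follows.

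The main obstacle is the first bullet: the combinatorial case analysis on how a path can enter and leave $\Gamma$ must be executed carefully to rule out the degenerate scenario where columns $c$ and $c+1$ occupy $\Gamma$ at disjoint time-intervals. Ruling this out rigorously requires combining monotonicity with the adjacency constraints of Lemma~\ref{lem:props} and with the specific diagonal/matching structure of the edges between $\Gamma$ and its vertically adjacent gadgets.
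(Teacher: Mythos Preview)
Your treatment of the first bullet is sound and in fact more detailed than the paper's one-line appeal to Lemma~\ref{lem:props}; the time-interval overlap argument is a clean way to make that line precise.

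The second bullet, however, has a genuine gap. Your ``key observation'' that at any single moment the current path meets at most one sub-gadget $\Gamma_{i,j}$ is correct, but it does \emph{not} imply that the same $j_i^\ast$ is used throughout $\sigma$. Nothing in Lemma~\ref{lem:props} or in the edge structure forbids the path from exiting $\Gamma^i$ entirely (all its $\Gamma^i$-tokens having advanced to $\Gamma_{i,\star}$) and then having a later token at a higher column enter some $\Gamma_{i,j'}$ with $j'\neq j_i^\ast$: the edge from column $c$ of $\Gamma_{i,\star}$ to column $c{+}1$ of $\Gamma_{i,j'}$ exists, and item~4 of Lemma~\ref{lem:props} permits this configuration. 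So ``all these tokens share a common $j_i^\ast$'' is precisely what remains to be proved, and your proposal does not prove it. Relatedly, the ``chaining backward through matching edges'' is looser than you suggest: the inter-gadget matchings link column $c$ below to column $c{+}1$ above, so the chain is diagonal and relates \emph{different} tokens' positions at \emph{different} times, and you must also argue that the chain actually passes through a collapsed layer of $\Gamma_{i,j_i^\ast}$ despite the column shifts.

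The paper closes this gap differently: it argues by contradiction. Starting from two vertices in $V_i$-layers of $\Gamma_\star$ with distinct $\pi$-images, it looks at the first time the respective neighboring tokens enter $\Gamma^i$, landing in some $\Gamma_{i,p}$ and $\Gamma_{i,p'}$, and uses the diagonal matching chain to show $p\neq p'$. It then argues that to get from the path $P_x$ (first entry into $\Gamma_{i,p}$) to the path $P_y$ (first entry into $\Gamma_{i,p'}$), all tokens residing in $\Gamma_{i,p}$ must first vacate it either upward to $\Gamma_{i-1,\star}$ (a backward move) or downward to $\Gamma_{i,\star}$ and then back up (a forward followed by a backward move), contradicting Lemma~\ref{lem:hard_length}. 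This is the missing ingredient in your plan: you need an explicit argument that visiting two distinct $\Gamma_{i,p}$, $\Gamma_{i,p'}$ forces a non-forward move.
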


\begin{proof}
The fact that for every two consecutive sets in $\Gamma \in \Gamma(\sigma)$ there exists at least one $st$-path $P'$ in $\sigma$ that intersects with both sets follows from Lemma~\ref{lem:props}; otherwise a token jumps  beyond $\Gamma$, which is impossible when $\Gamma \in \Gamma(\sigma)$. 

Assume that there exist two sets $\{H_{j_1}, H_{j_2}\} \in \mu^{-1}(i)$ such that $P'$ intersect with $H_{j_1}$ on vertex $h^{j_1}_{a}$ and $P''$ (possibly equal to $P'$) intersects with $H_{j_2}$ on vertex $h^{j_2}_{b}$, where $\pi(h^{j_1}_{a}) \neq \pi(h^{j_2}_{b})$ (recall that $\mu(j_1) = \mu(j_2) = i$). Assume, without loss of generality, that $j_1 < j_2$ and $P'$ appears before $P''$ in $\sigma$ (and $H_{j_1},H_{j_2}$ belong to different groups). When $\sigma$ is a shortest reconfiguration sequence, it follows that no token ever makes a backward or local jump, i.e., a jump that does not bring the token closer to its final position in $Q$ (Lemma~\ref{lem:hard_length}). Hence, since $P'$ includes a vertex of $H_{j_1}$ all of its remaining vertices after $H_{j_1}$ (all the vertices in layers larger than $j_1$) must belong to $\Gamma_\star$ or $P$. In fact, the aforementioned vertices can be partitioned into two vertex-disjoint paths with the first path contained in $\Gamma_\star$ followed by a (possibly empty) path contained in $P$. The same is true for $P''$ and $H_{j_2}$. 

Consider the first time the token preceding but closest to the token on $h^{j_1}_{a}$ jumps to some vertex $x$ in some layer of $\Gamma_{i,p}$, $p \in [n]$. Let us denote the resulting path by $P_x$. Similarly, consider the first time the token preceding but closest to the token on $h^{j_2}_{b}$ jumps to some vertex $y$ in some layer of $\Gamma_{i,p'}$, $p' \in [n]$. Let us denote the resulting path by $P_y$. By construction, each layer of $\Gamma_{i,p}$ (or $\Gamma_{i,p'}$) corresponding to $V_i$ contains a unique vertex and all those vertices map to the same vertex of $V_i$ (the $p$th vertex or the $p'$th vertex, respectively). All vertices of $P_x$ after $x$ must be either at the same level or above $x$ (Lemma~\ref{lem:props}). The same is true for $P_y$ and all vertices of $P_y$ after $y$. We now show that it is impossible for $P_x$ to reach $P_y$ without backward moves, which contradicts the assumption that $\sigma$ is shortest. Assume otherwise. When the token jumps to $y$ it must be the case that the token after $y$ belongs to some vertex $w$ which is one level above $y$ and such that $\pi(y) = \pi(w)$ (by construction only matching edges are added). By repeated application of the same argument we get that $\pi(y) = \pi(h^{j_2}_{b}) \neq \pi(h^{j_1}_{a}) = \pi(x)$. Consequently, we have $\Gamma_{i,p} \neq \Gamma_{i,p'}$ (or simply $p \neq p'$). Assume, without loss of generality, that $p < p'$. Before any token of $P_x$ can jump to a vertex in $\Gamma_{i,p'}$ it must be the case that all tokens (in $\Gamma_{i,p}$) have moved to either  $\Gamma_{i-1,\star}$ and above or to  $\Gamma_{i,\star}$ and below. The former case implies at least one backward jump and the latter case implies at least one forward jump followed by a backward jump. In both cases, we get the required contradiction. 
\end{proof}

We now have all the ingredients to finish the proof.

\begin{lemma}\label{lem:hard-dir}
If $(G',s,t,P,Q)$ is a yes-instance of \textsc{Shortest Path Reconfiguration} then $(G,\kappa)$ is a yes-instance of \textsc{(Regular) Multicolored Clique}.
\end{lemma}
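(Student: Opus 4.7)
The plan is to assemble the pieces already established—Observation~\ref{obs:allpairs}, Lemma~\ref{lem:props}, Lemma~\ref{lem:hard_length}, and Lemma~\ref{lem:clique-path}—into a direct extraction of a multicolored clique from any reconfiguration sequence.

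First, starting from a hypothetical reconfiguration sequence $\sigma$ from $P$ to $Q$, I would invoke Lemma~\ref{lem:hard_length} to replace $\sigma$ by a shortest reconfiguration sequence $\sigma^{\star}$ of length exactly $20(\kappa^3+\kappa^2)$. The point of working with $\sigma^{\star}$ is that the second bullet of Lemma~\ref{lem:clique-path} applies only to shortest sequences: for every $i\in[\kappa]$, all vertices touched by $\sigma^{\star}$ inside $H$-layers of $\Gamma_\star$ mapping to $V_i$ have the same image under $\pi$. Call this common image $v_{j_i}\in V_i$. This gives a candidate set $C=\{v_{j_1},\ldots,v_{j_\kappa}\}$ consisting of exactly one vertex from each color class.

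Next, I would show that $C$ forms a clique. Fix any two distinct indices $i,i'\in[\kappa]$. By Observation~\ref{obs:allpairs}, there is some group index $p\in[\beta-1]$ in $\Gamma_\star$ with $\mu(p)=i$ and $\mu(p+1)=i'$; concretely, there are consecutive $H$-layers $H_a$ and $H_{a+1}$ (straddling the boundary of consecutive groups) with $\mu(H_a)=V_i$ and $\mu(H_{a+1})=V_{i'}$. Since $\Gamma_\star\in\Gamma(\sigma^{\star})$ and these layers are consecutive, the first bullet of Lemma~\ref{lem:clique-path} supplies an $st$-path $P'$ in $\sigma^{\star}$ that contains a vertex $a\in H_a$ and a vertex $b\in H_{a+1}$. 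Because $P'$ is a shortest path and $H_a,H_{a+1}$ are consecutive layers, $a$ and $b$ must be adjacent in $G'$. By construction of $H$, edges between $H$-layers belonging to different groups are precisely the images of edges of $G$, so $\pi(a)\pi(b)\in E(G)$. Finally, the second bullet of Lemma~\ref{lem:clique-path} forces $\pi(a)=v_{j_i}$ and $\pi(b)=v_{j_{i'}}$, hence $v_{j_i}v_{j_{i'}}\in E(G)$.

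Doing this for every unordered pair $\{i,i'\}$ yields that $C$ is a multicolored clique of size $\kappa$ in $G$, which completes the proof that $(G,\kappa)$ is a yes-instance of \textsc{(Regular) Multicolored Clique}.

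The only delicate step is ensuring that the pair of touched vertices $(a,b)$ provided by Lemma~\ref{lem:clique-path} actually straddles two different groups of $H$ (so that the edge between them encodes an edge of $G$ rather than an intra-group matching edge). This is exactly what Observation~\ref{obs:allpairs} guarantees, since consecutive groups in $\Gamma_\star$ are mapped to any desired pair $(V_i,V_{i'})$ with $i\neq i'$; so the only small bookkeeping obstacle is choosing the right pair of consecutive $H$-layers at the group boundary, after which everything follows mechanically from the two previously proved bullets.
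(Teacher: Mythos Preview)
Your proposal is correct and follows essentially the same approach as the paper's own proof: both take a shortest reconfiguration sequence, use the second bullet of Lemma~\ref{lem:clique-path} to extract one vertex $v_{j_i}$ per color class from the touched $H$-layers of $\Gamma_\star$, and then use the first bullet (together with the group-boundary structure from Observation~\ref{obs:allpairs}) to certify that every pair $v_{j_i}v_{j_{i'}}$ is an edge of $G$. Your write-up simply makes the bookkeeping (choosing the boundary layers $H_a,H_{a+1}$ and translating the $G'$-edge back to a $G$-edge) more explicit than the paper does.
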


\begin{proof}
Let $(G',s,t,P,Q)$ be a yes-instance and let $\sigma$ be a shortest reconfiguration sequence from $P$ to $Q$. For $i \in [\kappa]$ and $P' \in \sigma$, let $W_i = \bigcup_{P' \in \sigma}{V(P')} \cap \bigcup_{H_j \in \mu^{-1}(i)}{V(H_j)}$. Moreover, let $\pi(W_i) = \{\pi(w) \mid w \in W_i\}$. By Lemma~\ref{lem:clique-path}, we have $|\pi(W_i)| = 1$ and we denote the vertex by $v^i_{j_i}$. Consider the $\kappa$ vertices $\{v^1_{j_1}, \ldots, v^i_{j_i}, \ldots, v^\kappa_{j_\kappa}\}$. The fact that those vertices must form a multicolored clique in $G$ again follows from Lemma~\ref{lem:clique-path}; as every pair must appear consecutively in two $H$-layers of $\Gamma_\star$ and some path of $\sigma$ must intersect with both. This completes the proof. 
\end{proof}

\begin{figure}[ht]
    \centering
    \includegraphics[scale=0.4]{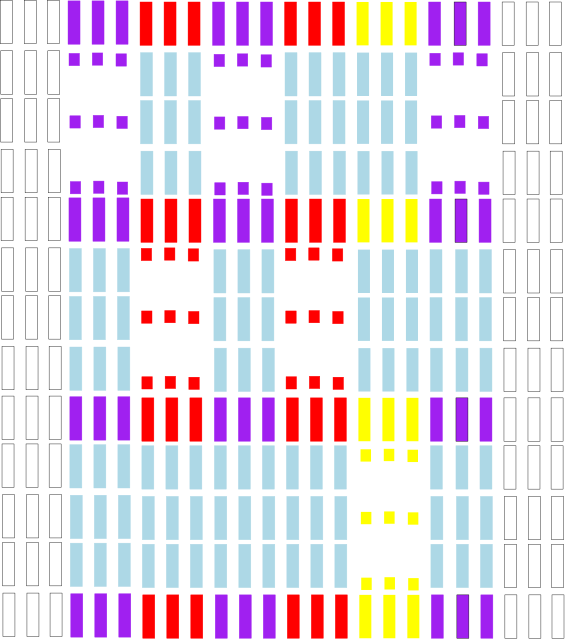}
    \caption{An example of our reduction in the case of token sliding.}
    \label{fig:slidingreduction}
\end{figure}

\begin{corollary}\label{cor:hard-models}
SPR is \textsf{W[1]}-hard parameterized by $k$ and SSPR is \textsf{W[1]}-hard parameterized by $k+\ell$ under the token jumping model.
\end{corollary}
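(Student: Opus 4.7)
The plan is to simply assemble the three preceding lemmas and verify the parameter bookkeeping. First, I would observe that the construction $(G',s,t,P,Q)$ from an instance $(G,\kappa)$ of \textsc{Regular Multicolored Clique} can be built in polynomial time: the total number of layers is $O(\kappa^2)$, each layer has at most $n$ vertices, each verification gadget $\Gamma^i$ consists of $n$ shifted copies of the selection template, and all edges are either matchings, natural matchings through collapses, or copies of edges of $G$. Thus $|V(G')| = \mathrm{poly}(n,\kappa)$, and crucially $k = |V(P)| - 1 = 10\kappa^2 + 1$, so $k$ is bounded by a function of $\kappa$, making the reduction a valid parameterized reduction from $\kappa$ to $k$.

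Next, I would combine Lemma~\ref{lem:hard_easydir} (if $(G,\kappa)$ is a \yes-instance of RMC then a reconfiguration sequence from $P$ to $Q$ exists, of length $20(\kappa^3+\kappa^2)$) with Lemma~\ref{lem:hard-dir} (if $(G',s,t,P,Q)$ is a \yes-instance of SPR then $(G,\kappa)$ is a \yes-instance of RMC). Together they yield the equivalence $(G,\kappa) \in \textsc{RMC} \iff (G',s,t,P,Q) \in \textsc{SPR}$. Since RMC is $\W[1]$-hard when parameterized by $\kappa$ by \cite{pietrzak2003parameterized}, and $\kappa \le \sqrt{k/10}$, this immediately transfers $\W[1]$-hardness to SPR parameterized by $k$.

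For SSPR, the additional ingredient is Lemma~\ref{lem:hard_length}, which says every reconfiguration sequence has length at least $20(\kappa^3+\kappa^2)$. Setting $\ell := 20(\kappa^3+\kappa^2)$, I would argue as follows: if $(G,\kappa)$ is a \yes-instance, then by Lemma~\ref{lem:hard_easydir} there is a reconfiguration sequence of length exactly $\ell$, so $(G',s,t,P,Q,\ell)$ is a \yes-instance of SSPR. Conversely, if $(G',s,t,P,Q,\ell)$ is a \yes-instance of SSPR, then in particular $(G',s,t,P,Q)$ is a \yes-instance of SPR (the length bound is irrelevant for SPR), and Lemma~\ref{lem:hard-dir} returns a multicolored clique. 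Since both $k = 10\kappa^2 + 1$ and $\ell = 20(\kappa^3+\kappa^2)$ are bounded by computable functions of $\kappa$, the combined parameter $k + \ell$ is bounded in $\kappa$, so the reduction witnesses $\W[1]$-hardness of SSPR parameterized by $k+\ell$.

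There is no real obstacle here: all the non-trivial work has already been done in Lemmas~\ref{lem:hard_easydir}, \ref{lem:hard_length}, and \ref{lem:hard-dir}. The only point requiring a sentence of care is that the ``token jumping'' model matches the notion of adjacency used throughout the construction (differ in exactly one vertex, no edge requirement between the swapped vertices), which is exactly what was assumed implicitly in the correctness proofs; the sliding case is deferred to a subsequent modification of the construction illustrated in Figure~\ref{fig:slidingreduction}.
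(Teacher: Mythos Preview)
Your proposal is correct and follows exactly the paper's approach: the paper's proof simply cites Lemmas~\ref{lem:hard_easydir} and~\ref{lem:hard-dir} for SPR, and additionally Lemma~\ref{lem:hard_length} for SSPR, which is precisely what you do (with the parameter bookkeeping spelled out in more detail than the paper bothers with). One small remark: your SSPR argument does not actually use the lower bound from Lemma~\ref{lem:hard_length}, since the converse direction goes through SPR via Lemma~\ref{lem:hard-dir}; this is fine, but it means that lemma is only needed to justify that the chosen $\ell$ is tight rather than for correctness of the equivalence.
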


\begin{proof}
The \textsf{W[1]}-hardness of SPR under the token jumping model follows from Lemmas~\ref{lem:hard_easydir} and~\ref{lem:hard-dir}. The \textsf{W[1]}-hardness of SSPR under the token jumping model follows by combining the aforementioned lemmas with Lemma~\ref{lem:hard_length}. 
\end{proof}

\begin{corollary}\label{thm:hardness_ts}
SPR is \textsf{W[1]}-hard parameterized by $k$ and SSPR is \textsf{W[1]}-hard parameterized by $k+\ell$ under the token sliding model.
\end{corollary}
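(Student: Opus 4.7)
The plan is to modify the graph $G'$ built for the token jumping reduction so that every admissible token jump in the previous proof can be realized as a short sequence of token slides, while preserving the fact that every $st$-path in the resulting graph has length $k$ and the layered structure of Lemma~\ref{lem:props}. The construction sketched in Figure~\ref{fig:slidingreduction} indicates the right template: one augments $G'$ by inserting, between any two vertices that a token must traverse via a jump, a private length-preserving detour that allows the token to walk from its source to its target along edges of the graph.

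Concretely, the source of non-slidability in the jumping construction lies in the connections between a vertex $v^P_i$ of $P$ (respectively $v^Q_i$ of $Q$) and the layer $i$ of the adjacent gadget: in the TJ model a token can jump directly, but in TS it must have an edge to each intermediate target. I would therefore, for every $i$, attach to the pair $(v^P_i, v^P_{i+1})$ a small ``elevator'' gadget sitting at the correct distance from $s$ and $t$, i.e.\ a collection of length-one or length-two corridors that let the token currently on $v^P_i$ slide into the corresponding layer of $\Gamma_\star$ while its neighbours on $P$ remain fixed. The same device is added on the $Q$ side, as well as between successive verification gadgets $\Gamma^i$ and boundary gadgets $\Gamma_{i,\star}$, so that the ``shift one layer to the left'' transitions used in the proof of Lemma~\ref{lem:hard_easydir} can be executed as slides rather than jumps. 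Each such elevator increases $k$ and $\ell$ only by an additive $\mathcal{O}(\kappa^2)$ term, so $k = \mathcal{O}(\kappa^2)$ and $\ell = \mathcal{O}(\kappa^3)$ remain valid upper bounds.

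With this augmentation in place, I would reprove the four bullets of Lemma~\ref{lem:props} in the new graph: the new edges only connect vertices at consecutive layers of the same or neighbouring rows, so they cannot create shorter $st$-paths and cannot allow a shortest path to jump from a ``below'' row to an ``above'' row. The forward direction (Lemma~\ref{lem:hard_easydir}) is then immediate since every jump in the previous proof is replaced by at most a constant number of slides along the elevator. For the backward direction, the key properties of Lemma~\ref{lem:hard_length} (no local or backward move in a shortest sequence) and Lemma~\ref{lem:clique-path} (consistency of the selected vertex in each $V_i$) carry over verbatim because the argument only uses the monotone layered structure and the matching edges between different groups, both of which are preserved. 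Consequently, the existence of a reconfiguration sequence, and of one of length $\mathcal{O}(\kappa^3)$, still characterizes yes-instances of \textsc{Regular Multicolored Clique}.

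The main obstacle I anticipate is calibrating the elevators so that they (i) permit the intended slides and (ii) do not introduce unintended shortest $st$-paths or new ``escape routes'' that would let a token sidestep the verification gadgets; in particular one must check that no elevator creates an edge between two non-sibling vertices in the same layer nor between vertices that the TJ argument relied upon being non-adjacent. This is a local, gadget-by-gadget check, but it is the delicate part of the adaptation. Once it is verified, the statement follows exactly as in Corollary~\ref{cor:hard-models}.
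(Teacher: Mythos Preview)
Your proposal and the paper diverge substantially in strategy. You try to keep the shifted TJ construction intact and graft ``elevator'' gadgets onto it so that each former jump becomes a short walk. The paper does something simpler and structurally different: it \emph{removes the shift altogether}. Every $\Gamma_{i,j}$ is rebuilt as $\Gamma(3,H,3)$ or $\Gamma(3,H(h),3)$ so that all rows are aligned layer-for-layer, and then one adds \emph{vertical matching edges} between corresponding layers of $\Gamma_{i-1,\star}$, $\Gamma_{i,j}$, and $\Gamma_{i,\star}$ (plus the natural diagonal edges mimicking the within-row edges). The payoff is a one-line structural fact: every vertex has at most one neighbour in the row above it and at most one in the row below it. Under token sliding this immediately forces that whenever a token slides from one row to the next it lands on the unique matched copy, so the ``same vertex in every copy of $V_i$'' consistency of Lemma~\ref{lem:clique-path} comes for free from the model rather than from a delicate counting of forward/backward moves.

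Your route is not obviously wrong, but it is underspecified at exactly the point you flag as delicate, and it does not exploit the feature of TS that makes the problem easier rather than harder. In particular, your assertion that Lemmas~\ref{lem:hard_length} and~\ref{lem:clique-path} ``carry over verbatim'' is optimistic: once you insert elevator vertices at layer~$i$, you create new vertices that belong to \emph{some} row and must be accounted for in the sibling/above/below trichotomy of Lemma~\ref{lem:props}; the no-local-move argument in Lemma~\ref{lem:hard_length} relied on the fact that within a group only matching edges exist, and an elevator sitting inside a layer could easily break this. You would have to redo those lemmas, not reuse them. The paper sidesteps all of this by changing the geometry so that the desired slides are single edges already present in the graph, and so that the uniqueness of vertical neighbours replaces the forward-move bookkeeping.
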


\begin{proof}
We explain how to adapt the reduction for the token sliding model. We modify our construction in two ways. We ``align'' all the gadgets and add ``vertical matchings'' (see Figure~\ref{fig:slidingreduction}). Formally, all of our graphs $\Gamma_{i,j}$ will be either of the form $\Gamma(3, H, 3)$ or $\Gamma(3, H(h), 3)$. We add edges between sets appearing at consecutive layers in the natural way. 

We describe the edges between $\Gamma_{i-1,\star}$, $\Gamma_{i,j}$, and $\Gamma_{i,\star}$; there are no edges between the different graphs in $\Gamma^i$. Let $L$ and $L'$ denote two consecutive layers of $\Gamma_{i,j}$, let $M$ and $M'$ denote the same two layers (as $L$ and $L'$) in $\Gamma_{i-1,\star}$, and let $N$ and $N'$ denote the same two layers (as $L$ and $L'$) in $\Gamma_{i,\star}$. We add a matching from $M$ to $L$ and a matching from $L$ to $N$ (when $L$ consist of a single vertex we instead add a single edge to the corresponding image). We connect $L$ and $M'$ as well as $N$ and $L'$ the same way we connect $L$ and $L'$. That is, if $L$ and $L'$ belong to different groups, i.e., they map to different sets, we connect $L$ and $M'$, $N$ and $L'$, and $L$ and $L'$ using edges of $G$. If $L$ and $L'$ belong to the same group, we connect $L$ and $M'$, $N$ and $L'$, and $L$ and $L'$ using matchings. 

Observe that every vertex has at most one neighbor in the layer above it (when it exists) and at most one neighbor in the layer below it (when it exists). Combining this observation with the fact that we are in the token sliding model ensures that we are always selecting the same vertex in each layer. Hence, we can mimic the proof of Lemma~\ref{lem:clique-path} to conclude the proof.
\end{proof}
\subsection{Graphs of bounded degeneracy}
The goal of this section is to adapt the previous reduction to show that both problems remain hard on graphs of bounded degeneracy. 

\subsubsection{Token jumping}
We start with the case of token jumping. 
Note that the previous reduction offers two rather general (informal) properties:

\begin{enumerate}
    \item We can guarantee the selection of adjacent pairs of elements from different sets; and  
    \item we can guarantee that the same element is selected from a collection of sets.  
\end{enumerate}

Before diving into the details of proof, let us try to understand why we might have large degree or large degeneracy in our constructed graphs. If we look at a vertex $v$ in a set of $\Gamma_{i,j}$, then, by construction, it has no neighbor in the different graphs of $\Gamma^i$, no neighbor in sets just above or below it, no neighbor in sets below and to the right or above and to the left of it (Figure~\ref{fig:jumpingreduction}). In fact, in addition to neighbors in preceding and succeeding layers of $\Gamma_{i,j}$, $v$ has exactly one neighbor in one set below and to the left  and exactly one neighbor in one set above and to the right. Hence, in order to reduce the degeneracy one simply has to reduce the degeneracy in every graph $\Gamma_{i,j}$. To do so, we will reduce the degeneracy of the graph $G'$ by subdividing the edges of $G$ and adding sets to represent those edges. Recall that we start with an instance $(G,\kappa)$ of RMC where each vertex in $G$ has exactly $r$ neighbors in each of the $\kappa - 1$ other sets. We let $E_{i,j}$ denote the set of edges between vertices in $V_i$ and vertices in $V_j$. The idea is that in the construction of the previous section, we add a new set $V_{i,j}$ between every consecutive pair of sets $V_i,V_j$. In other words, we replace $E_{i,j}$ by a set $V_{i,j}$ containing one vertex $v_e$ for each edge $e \in E_{i,j}$. For $v_e \in V_{i,j}$, where $e = \{u,w\}$, we add the edges $\{v_e, u\}$ and $\{v_e, w\}$. 
We refine the partition of $V$ into sets $\{V_i \mid i \in [\kappa]\}$ and sets $\{V_{i,j} \mid (i,j) \in [\kappa] \times [\kappa]\}$. We let $m = |V_{i,j}|$. Observe that finding a multicolored clique now corresponds to finding the appropriate $\kappa$ vertices as well as the ${ \kappa \choose  2}$ edges connecting them. However, as long as we guarantee a consistent selection of actual graph vertices (from $V$), the consistency of edge-vertices will follow. 

\subparagraph{Construction of $G'$.}
We start with the construction of a gadget $H$ as before with $6\kappa^2$ sets of vertices 
$H_1, H_2, \ldots, H_{6\kappa^2}$ such that $|H_i| = n$ for each $i \in [6\kappa^2]$. We group every three consecutive sets into $2\kappa^2$ groups $R_1 = \{H_1, H_2, H_3\}$, $R_2 = \{H_4, H_5, H_6\}$, $R_3 = \{H_7, H_8, H_9\}$, $\ldots$, and $R_{2\kappa^2} = \{H_{6\kappa^2 - 2}, H_{6\kappa^2 - 1}, H_{6\kappa^2}\}$. We again assume, without loss of generality, that $H_1 = V_1$, $H_{6\kappa^2} = V_\kappa$, and no two consecutive groups of $H$ map to the same $V_i$.  
We then modify $H$ (without changing its name) by adding a set $H_{i,j}$ between every consecutive two groups $R_i = \{H_{h}, H_{h+1}, H_{h+2}\}$ followed by $R_j = \{H_{h+3}, H_{h+4}, H_{h+5}\}$. Every $H_{i,j}$ is a copy of $V_{\mu(i),\mu(j)}$ and every vertex $e_{a,b}$ with $a \in V_{\mu(i)}$ and $b \in V_{\mu(j)}$ is connected to exactly the copy of $a$ in $H_{h+2}$ and the copy of $b$ in $H_{h+3}$. 
We call $H_i$ a vertex layer of $H$ and $H_{i,j}$ an edge layer of $H$. We number the edge layers starting from $1$; $H$ now consists of $6\kappa^2$ vertex layers (divided into $2\kappa^2$ groups) and $2\kappa^2 - 1$ edge layers, i.e., we have a total of $8\kappa^2 - 1$ layers. 

We again maintain a mapping $\mu$ such that each $H_i$ corresponds to a copy of some $V_j$, for $i \in [6\kappa^2]$ and $j \in [\kappa]$. We also maintain a mapping $\pi$ that maps every vertex of $H_i$ to its corresponding vertex in $V_{\mu(i)}$. Given $H$ and a vertex $h_j^i \in H_i$, we let $H(h_j^i)$ denote the graph obtained from $H$ by deleting all vertices from each set $H_{i'}$, where $\mu(i') = \mu(i)$, except for $h_j^{i'}$.

We now describe the construction of the instance $(G',s,t,P,Q)$ of SPR (which is adapted from the one of the proof of Theorem~\ref{thm:hardness1}), where $G'$ has degeneracy four. We first add two new vertices $s$ and $t$ and two internally vertex-disjoint $st$-paths $P$ and $Q$ consisting of $24\kappa^2 - 1$ internal vertices each. The next step consists of adding $\Gamma_{0,\star} = \Gamma(8\kappa^2, H, 8\kappa^2)$ to $G'$ and connecting $s$ to every vertex in $I_1$ and $t$ to every vertex in $J_{8\kappa^2}$. Moreover, we let the $i$th internal vertex of $P$, $i \geq 2$, be adjacent to every vertex in layer $i-1$ of $\Gamma_{0,\star}$. 
After $\Gamma_{i,\star}$ is inserted we proceed as follows: 

\begin{itemize}
    \item For $i \in [\kappa - 1]$ and for every $j \in [n]$ (processing in increasing order), we insert a graph $\Gamma_{i + 1,j}$, where $$\Gamma_{i + 1,j} = \Gamma(8\kappa^2 - (2i + 1), H(h_j^{i+1}), 8\kappa^2 + (2i + 1)).$$ 
    \item Next, we insert a graph $\Gamma_{i + 1, \star}$, where $$\Gamma_{i + 1, \star} = \Gamma(8\kappa^2 - (2i + 2), H, 8\kappa^2 + (2i + 2)).$$ 
\end{itemize}

We connect $s$ to all the first-layer vertices and $t$ to all the last-layer vertices in the obvious way (that is we connect $s$ to the first independent set at the left of $\Gamma$ and $t$ to the last independent set at the right of $\Gamma$). Let $\Gamma^{i + 1}$ denote the collection of the $n$ graphs of the form $\Gamma_{i + 1,j}$. We add edges between $\Gamma_{i, \star}$ and graphs in $\Gamma^{i + 1}$ just like before. Similarly, we then add a new graph $\Gamma_{i + 1,\star}$ and proceed as described until we reach $\Gamma_{\kappa,\star}$. We connect all the vertices of a layer of $\Gamma_{\kappa,\star}$ to the vertex of $Q$ on the preceding layer.

\begin{lemma}\label{lem:4degenerate}
The graph $G'$ is $4$-degenerate. 
\end{lemma}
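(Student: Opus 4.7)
The plan is to give an explicit elimination ordering of $V(G')$ in which each vertex has at most four neighbors appearing later in the ordering; equivalently, a peeling sequence in which the vertex removed at each step has degree at most four in the remaining graph. I will use three phases. \textbf{Phase~1} removes the gadgets $\Gamma_{i,j}$ for $i\in[\kappa]$ and $j\in[n]$; \textbf{Phase~2} removes the boundary gadgets $\Gamma_{i,\star}$ for $i\in\{0,\ldots,\kappa\}$; \textbf{Phase~3} removes the remaining $P$, $Q$, $s$, and $t$. Inside any single gadget the edge-layer vertices (those belonging to some inserted $H_{a,b}$) are always peeled before the vertex-layer and independent-set vertices.

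The key observation is that subdividing every edge of $G$ through the layers $H_{a,b}$ keeps the local degrees bounded. An edge-layer vertex $v_e$ of $\Gamma_{i,j}$ (with $i\ge 1$, $j\in[n]$) has at most two neighbors inside the local copy of $H$ (the two endpoints of $e$, possibly further reduced by the collapse) together with exactly one matching neighbor in each of the two vertically adjacent boundary gadgets $\Gamma_{i-1,\star}$ and $\Gamma_{i,\star}$, so its degree is at most four when it is peeled at the start of Phase~1. Once the edge-layer vertices of $\Gamma_{i,j}$ are gone, every remaining vertex of $\Gamma_{i,j}$ — whether it is a surviving collapse-vertex, one of the $n$ vertices in a non-collapsed vertex layer, or an independent-set vertex from the $\mathcal{I}^p$ or $\mathcal{J}^q$ part — retains only at most two matching edges within its own group and at most one matching edge to each of $\Gamma_{i-1,\star}$ and $\Gamma_{i,\star}$, hence degree at most four. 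Since distinct $\Gamma_{i,j}$'s are pairwise non-adjacent, the order in which they are processed inside Phase~1 is irrelevant.

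By the end of Phase~1, every edge-layer vertex of a boundary gadget $\Gamma_{i,\star}$ has lost all of its (up to $2n$) matching neighbors in the now-removed rows $\Gamma_{i,j}$ and $\Gamma_{i+1,j}$; what remains are the at most two endpoints inside the local $H$ together with, only for $i\in\{0,\kappa\}$, a single edge to a vertex of $P$ or $Q$. Peeling these first in Phase~2 therefore costs at most three. The non-edge-layer vertices of $\Gamma_{i,\star}$ peeled next keep at most two matching edges inside the row and at most one edge to $P$ or $Q$ or to $s$ or $t$ (at the very first and last layers), so again degree at most four. Finally in Phase~3, every internal vertex of $P$ has already lost all of its neighbors in $\Gamma_{0,\star}$, and symmetrically for $Q$ and $\Gamma_{\kappa,\star}$, so only the two path-neighbors survive; $s$ and $t$ retain only their two incident path endpoints. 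Every step of the peeling therefore removes a vertex of degree at most four.

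I do not anticipate any serious obstacle in the argument. The only part requiring a bit of bookkeeping is the collection of boundary situations — collapsed vertex layers that delete an endpoint of an adjacent edge-layer vertex, the first or last set of an $\mathcal{I}^p$ or $\mathcal{J}^q$ chain, and the one-column horizontal shift between consecutive rows — in each of which a matching is replaced by either a single edge or by no edge at all; the counts above can therefore only decrease, and the $4$-degeneracy bound is preserved.
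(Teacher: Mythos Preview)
Your proof is correct and follows essentially the same three-phase elimination ordering as the paper: first peel the verification gadgets $\Gamma_{i,j}$ (edge layers before vertex and independent-set layers), then the boundary gadgets $\Gamma_{i,\star}$ (again edge layers first), and finally $P$, $Q$, $s$, $t$. Your degree accounting matches the paper's, and your remark that the distinct $\Gamma_{i,j}$'s are pairwise non-adjacent (so their removal order within Phase~1 is immaterial) is exactly the structural fact the paper relies on implicitly.
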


\begin{proof}
We start by deleting all the vertices in edge layers of graphs $\Gamma_{i,j}$, where $i \in [\kappa]$ and $j \neq \star$. Recall that each such vertex has a neighbor in the preceding layer of $\Gamma_{i,j}$, a neighbor in the succeeding layer of $\Gamma_{i,j}$, a neighbor in $\Gamma_{i,\star}$, and a neighbor in $\Gamma_{i+1,\star}$. After deleting all vertices in edge layers of $\Gamma_{i,j}$, the vertices of $\Gamma_{i,j}$ belonging to vertex layers become adjacent to at most four vertices;  a neighbor in $\Gamma_{i,\star}$, a neighbor in $\Gamma_{i+1,\star}$, a possible neighbor in $I_q \cup J_1$, and at most two neighbors within a group (have two neighbors inside a group implies no neighbors in $I_q \cup J_1$). We delete those vertices followed by deleting all remaining vertices of $\Gamma_{i,j}$. Consequently, we can then remove all graphs of the form $\Gamma_{i,\star}$, where $2 \leq i \leq \kappa - 1$, starting with the edge-layer vertices. This leaves $P$, $Q$, $\Gamma_{1,\star}$, and $\Gamma_{\kappa,\star}$. We proceed by deleting $\Gamma_{1,\star}$ and $\Gamma_{\kappa,\star}$ again starting with vertices in edge layers. 
We are now left with a cycle formed by $P$ and $Q$, which completes the proof.    
\end{proof}

\begin{corollary}
SPR is \textsf{W[1]}-hard parameterized by $k$ and SSPR is \textsf{W[1]}-hard parameterized by $k+\ell$ under the token jumping model, even when restricted to $4$-degenerate graphs. 
\end{corollary}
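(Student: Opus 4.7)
The plan is to combine Lemma~\ref{lem:4degenerate}, which already certifies $4$-degeneracy of $G'$, with a direct analogue of the correctness argument of Corollary~\ref{cor:hard-models}. So the only real work is to check that the modified construction (with edge layers $H_{i,j}$ inserted between consecutive vertex groups) still enforces the same invariants as the original. First I would observe that the new $H$ still has a well-defined mapping $\mu$ on vertex layers, and that the notions of collapse and of the buffer/verification/boundary gadgets carry over verbatim, only with the parameters $\alpha$, $p$, $q$ rescaled to $8\kappa^{2}$, $8\kappa^{2}-(2i{+}1)$, etc. The size of each $st$-path becomes $24\kappa^{2}-1$, still polynomial in $\kappa$, so any hardness obtained transfers to the parameterization by $k$ (and $k+\ell$).

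Next I would lift the structural Lemma~\ref{lem:props} to the new graph. The only change is that between consecutive vertex layers of $H$ inside a selection gadget we now pass through an edge layer $H_{i,j}$, and each vertex of $H_{i,j}$ has degree exactly two towards its two endpoints; this preserves the property that non-sibling vertices on consecutive layers are connected only through matching-like edges, which is exactly what the proofs of Lemmas~\ref{lem:hard_length} and~\ref{lem:clique-path} rely on. The monotonicity argument and the forbidding of local/backward moves then go through word for word, with the forward-move count rescaled to the new $k$.

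For the easy direction (analogue of Lemma~\ref{lem:hard_easydir}), given a multicolored clique $\{v^{1}_{j_{1}},\ldots,v^{\kappa}_{j_{\kappa}}\}$ of $G$, we reconfigure $P$ into $Q$ by successively placing tokens on the $\Gamma_{i,\star}$'s: the edge layer $H_{i,j}$ traversed between $v^{a}_{j_{a}}$ and $v^{b}_{j_{b}}$ is handled simply by sliding the token through the unique vertex $v_{e} \in H_{i,j}$ corresponding to the edge $e = \{v^{a}_{j_{a}},v^{b}_{j_{b}}\}$, which exists because the clique contains this edge. Counting moves as in Lemma~\ref{lem:hard_length} still yields a reconfiguration sequence of length $f(\kappa)$ for a polynomial $f$, which gives the $k+\ell$ bound needed for SSPR.

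For the hard direction, I would redo the proof of Lemma~\ref{lem:clique-path} on the new construction: the verification gadgets $\Gamma^{i}$ still collapse all vertex layers mapping to $V_{i}$ to a single vertex, so a shortest monotone reconfiguration sequence again selects a consistent vertex $v^{i}_{j_{i}}$ in every $V_{i}$-layer of $\Gamma_{0,\star}$. Because any path in $\sigma$ that enters two consecutive vertex layers $H_{h+2}$, $H_{h+3}$ of $\Gamma_{0,\star}$ must pass through $H_{i,j}$ via a vertex $v_{e}$ corresponding to a genuine edge of $G$, the pair $(v^{\mu(h+2)}_{j_{\mu(h+2)}}, v^{\mu(h+3)}_{j_{\mu(h+3)}})$ is an edge of $G$; by Observation~\ref{obs:allpairs} every pair in $[\kappa]\times[\kappa]$ appears this way, so $\{v^{1}_{j_{1}},\ldots,v^{\kappa}_{j_{\kappa}}\}$ is a multicolored clique. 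The main obstacle, which I expect to be mostly bookkeeping, is checking that the insertion of the edge layers does not create new shortcuts that would let a token perform a non-monotone move without contradiction; this follows because every edge-layer vertex has degree exactly two within $H$ and its only additional neighbors lie in the corresponding edge layers of the adjacent $\Gamma_{i,\star}$'s via matchings, preserving the key non-sibling/matching structure exploited in Lemma~\ref{lem:hard_length}. Combining these facts with Lemma~\ref{lem:4degenerate} yields the corollary.
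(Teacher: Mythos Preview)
Your proposal is correct and follows essentially the same approach as the paper: the paper's proof is a single sentence invoking Lemma~\ref{lem:4degenerate} together with an ``easy adaptation'' of Corollary~\ref{cor:hard-models}, and what you have written is precisely a sketch of that adaptation, verifying that the insertion of edge layers preserves the structural invariants (Lemmas~\ref{lem:props}, \ref{lem:hard_length}, \ref{lem:clique-path}) needed for both directions of the reduction. Your level of detail exceeds the paper's, but the strategy is identical.
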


\begin{proof}
Hardness on $4$-degenerate graphs under the token jumping model follows from an easy adaptation of Corollary~\ref{cor:hard-models} and Lemma~\ref{lem:4degenerate}. 
\end{proof}

\subsubsection{Token sliding}
%The fact that under the token sliding model degeneracy increases from $4$ to $6$ is a consequence of the following construction. 
Let us explain how we can adjust the construction of the previous section in order to get the hardness proof under the token sliding model. We modify the reduction in a few ways. Informally, as in the proof of Corollary~\ref{thm:hardness_ts}, we will  ``align'' all the gadgets (see Figure~\ref{fig:slidingreduction}), add vertical edges, and replace some $H$ gadgets with $\triangle(H)$ gadgets to reduce the density in certain subgraphs~of~$G'$. 

We start with a few definitions. Given a vertex $v \in V_i$ and $i,i',i'' \in [\kappa]$ such that $i \neq i' \neq i''$, we let $\triangle(v, i, i', i'')$ denote a set of vertices containing one vertex for each pair $(u,w)$ such that $u \in V_{i'}$, $w\in V_{i''}$, and $\{v, u\}, \{v, w\}, \{u, w\} \in E(G)$. In other words, $\triangle(v, i, i', i'')$ contains a vertex for each triangle in $G$ that contains $v$, a vertex from $V_{i'}$,  and a vertex from $V_{i''}$. Given $v \in V_i$ and $i,i' \in [\kappa]$ such that $i \neq i'$, we let $\nabla(v, i, i')$ denote a set of vertices containing one vertex for each  edge in $G$ that is incident to $v$ and a vertex from $V_{i'}$.
We let $\triangle_{i, i', i''} = \bigcup_{v \in V_i}{\triangle(v, i, i', i'')}$ and $\nabla_{i, i'} = \bigcup_{v \in V_i}{\nabla(v, i, i')}$.

We introduce a new gadget $\triangle(H)$, similar to $H$, which consists of replacing each non-collapsed $H_i$, i.e., $|H_i| \geq 2$, with a copy of $\triangle_{\mu(i), \mu(i - 1), \mu(i + 1)}$ (instead of a copy of $V_{\mu(i)}$) if $\mu(i) \neq \mu(i - 1) \neq \mu(i + 1)$. If $\mu(i) = \mu(i - 1) \neq \mu(i + 1)$ we let $H_i = \nabla_{\mu(i), \mu(i + 1)}$,  
and if $\mu(i) = \mu(i + 1) \neq \mu(i - 1)$ we let $H_i = \nabla_{\mu(i), \mu(i - 1)}$. We set 
$H_1 = \nabla_{\mu(1), \mu(2)}$ and $H_{6\kappa^2} = \nabla_{\mu(6\kappa^2), \mu(6\kappa^2 - 1)}$. 
We update the mapping $\pi: H_i \rightarrow V_{\mu(i)}$ to map every vertex of $H_i$ to its corresponding vertex in $V_{\mu(i)}$. We construct a new mapping $\pi_\triangle: H_i \rightarrow V_{\mu(i - 1)} \times V_{\mu(i)} \times V_{\mu(i + 1)}$ to map every vertex of $H_i = \triangle_{\mu(i), \mu(i - 1), \mu(i + 1)}$ to its corresponding triangle. We construct a new mapping $\pi_\nabla: H_i \rightarrow \{V_{\mu(i - 1)} \times V_{\mu(i)}\} \cup \{V_{\mu(i + 1)} \times V_{\mu(i)}\}$ to map every vertex of $H_i = \nabla_{\mu(i), \mu(i - 1)}$ or $H_i = \nabla_{\mu(i), \mu(i + 1)}$ to its corresponding edge. We assume, without loss of generality, that $|H_i| = n$ even for $\triangle(H)$; all sets can be made of equal size by adding dummy vertices. 
Now, after $\Gamma_{i,\star}$ is inserted we proceed as follows: 

\begin{itemize}
    \item For $i \in [\kappa - 1]$ and for every $j \in [n]$ (processing in increasing order), we insert a graph $\Gamma_{i + 1,j}$, where $$\Gamma_{i + 1,j} = \Gamma(3, \triangle(H(h_j^{i+1})), 3).$$ 
    \item Next, we insert a graph $\Gamma_{i + 1, \star}$, where $$\Gamma_{i + 1, \star} = \Gamma(3, H, 3).$$ 
\end{itemize}

All of our graphs $\Gamma_{i,j}$ will be either of the form $\Gamma(3, H, 3)$ or $\Gamma(3, \triangle(H(h)), 3)$. We add edges between sets appearing in  consecutive layers of $\Gamma_{i,j}$ in the natural way. That is, whenever one of the two layers consists of an independent set and whenever both layers belong to the same group, we simply add a matching between the vertices. Otherwise, i.e., when we have $(H_i, H^j_{i,i+1})$, then we add an edge between a vertex $h \in H_i$ and a vertex $h_e \in H^j_{i,i+1}$, $\pi(h_e) = e = \{u,v\}$,  whenever one of the following is true:

\begin{itemize}
    \item $|H_i| = 1$ or $H_i = V_{\mu(i)}$ and $\pi(h) \in e$; 
    \item $H_i = \triangle_{\mu(i), \mu(i - 1), \mu(i + 1)}$ and $e \subseteq \pi_\triangle(h)$; 
    \item $H_i = \nabla_{\mu(i), \mu(i - 1)}$ and $\pi_\nabla(h) = e$; or
    \item $H_i = \nabla_{\mu(i), \mu(i + 1)}$ and $\pi_\nabla(h) = e$; 
\end{itemize}

When we have $(H^j_{i,i+1}, H_{i+1})$, then we add an edge between a vertex $h_e \in H^j_{i,i+1}$, $\pi(h_e) = e = \{u,v\}$, and a vertex $h \in H_{i+1}$ whenever one of the following is true:

\begin{itemize}
    \item $|H_{i+1}| = 1$ or $H_{i+1} = V_{\mu(i+1)}$ and $\pi(h) \in e$; 
    \item $H_{i+1} = \triangle_{\mu(i + 1), \mu(i), \mu(i + 2)}$ and $e \subseteq \pi_\triangle(h)$; 
    \item $H_{i+1} = \nabla_{\mu(i + 1), \mu(i)}$ and $\pi_\nabla(h) = e$; or
    \item $H_{i+1} = \nabla_{\mu(i + 1), \mu(i + 2)}$ and $\pi_\nabla(h) = e$; 
\end{itemize}

We describe the edges between $\Gamma_{i-1,\star}$, $\Gamma_{i,j}$, and $\Gamma_{i,\star}$; there are no edges between the different graphs in $\Gamma^i$. Let $L_p$ and $L_{p +1}$ denote two consecutive layers of $\Gamma_{i,j}$, let $M_{p}$ and $M_{p+1}$ denote the same two layers (as $L_p$ and $L_{p +1}$) in $\Gamma_{i-1,\star}$, and let $N_{p}$ and $N_{p+1}$ denote the same two layers (as $L_p$ and $L_{p +1}$) in $\Gamma_{i,\star}$. We add an edge from every vertex $u$ in $M_{p}$ to every vertex $v$ in $L_p$ whenever $\pi(v) \in \triangle(\pi(u), \mu(p), \mu(p-1), \mu(p+1))$ or $\pi(v) \in \nabla(\pi(u), \mu(p), \mu(p+1))$ or $\pi(v) \in \nabla(\pi(u), \mu(p), \mu(p-1))$. Similarly, we add an edge from every vertex $u$ in $N_{p}$ to every vertex $v$ in $L_p$ whenever $\pi(v) \in \triangle(\pi(u), \mu(p), \mu(p-1), \mu(p+1))$ or $\pi(v) \in \nabla(\pi(u), \mu(p), \mu(p+1))$ or $\pi(v) \in \nabla(\pi(u), \mu(p), \mu(p-1))$. When $L_p$ consist of a single vertex we instead add a single edge to the corresponding image. We connect $L_{p}$ and $M_{p+1}$ as well as $N_{p}$ and $L_{p+1}$ the same way we connect $L_{p}$ and $L_{p+1}$, i.e., as described above for adding edges between sets appearing in  consecutive layers of $\Gamma_{i,j}$. This completes the construction of the instance $(G',s,t,P,Q)$ of SPR.

\begin{lemma}\label{lem:6degenerate}
The graph $G'$ is $6$-degenerate. 
\end{lemma}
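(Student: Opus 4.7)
The plan is to mirror the elimination argument of Lemma~\ref{lem:4degenerate}, adapted to the sliding construction where each representation vertex gains two additional vertical adjacencies (into the star gadgets $\Gamma_{i-1,\star}$ and $\Gamma_{i,\star}$) and two additional diagonal adjacencies (from the rule that matches $L_p$--$M_{p+1}$ and $N_{p-1}$--$L_p$ the same way as $L_p$--$L_{p+1}$ and $L_{p-1}$--$L_p$). I will exhibit an ordering of $V(G')$ in which every vertex, at the moment it is removed, has at most six surviving neighbors.

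The first stage peels away, for every $\Gamma_{i,j}$ with $j\neq\star$, the representation layers in the order triangles, then $\nabla$-edges, then edge layers $H_{a,b}$. A representation vertex $v\in L_p$ encoding a triangle $(u_1,u_2,u_3)$ or an edge $(u_1,u_2)$ contributes at most one neighbor in each of six layer-slots: the two horizontal neighbors in $L_{p-1}$ and $L_{p+1}$, the two vertically aligned neighbors in $M_p\subseteq\Gamma_{i-1,\star}$ and $N_p\subseteq\Gamma_{i,\star}$, and the two diagonal neighbors in $M_{p+1}$ and $N_{p-1}$. The reason only one neighbor appears per slot is that a triangle uniquely determines each of its sub-edges and its projection to each of the three vertex partitions, and an edge uniquely determines its two endpoints; combined with the incidence conditions spelled out in the construction, this pins down a unique partner in each slot. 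Summing over the six slots gives degree at most six, which is exactly where the bound of $6$ comes from.

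In the second stage I remove the remaining vertices of every $\Gamma_{i,j}$ with $j\neq\star$, namely the singletons produced by the collapse $H(h_j^{i+1})$ and the buffer independent sets $\mathcal{I}^3$ and $\mathcal{J}^3$. With the representation layers already gone, these vertices are incident only to their buffer neighbors and to the vertical matchings into $\Gamma_{i-1,\star}$ and $\Gamma_{i,\star}$, so their degree in the residual graph is at most four. The third stage applies the same triangle-then-$\nabla$-then-edge-then-buffer peeling to $\Gamma_{i,\star}$ for $2\le i\le \kappa-1$; since the surrounding verification gadgets $\Gamma^i$ and $\Gamma^{i+1}$ have already been removed, the same six-neighbor count still governs these layers. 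Finally, $\Gamma_{1,\star}$ and $\Gamma_{\kappa,\star}$ are peeled analogously, after which the only vertices left are those of $P\cup Q$, which form a cycle through $s$ and $t$ and are therefore $2$-degenerate.

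The main obstacle will be the case analysis showing that each representation vertex really contributes at most one neighbor per slot, particularly in the sub-case where a triangle layer is horizontally or diagonally adjacent to another triangle or $\nabla$ layer (rather than to a collapsed singleton or to a plain vertex layer). There, the bound of one per slot follows from the uniqueness of the sub-edge determined by a triangle together with the edge-incidence rule of the construction; to make this rigorous one must check, for each of the listed edge-insertion rules, that the resulting partner in the adjacent layer is uniquely determined by the constituents of $v$.
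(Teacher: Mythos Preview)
Your elimination ordering breaks down at Stage~2. You claim that after removing the triangle/$\nabla$/edge layers of every non-star $\Gamma_{i,j}$, each collapsed singleton is ``incident only to \ldots\ buffer neighbors and to the vertical matchings into $\Gamma_{i-1,\star}$ and $\Gamma_{i,\star}$'', giving residual degree at most four. This overlooks the \emph{diagonal} edges that you yourself enumerate in Stage~1. Consider a collapsed singleton $h$ sitting in the last vertex layer of a collapsed group, say $L_p$, so that $L_{p+1}$ is an edge layer. The diagonal rule connects $L_p$ to $M_{p+1}$ ``the same way we connect $L_p$ and $L_{p+1}$''. Since $M_{p+1}$ is an edge layer of the star gadget $\Gamma_{i-1,\star}$ (which you have \emph{not} removed), and the incidence rule for a singleton is ``$\pi(h)\in e$'', the vertex $h$ is adjacent to every $h_e\in M_{p+1}$ whose edge contains $\pi(h)$. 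There are exactly $r$ such edges (the regularity of the RMC instance), so $h$ has $r$ surviving neighbours in $M_{p+1}$ alone; symmetrically, the first singleton of a collapsed group has $r$ surviving neighbours in $N_{p-1}$. Hence the residual degree is $r+\mathcal{O}(1)$, not at most four, and your peeling stalls.

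This is precisely the obstacle the paper's proof is designed to get around. The paper does \emph{not} give a global elimination order; it argues by contradiction on the $7$-core, explicitly acknowledging that collapsed singletons ``may have many neighbours'' in the adjacent star gadgets. The dependency is circular --- singletons point into star edge layers, star edge layers point back into the verification gadgets --- and the paper breaks it by a minimality argument: pick the smallest index $p$ for which some $\Gamma_{p,j}$ still has a core vertex, use that all $\Gamma_{p-1,j}$ are already gone, and then count the surviving neighbours of the edge-layer vertices of the relevant star gadget to reach a contradiction. Your direct ordering would need an analogous interleaving (alternating between star and non-star rows, or processing rows in index order) rather than the ``all non-star first, then all star'' scheme you propose.
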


\begin{proof}
We repeatedly delete vertices of $G'$ of degree at most six, and, by a slight abuse of notation, keep using $G'$ to denote the resulting graph. We claim that this procedure will delete all vertices of $G'$. We prove the claim by contradiction, i.e., we assume that no vertex can be deleted from $G'$. Equivalently, we assume that $G'$ has minimum degree seven.   

Note that each vertex in an edge layer  of graph $\Gamma_{i,j}$, $i \in [\kappa]$ and $j \neq \star$,  has a neighbor in the preceding layer of $\Gamma_{i,j}$, a neighbor in the succeeding layer of $\Gamma_{i,j}$, two neighbors in $\Gamma_{i,\star}$, and two neighbors in $\Gamma_{i+1,\star}$. Hence, $G'$ cannot contain any edge-layer vertex in $\Gamma_{i,j}$, $i \in [\kappa]$ and $j \neq \star$. 
After deleting all vertices in edge layers of $\Gamma_{i,j}$, the vertices of $\Gamma_{i,j}$ belonging to vertex layers of size greater than one (not collapsed) become adjacent to at most six vertices; two neighbors in $\Gamma_{i,\star}$, two neighbors in $\Gamma_{i+1,\star}$, possibly a neighbor in $I_q$ or $J_1$, and at most two neighbors within the same group (having two neighbors inside the group implies no neighbors in $I_q$ or $J_1$). This follows from the fact that every such vertex either corresponds to an edge or a triangle. So, if some vertex of $\Gamma_{i,j}$ still exists in $G'$, it must be the case that the vertex is in a vertex layer of $H$ and has many neighbors in (the next layer of) $\Gamma_{i,\star}$ or (the previous layer of) $\Gamma_{i+1,\star}$. Such a vertex must belong to a collapsed layer and is the unique vertex in its layer. Let us consider the first $p$ (closest to $0$) such that $p$ is the index of the first $\Gamma_{p,j}$ that could not be completely deleted. By our choice of $p$, it must be the case that all graphs of the form $\Gamma_{p - 1,j}$, $j \in [n]$ no longer exist in $G'$. 
Consequently, the edge-layer vertices of $\Gamma_{p,\star}$ are adjacent to vertices at the same level or below them. In particular, an edge-layer vertex $v_e$ of $\Gamma_{p,\star}$ has only two neighbors in $\Gamma_{p,\star}$ and zero or more neighbors in $\Gamma_{p,j}$, $j \in [n]$. Hence, the vertex $v_e$ has degree more than six if and only if it has more than two neighbors contained in non-collapsed preceding vertex layers that come after $\Gamma_{p,\star}$. We know that this is impossible since all such vertices (in non-collapsed vertex layers of $\Gamma_{p,j}$) have been deleted already; as they have degree at most six after the edge-layers have been deleted. This contradicts our assumption that no vertex of $G'$ can be deleted, as needed to complete the proof.  
\end{proof}

\begin{corollary}
SPR is \textsf{W[1]}-hard parameterized by $k$ and SSPR is \textsf{W[1]}-hard parameterized by $k+\ell$ under the token sliding model, even when restricted to $6$-degenerate graphs.
\end{corollary}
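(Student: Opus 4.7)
The plan is to combine Lemma~\ref{lem:6degenerate}, which gives the degeneracy bound for the modified construction, with an adaptation of the correctness proofs from Corollary~\ref{thm:hardness_ts} and Lemma~\ref{lem:hard-dir}. The only extra difficulty compared to the bounded-degeneracy token jumping case (and compared to the general token sliding case) is that the vertex layers of $H$ have been partly replaced by $\triangle$- and $\nabla$-layers, which now encode edges and triangles of $G$ in addition to vertices; I must check that this replacement still enforces consistent selection of vertices from each $V_i$ across all sets mapping to $V_i$, and that a multicolored clique of $G$ still yields a valid reconfiguration sequence.

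For the easy direction, given a multicolored clique $\{v_{j_1}^1,\ldots,v_{j_\kappa}^\kappa\}$ in $G$, I would mimic the strategy in Lemma~\ref{lem:hard_easydir}, but now in the sliding model and through the alternating vertex/edge layers. At a vertex layer mapping to $V_i$ one slides the token onto the representative of $v^i_{j_i}$; at an edge layer $H_{i,i+1}$ one slides onto the vertex representing the edge $\{v^i_{j_i},v^{i+1}_{j_{i+1}}\}$ of the clique; at a $\triangle$- or $\nabla$-layer one slides onto the vertex encoding the corresponding edge or triangle of the clique. The adjacency conditions we imposed between consecutive layers (all variants of the four-case definition of $\pi$, $\pi_\triangle$, $\pi_\nabla$ incidence) were designed exactly so that these neighbouring choices are indeed adjacent in $G'$, and the vertical matchings between $\Gamma_{i-1,\star}$, $\Gamma_{i,j}$ and $\Gamma_{i,\star}$ let one slide a properly coloured $st$-path one step to the right in the layout, exactly as in Corollary~\ref{thm:hardness_ts}.

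For the hard direction, I would first re-establish the monotonicity statements of Lemma~\ref{lem:props} and Lemma~\ref{lem:hard_length} for the new sliding construction. Since all the structural ingredients are preserved, namely, every vertex has at most one neighbour in the layer above and one in the layer below (inherited from the alignment and matching-style vertical edges), and since graphs in $\Gamma^i$ are pairwise non-adjacent, the same argument shows that any reconfiguration sequence is essentially monotone and must traverse each $\Gamma_{i,\star}$ in order. Then I would repeat the argument of Lemma~\ref{lem:clique-path}: because any shortest sequence slides without backward or local moves, and because the vertical matchings and the definition of $\pi$, $\pi_\triangle$, $\pi_\nabla$ only connect vertices that project to the same element of the underlying $V_i$, the token occupying layer $H_p$ with $\mu(p)=i$ is forced to project (via $\pi$) to a fixed vertex $v_{j_i}^i\in V_i$, consistently across all copies. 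The $\triangle$ and $\nabla$ vertices traversed between consecutive vertex layers then certify the existence of an actual edge (respectively triangle) of $G$ joining these chosen vertices, which yields the required multicoloured clique exactly as in Lemma~\ref{lem:hard-dir}.

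The main obstacle I expect is the technical verification that the $\triangle$-replacement does not open a ``shortcut'' that would allow a token to reach a neighbour whose projection differs from the one chosen by adjacent tokens. This is why one has to check case by case the four kinds of edges defined between $(H_i,H^j_{i,i+1})$ and between $(H^j_{i,i+1},H_{i+1})$, and the symmetric conditions linking $L_p,M_p,N_p$ across $\Gamma_{i-1,\star}$, $\Gamma_{i,j}$, $\Gamma_{i,\star}$: together these conditions guarantee that two consecutive tokens on a shortest path always project to the endpoints of the same edge (or to two sides of the same triangle), which is precisely what is needed to carry the argument of Lemma~\ref{lem:clique-path} over to the $6$-degenerate sliding construction. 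Given monotonicity and this consistency property, both the \W[1]-hardness of SPR parameterized by $k$ and of SSPR parameterized by $k+\ell$ follow, on $6$-degenerate graphs, exactly as in Corollary~\ref{cor:hard-models} and Corollary~\ref{thm:hardness_ts}.
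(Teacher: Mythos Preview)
Your proposal is correct and takes essentially the same approach as the paper: the paper's proof is a terse two-line statement that hardness follows from the earlier correctness argument (Corollary~\ref{cor:hard-models}) together with the degeneracy bound of Lemma~\ref{lem:6degenerate}, and your sketch simply unpacks what that adaptation entails. Your more detailed account of how the $\triangle$/$\nabla$ layers preserve consistent projections and how monotonicity carries over is exactly the verification the paper leaves implicit.
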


\begin{proof}
Hardness on $6$-degenerate graphs under the token sliding model follows from Corollary~\ref{cor:hard-models} (for correctness) and Lemma~\ref{lem:6degenerate} (for bounding the degeneracy). 
\end{proof}

\section{\texorpdfstring{$\FPT$}{FPT} algorithms}
First, we observe that both SPR and SSPR are easily shown to be fixed-parameter tractable when parameterized by $k + \Delta(G)$, where $\Delta(G)$ denotes the maximum degree of $G$; by only retaining vertices that belong to some shortest $st$-path one can easily bound the size of the graph since the $i$-th layer, consisting of all the vertices at distance exactly $i$ from $s$, will contain at most $\Delta(G)^i$ vertices. In the remainder of this section, we investigate the complexity of the problem further (and for different parameters) in order to identify the boundary between tractability and intractability. 

We denote an instance of SSPR by a tuple $(G,s,t,P,Q,\ell)$, where $G$ denotes the input graph, $s$ and $t$ denote the starting and ending vertices of the shortest paths, respectively, and $P$ and $Q$ denote the source and target shortest paths, respectively. We use $k$ to denote $|P| = |Q|$ which is the number of edges along a shortest path from $s$ to $t$, i.e., the length of the path. 
%In what follows we focus on the token jumping model. 
\subsection{\texorpdfstring{$\FPT$}{FPT} for parameter \texorpdfstring{$\ell$}{l} on nowhere dense classes of graphs}

As a warm-up, let us first prove that the following holds:

\begin{lemma}\label{lem:nowhere_k+l}
SSPR is $\FPT$ parameterized by $k+\ell$ on nowhere-dense classes of graphs for both the sliding and the jumping models. 
\end{lemma}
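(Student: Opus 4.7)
The plan is to encode the existence of a reconfiguration sequence of length at most $\ell$ from $P$ to $Q$ as a first-order sentence $\phi$ over the vocabulary of $G$ expanded with constants naming $s$, $t$, and the internal vertices of $P$ and $Q$, and then to invoke the Grohe-Kreutzer-Siebertz theorem~\cite{DBLP:conf/stoc/GroheKS14}, which guarantees that first-order model checking on nowhere-dense classes of graphs is $\FPT$ parameterized by $|\phi|$. Provided that $|\phi|$ depends only on $k + \ell$, this immediately yields the claim.

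First, I would verify in polynomial time that $d_G(s,t) = k$; since $P$ has length $k$ we always have $d_G(s,t) \le k$, and if the inequality is strict we can reject the input (as $P$ would not be a shortest path). Under $d_G(s,t) = k$, every $st$-walk of length $k$ in $G$ is automatically a shortest, and hence induced, $st$-path, so it suffices to describe intermediate paths as sequences of consecutively adjacent vertices. The formula $\phi$ then existentially quantifies, for each intermediate position $i \in \{1, \ldots, \ell - 1\}$, variables $x_1^i, \ldots, x_{k-1}^i$ meant to encode the internal vertices of the $i$-th path $P_i$. It conjoins, for every $i$, a \emph{path clause} asserting that $s, x_1^i, \ldots, x_{k-1}^i, t$ forms a walk of adjacent vertices in $G$, and, for every pair of consecutive paths (with $P_0 := P$ and $P_\ell := Q$ hard-coded through the named constants), an \emph{adjacency clause} asserting that $P_i$ and $P_{i+1}$ coincide on all but one position; in the token sliding variant the clause additionally requires the two vertices at the differing position to be adjacent in $G$.

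Since $\phi$ has $(k-1)(\ell - 1)$ quantified variables and $O(k \ell)$ atomic subformulas, $|\phi|$ is bounded by a function of $k + \ell$ only, so applying the Grohe-Kreutzer-Siebertz model-checking result on nowhere-dense classes delivers the desired $\FPT$ algorithm, uniformly for both the TJ and TS models. The only subtle point is that a naive existential encoding of a path allows walks that revisit vertices; this is harmless here precisely because the preprocessing step $d_G(s,t) = k$ forces any $st$-walk of length $k$ in $G$ to be a simple shortest $st$-path. I do not anticipate any genuine obstacle beyond the bookkeeping involved in spelling out the atomic adjacency clauses, and the same FO-encoding strategy will then serve as the backbone for Theorem~\ref{thm:easy1}, where the extra work will be to bound $k$ by a function of $\ell$ via contraction of ``untouched'' portions of the paths.
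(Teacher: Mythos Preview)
Your proposal is correct and follows essentially the same approach as the paper: both encode the existence of a length-$\ell$ reconfiguration sequence as a first-order formula whose size is bounded in $k$ and $\ell$, and then invoke the Grohe--Kreutzer--Siebertz model-checking theorem on nowhere-dense classes. The only cosmetic point is that your adjacency clause should read ``coincide on all but \emph{at most} one position'' (as the paper's implication-style constraint does), so that sequences of length strictly less than $\ell$ are captured by padding with repeated paths.
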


\begin{proof}
The proof easily follows from the fact that FO-model checking is $\FPT$ on nowhere dense classes of graphs~\cite{DBLP:conf/stoc/GroheKS14}. Such an argument has already been used in various proofs for reconfiguration problems, see e.g.,~\cite{BousquetMNS22+}.

For every $i \le k$ and $j \le \ell$, let us create a variable $x_{i,j}$  that represents the $i$-th vertex of the path at the $j$-th step of the reconfiguration sequence. Let us prove that we can formulate the existence of a reconfiguration sequence of length $\ell$ between $P$ and $Q$ as a FO-formula on the set of variables $x_{i,j}$.

First we set $x_{i,1}=p_i$ where $p_i$ is the $i$-th vertex of the path $P$. Similarly $x_{i,\ell}=q_i$ where $q_i$ is the $i$-th vertex of the path $Q$.
We now need to ensure that at every step $j \le \ell$, the set of variables $x_{i,1},\ldots,x_{i,\ell}$ is a path of $G$, that is, for every $i \le k-1$ and every $j \le \ell$, $x_{i,j} x_{i+1,j}$ is an edge.

We further want to ensure that if one vertex is modified between the $j$-th path and the $(j+1)$-th path then all the other vertices are the same. That is, for every $i, i' \le k$ and $j \le \ell-1$, we have $(x_{i,j} \ne x_{i,j+1} \Rightarrow (x_{i',j}=x_{i',j+1})$.
If we want a reconfiguration sequence with the token sliding rule, we have to add the following constraint: for every $i \le k, j \le \ell-1$, $x_{i,j}=x_{i,j+1}$ or $x_{i,j}x_{i,j+1}$ is an edge.
Finally, we add the constraints  $x_{1,j}=s$ and $x_{k,j}=t$ for every $j \le \ell$. 

Let us denote by $\phi$ the resulting formula.
Let us prove that there exists a reconfiguration sequence from $P$ to $Q$ of length at most $\ell$ if and only if $\phi$ is satisfiable.

If there exists a reconfiguration sequence $P_1=P,\ldots,P_r=Q$ with $r \le \ell$ then we simply have to set $x_{i,j}$ to be the $i$-th vertex of $P_j$ and $x_{i,j'}=q_i$ for every $j' \ge r$ in order to satisfy all the constraints.

Conversely, assume that there exists an assignment of the variables that satisfies all the constraints. Let us denote by $P_j$ the set of ordered vertices $x_{i,j}$ for $1 \le i \le k$. Note that, by hypothesis, $P_j$ is an $st$-path for every $j$. Moreover, by definition $P_j$ and $P_{j+1}$ differ on at most one vertex and $P_1=P$ and $P_\ell=Q$. By removing consecutive paths that are the same we obtain a reconfiguration sequence from $P$ to $Q$, which completes the proof.
\end{proof}

Let us now generalize this result and prove that the following holds:

\begin{theorem}
SSPR is \texorpdfstring{$\FPT$}{FPT} parameterized by $\ell$ on nowhere dense classes of graphs for both the sliding and the jumping models.
\end{theorem}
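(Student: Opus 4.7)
The plan is to reduce the existence of a length-$\ell$ reconfiguration to a first-order model-checking query on $G$ whose formula $\phi$ has size depending only on $\ell$, and then invoke the Grohe--Kreutzer--Siebertz theorem~\cite{DBLP:conf/stoc/GroheKS14}. This follows the skeleton of Lemma~\ref{lem:nowhere_k+l}, whose formula has size $\Theta(k\ell)$; the task is therefore to show that only $O(\ell^2)$ positions of the path are ever relevant.

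The key step is a \emph{localization lemma}. Let $D=\{i : P_i\neq Q_i\}$ (if $|D|>\ell$ the answer is immediately \no) and $D^\star=\bigcup_{i\in D}[i-\ell,i+\ell]$, so that $|D^\star|=O(\ell^2)$. The claim is that if a reconfiguration of length at most $\ell$ exists then there exists one of no greater length in which every \emph{active} position (a position whose token is changed at some step) lies in $D^\star$. To prove this, view the active set $A$ as a subset of $\{0,\ldots,k\}$ and partition it into maximal intervals under the position-adjacency relation $i\sim i\pm1$; let $A^D$ be the union of those intervals that meet $D$, and consider any other interval $C\subseteq A\setminus A^D$. By maximality the positions immediately outside $C$ are inactive, so they carry their pinned token $v_i=P_i=Q_i$ throughout the sequence; and since $C\cap D=\emptyset$, the token sequence at every position of $C$ is a round trip starting and ending at $v_i$. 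One may then excise all moves inside $C$ at once: at each step set the tokens at positions of $C$ to the corresponding $v_i$'s, leaving all other tokens unchanged, and then remove consecutive duplicate paths from the sequence. Validity at every step follows because inside $C$ the sub-path $v_a,v_{a+1},\ldots,v_b$ is already present in $G$ as a subpath of $P$, and at the boundary we use the edges $v_{a-1}v_a$ and $v_bv_{b+1}$, which are also edges of $P$. Iterating over all such $C$ yields a valid reconfiguration of length $\le\ell$ with active set $A^D$; since $|A^D|\le|A|\le\ell$ and $A^D$ is connected to $D$ via unit steps, $A^D\subseteq D^\star$.

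With the localization lemma in hand, we build $\phi$ as in Lemma~\ref{lem:nowhere_k+l} but introduce variables $x_{i,j}$ only for $i\in D^\star$ and $j\in\{0,\ldots,\ell\}$, while for each $i\notin D^\star$ the token is pinned to the constant $v_i=P_i$. Adjacency of consecutive tokens along each intermediate path only needs to be asserted for consecutive position-pairs in which at least one lies in $D^\star$ (any other pair is already an edge of $G$ by construction of $P$); the single-change condition between consecutive steps is imposed over $D^\star$ only, with the extra $x_{i,j}x_{i,j+1}\in E(G)$ requirement in the TS model. The resulting $\phi$ has $O(\ell^3)$ variables and $\mathrm{poly}(\ell)$ atomic subformulas, and hence size depending only on $\ell$; Grohe--Kreutzer--Siebertz then decides the query in time $f(\ell)\cdot|V(G)|^{1+o(1)}$.

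The delicate point, and the main obstacle, is the verification within the localization lemma that excising a $D$-free island $C$ preserves validity of \emph{every} intermediate path, including those created after duplicate-removal in both the TJ and TS models. This hinges on the fact that the positions immediately outside $C$ are inactive and therefore continue to carry their original $v$-tokens, so every edge used at the boundary of $C$ after excision is an edge of $P$ already present in $G$, and no new slide is introduced between the retained steps.
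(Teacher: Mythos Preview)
Your proof is correct and follows essentially the same approach as the paper: both establish the identical localization claim (maximal intervals of the active set that miss $D$ can be excised without harm) and then invoke Grohe--Kreutzer--Siebertz on a first-order formula whose size depends only on $\ell$. The only packaging difference is that the paper contracts each inactive interval to a single edge, producing a graph $G'$ with $d_{G'}(s,t)=O(\ell^2)$, and then applies Lemma~\ref{lem:nowhere_k+l} to $G'$, whereas you write the formula directly on $G$ using $O(\ell^2)$ constants for the pinned and boundary positions; your route sidesteps the (easy but unstated in the paper) check that $G'$ still belongs to a nowhere-dense class.
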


\begin{proof}
The idea of the proof consists of proving that there exists an equivalent instance where the distance between $s$ and $t$ is bounded by a function of $\ell$. The conclusion then directly follows from Lemma~\ref{lem:nowhere_k+l}. To do so, we will prove that we can bound (by a function of $\ell$) the set of indices $i$ on which there is a relevant modification on the $i$-th vertex of the path at some step of the reconfiguration sequence. We will then prove that we can ``forget'' the vertices which are not in these positions by reducing the length of the shortest paths.

Let $(G,s,t,P,Q,\ell)$ be an instance of SSPR. Let us denote by $S$ the set of positions on which $P$ and $Q$ differ. Note that if $|S| > \ell$ then we can immediately return false since more than $\ell$ steps are needed to transform $P$ into $Q$. So we can assume that $|S| \le \ell$ in the rest of the proof.

\begin{claim}
If there is a reconfiguration sequence from $P$ to $Q$ of length at most $\ell$ then there is a reconfiguration sequence from $P$ to $Q$ that only modifies vertices whose indices are at distance at most $\ell$ from an index of $S$.
\end{claim}
\begin{proof}
Let $\mathcal{R}$ be a reconfiguration sequence from $P$ to $Q$ of length at most $\ell$. At each step, there is exactly one position where a vertex is modified. Let us denote by $R$ that set of positions where a vertex is modified. We have $|R|\le \ell$. 
A \emph{component} $R'$ of $R$ is a maximal subset of $R$ containing consecutive integers. Every component $R'$ has a minimum and a maximum value (that might be equal).  
%We say that $R' \subseteq R$ is a component of $R$ if $R'$ is a set of consecutive integers between $a$ and $b$ such that every $a \le c \le b$ is in $R$ and both $a-1$ and $b+1$ do  not belong to $R$. 
We say that a component is \emph{important} if it contains a vertex of $S$ and \emph{useless} otherwise.

We claim that if there is a useless component $R'$, removing from $\mathcal{R}$ all the modifications at position $c$ for every $c \in R'$ leaves a reconfiguration sequence from $P$ to $Q$. Indeed, let us denote by $\mathcal{R}'$ the resulting reconfiguration sequence. First note that since $R'$ is a useless component, the final shortest path is still $Q$ (we cancel modifications on positions where $P$ and $Q$ were identical). Assume now, for a contradiction, that at some step of the reconfiguration sequence in $\mathcal{R}'$, the set of vertices $P_i$ is not a shortest $st$-path. Let us denote by $u,v$ the consecutive vertices of $P_i$ that are not adjacent. Since the path is only modified at positions of indices of $R'$, either the index of $u$ or $v$ is in $R'$. Moreover, both of them are not in $R'$ since by definition of $\mathcal{R}'$ all the vertices of indices in $R'$ remain the same all along the reconfiguration sequence and the initial set of vertices is indeed a path. So we can assume by symmetry that the position of $u$ is the index just before the minimum value of $R'$ and $v$ is the minimum value of $R'$. Since the vertex $v$ belongs to all the sets in the reconfiguration sequence $\mathcal{R}'$, it means that $u$ has been modified. But then $u$ should be added in the component $R'$ of $v$, a contradiction.  

Thus, if there is a reconfiguration sequence from $P$ to $Q$ of length $\ell$, there is one with no useless component. But the width of a component is at most $\ell$ since only $\ell$ vertices are modified in a reconfiguration sequence. So if there is a reconfiguration sequence, there is one that only moves tokens on vertices whose indices are at distance at most $\ell$ from an index of $S$, as claimed.
\end{proof}

Let $X(i,s)$ be the set of vertices at distance exactly $i$ from $s$ in $G$. Let $I_S$ be the set of indices at distance at most $\ell$ from an index of $S$. 
%A maximal interval of $I_S$ is a maximal subset of integers that all belong to $I_S$. 
Note that $I_S$ has size at most $2\ell \cdot |S|$.
An empty interval for $I_S$ is an interval maximal by inclusion in $\{0,\ldots,d(s,t) \} \setminus I_S$. Note that $I_S$ has at most $|S|$ empty intervals. 
%An important vertex is a vertex $v \in X(i,s)$ such that $i$ is at distance $i$ from $S$. 
We create the graph $G'$ from $G$ as follows:
\begin{itemize}
\item For every $i \in I_S$, $G'$ contains all the vertices of $X(i,s)$.
\item For all the integers $i \notin I_S$ but at distance one from an integer of $I_S$, $G'$ contains the vertex at position $i$ in $P$ (and $Q$).
\item $G'$ contains $s$ and $t$.
\item There is an edge between $x$ and $y$ if $xy$ is an edge of $G$, or if $x,y$ are the unique two vertices of $G$ whose positions are in the same empty interval for $I_S$ \footnote{Informally, we link the vertex of $P$ just after an interval of $I_S$ with the vertex of $P$ just before the beginning of the next interval of $I_S$.}.
\end{itemize}
Let us denote by $P'$ and $Q'$ in $G'$ the set $P \cap V(G')$ and $Q \cap V(G')$. One can easily remark that $P'$ and $Q'$ are shortest $st$-paths in $G'$.

\begin{claim}
There is a reconfiguration sequence from $P$ to $Q$ in $G$ if and only if there is a reconfiguration sequence from $P'$ to $Q'$ in $G'$.
\end{claim}

\begin{proof}
The proof follows from the fact that we can assume that a transformation from $P$ to $Q$ of length at most $\ell$ in $G$ only modifies vertices whose indices are at distance at most $\ell$ from an index of $S$. All those vertices are in $G'$ and all the vertices of $G'$ that contain non-movable tokens are unique at their corresponding distance from $s$ (hence cannot move in $G'$). 
\end{proof}

One can remark that the distance between $s$ and $t$ in $G'$ is at most $4 \ell^2$. So by Lemma~\ref{lem:nowhere_k+l}, we can decide in \textsf{FPT}-time in $\ell$ if there is a reconfiguration sequence from $P'$ to $Q'$ in $G'$, which completes the proof. 
\end{proof}

%Trivially FPT for $k + \ell$ on nowhere dense because of FO model checking so now we can do the following:
%\begin{itemize}
%    \item We know that $k >> \ell$ since we know that it is FPT parameterized by $k+\ell$.
%    \item The symmetric difference between initial and target it at most $\ell$ (otherwise answer no).
%    \item If we mofify a vertex in the intersection, it should be a vertex at distance at most $\ell$ from a vertex initially in the symmetric difference (all the other moves of that type can be deleted without modifying the existence of a transformation)
%    \item So we can break down the problem into smaller sub problems where we can solve each in FPT time because in each subproblem $k$ is now bounded by $\ell$. The number of subproblems is also bounded linearly by $\ell$
%\end{itemize}
\subsection{\texorpdfstring{$\FPT$}{FPT} for parameters cluster deletion number, treedepth, and modular width}

In this section we show that SPR and SSPR are fixed-parameter tractable when parameterized by $\textsf{cd}(G)$, $\textsf{td}(G)$, and $\textsf{mw}(G)$, where $\textsf{cd}(G)$, $\textsf{td}(G)$, and $\textsf{mw}(G)$ denote the cluster deletion number, treedepth, and modularwidth of $G$, respectively. For all three parameters, we present algorithms that compute reconfiguration sequences of minimum length; so we focus only on SPR. We conclude by proving that if we parameterize by the feedback vertex set number of the graph, $\textsf{fvs}(G)$, then we can assume that the length of shortest paths is also bounded by $\textsf{fvs}(G)$, i.e., $k \in \mathcal{O}(\textsf{fvs}(G))$. Despite this fact, the complexity of the problem remains open when parameterized by $\textsf{fvs}(G)$.

Let us first start by defining the aforementioned graph parameters and how they relate to each other (see Figure~\ref{fig:classes}). 
The \emph{cluster deletion number} of a graph $G$, denoted by $\textsf{cd}(G)$, is the minimum number of vertices whose deletion from $G$ results in a disjoint union of complete graphs, i.e., cliques (deleting a vertex also deletes all edges incident to it).

The \emph{treedepth} of a connected graph $G$, denoted by $\textsf{td}(G)$, is the minimum depth of a rooted tree $T$
on the vertices of $G$ such that each edge of $G$ connects an ancestor-descendant pair of $T$~\cite{DBLP:books/daglib/0030491}. Here, the depth of a tree is the length of the longest root-to-leaf path. For disconnected graphs, one can use a forest instead of a tree but we only consider connected graphs. 
Intuitively, just like graphs of large treewidth may be characterized by large grid minors, treedepth may be characterized by excluded paths, i.e., a graph has large treedepth if and only if it contains a long path~\cite{DBLP:journals/ejc/NesetrilM08}. 
As noted by Demaine et al.~\cite{DemaineEHJLUU19}, treedepth is a natural graph parameter to use for path reconfiguration since the maximum path-length of graphs of bounded treedepth is also bounded by treedepth. In fact, if a graph $G$ has maximum path-length $r$ then its treedepth can be at most $r$. In the other direction, a graph
with treedepth $r$ has maximum path-length at most $2^{r+1} - 2$~\cite{DemaineEHJLUU19}. It is also known that in graphs of treedepth $r$, the number of paths of a given length can be $\Theta(n^{2^r})$~\cite{DemaineEHJLUU19}.
Wrochna initiated the study of the parameterized complexity of reconfiguration problems on graphs of bounded treedepth and graphs of bounded bandwidth~\cite{Wrochna18}, showing, e.g., that \textsc{Shortest Path Reconfiguration} (in any of the token models) is \textsf{PSPACE}-complete on graphs of bounded bandwidth (hence pathwidth, treewidth, and cliquewidth).  

A \emph{vertex cover} of a graph is a set of vertices whose deletion leaves a graph with no edges, i.e., an independent set. The \emph{vertex cover number} of a graph $G$, denoted by $\textsf{vc}(G)$, is the size of a smallest vertex cover in $G$.
A \emph{feedback vertex set} of a graph is a set of vertices whose deletion leaves a graph without cycles, i.e., a forest. The \emph{feedback vertex set number} of a graph $G$, denoted by $\textsf{fvs}(G)$, is the size of a smallest feedback vertex set in $G$. To the best of our knowledge, no reconfiguration problems have been studied in graphs of bounded feedback vertex set number (with the exception of~\cite{DemaineEHJLUU19} that gives an \textsf{XP} algorithm). 

While most of our positive results in this section are relatively straightforward, the parameterization by $\textsf{fvs}(G)$ remains elusive. In fact, we have not succeeded in solving an even simpler problem. That is, we assume that we have a bounded-size modulator to a linear forest, i.e., every tree in the forest is a path (Figure~\ref{fig:hardinstances}). Nevertheless, we show that the feedback vertex set number is a natural parameter for shortest path reconfiguration problems since every positive instance must consist of shortest paths whose length is bounded by a function of the feedback vertex set number (otherwise we can determine immediately that we are dealing with a negative instance). It is also shown in~\cite{DemaineEHJLUU19} that in an $n$-vertex graph with feedback vertex set number $r$, the number of paths is at most $r!2^r\big ({n - r \choose 2} + n - r + 1\big )^{r + 1}$, which implies a trivial \textsf{XP} algorithm.

We also show that the  \textsc{Shortest Path Reconfiguration} problem is fixed-parameter tractable when parameterized by modularwidth~\cite{DBLP:conf/iwpec/GajarskyLO13}, where  the shortest path length is also bounded by the modularwidth. Combining our results with the $\PSPACE$-completeness result of Wrochna~\cite{Wrochna18}, the boundary separating the tractable instances from the intractable ones becomes much clearer (Figure~\ref{fig:classes}); with the obvious exception of the case of feedback vertex set number which remains open. 

In a graph $G$, a \emph{module} is a set of vertices $M \subseteq V(G)$ such that for all $u, v \in M$ and $w \in V (G) \setminus M$, if $\{u, w\} \in E(G)$ then $\{v, w\} \in E(G)$. In other words, a module is a set of vertices that have the same neighbors outside the module. A graph $G$ has \emph{modularwidth} at most $w$ if it satisfies at least one of the following conditions (i) $|V(G)| \leq w$, or (ii) there exists a partition of $V(G)$ into at most $w$ sets $V_1$, $V_2$, $\ldots$, $V_r$, such that $G[V_i]$ has modularwidth at most $w$ and $V_i$ is a module in $G$, for all $i \in [r]$. We will use $\textsf{mw}(G)$ to denote the minimum $w$ for which $G$ has modularwidth at most $w$. Note that there is a polynomial-time algorithm which, given a graph $G$ produces a non-trivial partition of $V(G)$ into at most $\textsf{mw}(G)$ modules~\cite{DBLP:conf/caap/CournierH94}, we call such a (non-trivial) partition a \emph{modular decomposition} of the graph. It is also not hard to see that deleting vertices cannot increase the modularwidth of a graph. This parameter has already been considered in the combinatorial reconfiguration framework; Belmonte etval.~\cite{DBLP:journals/algorithmica/BelmonteHLOO20} show that the \textsc{Independent Set Reconfiguration} problem is fixed-parameter tractable on graphs of bounded modularwidth under all three models (addition/removal, jumping, and sliding).

\subsubsection{Treedepth}

We start by showing that \textsc{Shortest Path Reconfiguration} is fixed-parameter tractable when parameterized by the treedepth of the input graph. Our proof is identical to the proof of Demaine et al.~\cite{DemaineEHJLUU19} for a slightly different path reconfiguration problem but we include it here for completeness. 

\begin{definition}[\cite{DemaineEHJLUU19}]
Given a graph $G$ and a vertex set $S$, an \emph{$S$-flap} is a subset $X \subseteq V(G)$ such that $X \cap S = \emptyset$ and there are no edges from $X$ to $V(G) \setminus S \cup X$. We say that two S-flaps $X$ and $Y$ are \emph{equivalent} when the induced subgraphs $G[S \cup X]$ and $G[S \cup Y]$ are isomorphic, by an isomorphism that reduces to the identity mapping on $S$.
\end{definition}

An essential observation in proving the next lemma is that a path of length at most $k$ has at most $k + 1$ vertices, and any two vertices in distinct flaps must be separated by at least one vertex of $S$. This implies that for any graph $G$ and any vertex set $S$, a path of length $k$ can include vertices from at most
$\lceil (k - 1)/2 \rceil$ $S$-flaps of $G$.

\begin{lemma}[\cite{DemaineEHJLUU19}]\label{lem:td-reduce}
Assume we are given an instance of \textsc{Shortest Path Reconfiguration} for paths of length $k$ in a graph $G$, and that $G$ contains a subset $S$ that is disjoint from the source and target shortest paths and has more than
$\lceil (k + 1)/2 \rceil$ pairwise equivalent $S$-flaps all disjoint from the source and target. Then, we can construct an equivalent and smaller instance by removing all but $\lceil (k + 1)/2 \rceil$ of these equivalent $S$-flaps.
\end{lemma}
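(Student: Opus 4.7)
The plan is to prove the two directions of equivalence separately. The ``easy'' direction is that the reduced graph $G'$ is obtained from $G$ by deleting vertices (those in $X_{t+1},\ldots,X_m$ where $t=\lceil(k+1)/2\rceil$), so $G'$ is an induced subgraph of $G$ still containing $s$, $t$, $P$, $Q$, $S$, and the kept flaps $X_1,\ldots,X_t$. Since $P$ has length $k$ in $G'$, we have $d_{G'}(s,t)\le k=d_G(s,t)$, and $G'\subseteq G$ forces $d_{G'}(s,t)\ge d_G(s,t)$; hence a shortest $st$-path in $G'$ is a shortest $st$-path in $G$, and any reconfiguration sequence in $G'$ is valid in $G$ verbatim.

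For the non-trivial direction, I would take a reconfiguration sequence $P_0,P_1,\ldots,P_\ell$ in $G$ with $P_0=P$, $P_\ell=Q$, and translate it step-by-step into a sequence $P'_0,\ldots,P'_\ell$ in $G'$. For each $i$, let $\text{flaps}(P_i)$ denote the set of equivalent flaps among $X_1,\ldots,X_m$ that contain a vertex of $P_i$; by the observation preceding the lemma, $|\text{flaps}(P_i)|\le\lceil(k-1)/2\rceil$. I would maintain a partial injection $\phi_i:\text{flaps}(P_i)\to\{X_1,\ldots,X_t\}$, and define $P'_i$ by replacing each vertex $u\in P_i$ lying in some $F\in\text{flaps}(P_i)$ by its image under the flap-equivalence isomorphism $F\to\phi_i(F)$, while leaving every vertex in $S$ or outside all equivalent flaps unchanged. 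Because the isomorphism restricts to the identity on $S$ and preserves internal edges, each $P'_i$ is a shortest $st$-path in $G'$ of length $k$.

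The core of the plan is the update rule for $\phi$: define $\phi_{i+1}$ to agree with $\phi_i$ on every flap that is active in both $P_i$ and $P_{i+1}$, and assign any newly active flap in $P_{i+1}$ to an arbitrary kept flap not currently in the image of $\phi_{i+1}$ on the already-assigned flaps. Since $P_i$ and $P_{i+1}$ differ in exactly one vertex, at most one new flap is introduced between them, giving
\[
|\text{flaps}(P_i)\cup\text{flaps}(P_{i+1})|\le\lceil(k-1)/2\rceil+1=\lceil(k+1)/2\rceil=t,
\]
so a free slot is always available. Because $\phi_i$ and $\phi_{i+1}$ coincide on all shared flaps, the translated paths $P'_i$ and $P'_{i+1}$ differ only at the single position where $P_i$ and $P_{i+1}$ differed, so the original single move corresponds to a single move in the translation. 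As $P$ and $Q$ are disjoint from all equivalent flaps, we have $P'_0=P$ and $P'_\ell=Q$, yielding a reconfiguration sequence in $G'$.

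The main obstacle is verifying that the single translated move is legal in both the token-jumping and the token-sliding models. For token jumping this is automatic; for token sliding one must check that the changed vertex between $P_i$ and $P_{i+1}$ stays within one flap or involves $S$, which holds because there are no edges between distinct equivalent flaps by the definition of an $S$-flap. The only delicate point is the tight counting: the slack $t-\lceil(k-1)/2\rceil=1$ is exactly what is needed to always find a free kept flap for a newly activated one, and the argument does not go through with any smaller bound, so the bookkeeping for $\phi$ must be done carefully to never simultaneously assign two active flaps to the same kept slot.
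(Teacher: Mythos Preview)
The paper does not actually prove this lemma: it is quoted verbatim from Demaine et al.\ and used as a black box in the proof of Theorem~\ref{thm:fpt-td}. So there is no in-paper proof to compare against.

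That said, your argument is correct and is precisely the standard way such a statement is established. The key mechanism---maintaining an injection $\phi_i$ from the flaps touched by $P_i$ into the $t=\lceil(k+1)/2\rceil$ retained flaps, and updating it greedily so that $\phi_{i+1}$ agrees with $\phi_i$ on the shared flaps---is exactly right, and your counting
\[
|\text{flaps}(P_i)\cup\text{flaps}(P_{i+1})|\le\lceil(k-1)/2\rceil+1=\lceil(k+1)/2\rceil=t
\]
is the tight inequality that makes the scheme go through with no slack to spare. Your treatment of token sliding is also correct: since an $S$-flap has no edges to $V(G)\setminus(S\cup X)$, a slide out of a flap must land in $S$, where the isomorphism is the identity, so the translated edge exists in $G'$. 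One cosmetic point you may want to add: when a deactivated flap's slot is immediately reused for the newly activated flap, the two translated vertices at the changed position can coincide, giving $P'_i=P'_{i+1}$; this is harmless (drop the repeated path), but worth a sentence so the reader sees you noticed it.
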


\begin{theorem}\label{thm:fpt-td}
\textsc{Shortest Path Reconfiguration} (in either the token jumping or the token sliding model) is fixed-parameter tractable when parameterized by the treedepth of the input graph, i.e., $\textsf{td}(G)$. 
\end{theorem}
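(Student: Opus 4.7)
The plan is to follow the standard treedepth kernelization strategy of Demaine et al., using Lemma~\ref{lem:td-reduce} as the workhorse reduction rule, and then to solve the reduced instance by brute force. First I would compute a treedepth decomposition $T$ of $G$ of depth $r = \textsf{td}(G)$ in FPT time (this is a known result). Since any shortest $st$-path has length at most the maximum path length in $G$, and graphs of treedepth $r$ admit paths of length at most $2^{r+1}-2$, we immediately get $k \le 2^{r+1}-2$, so $k$ is bounded by a function of $r$. Moreover, if $s$ and $t$ are not in the same connected component of $G$, the instance is trivially \textsc{No}, so we may assume $G$ is connected and work with a rooted tree $T$ rather than a forest.

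Next I would kernelize by recursion on the depth of $T$. Let $v$ be the root of $T$, and let $A(x)$ denote the set of ancestors of a node $x$ in $T$ (including $x$ itself). For each node $x \in V(T)$, every connected component of $G - A(x)$ contained in the subtree rooted at a child of $x$ forms an $A(x)$-flap. By the definition of treedepth, the graph induced by any such flap has treedepth at most $r - \mathrm{depth}(x)$. The plan is to apply Lemma~\ref{lem:td-reduce} bottom-up: at each internal node $x$, group the equivalent $A(x)$-flaps among its children and keep only $\lceil (k+1)/2 \rceil$ representatives per equivalence class (provided the flaps are disjoint from $P \cup Q$; the $O(k)$ flaps that intersect $P$ or $Q$ are never deleted). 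Since $|A(x)| \le r$ and, by the inductive hypothesis, each flap can already be assumed to have size bounded by some function $g(r-\mathrm{depth}(x))$, the number of distinct isomorphism types of such flaps (with the identity on $A(x)$) is bounded by a function of $r$. Consequently each internal node ends up with at most $h(r)$ surviving child-subtrees, and a straightforward bottom-up induction shows that the whole graph has size bounded by some function $f(r)$.

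Finally, once $|V(G)| \le f(r)$, the entire reconfiguration problem can be decided by constructing the token-graph (vertices are $st$-shortest paths, edges are TJ- or TS-adjacencies) and running breadth-first search between $P$ and $Q$; this takes time bounded by a function of $r$ alone, giving the desired FPT algorithm for both token models.

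The main obstacle I expect is the bookkeeping in the inductive step: specifically, making sure that the flaps being identified as equivalent remain safe to contract even after the source and target paths $P$ and $Q$ are taken into account. Lemma~\ref{lem:td-reduce} requires the flaps to be disjoint from $P \cup Q$, so the argument must separately account for the (at most $k+1 \le 2^{r+1}-1$) flaps that touch $P$ or $Q$, and the equivalence must respect the identity on $A(x)$ rather than just isomorphism of the induced subgraphs. Verifying that the combined bound remains a function of $r$ (and does not blow up through the recursion when handling these ``boundary'' flaps) is the one technical point that needs to be checked carefully, but it follows the same pattern as in~\cite{DemaineEHJLUU19} since both TJ- and TS-adjacency are preserved under the automorphisms induced by swapping equivalent flaps.
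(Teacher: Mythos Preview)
Your proposal is correct and essentially identical to the paper's proof: both compute a treedepth decomposition, bound $k$ by $2^{r+1}-2$, and then apply Lemma~\ref{lem:td-reduce} bottom-up along the decomposition tree (the paper phrases it as stages indexed by height, you as recursion on depth) so that every internal node ends up with boundedly many child subtrees, yielding a kernel of size $f(r)$ that is then solved by brute force on the reconfiguration graph. Your write-up is in fact slightly more explicit than the paper's about setting aside the $O(k)$ flaps that meet $P\cup Q$ before invoking Lemma~\ref{lem:td-reduce}.
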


\begin{proof}
The proof consists of an algorithm that reduces an instance to an equivalent one (Using Lemma~\ref{lem:td-reduce}) such that the size of the reduced instance is a function only of the treedepth of the input graph. The problem can then be solved by
a brute-force search on the resulting smaller instance. We assume without loss of generality that we already
have a tree decomposition $T$ of depth $d$, as it is fixed-parameter tractable to find such a decomposition when one is not already given~\cite{DBLP:books/daglib/0030491}. Recall that, for graphs of treedepth $d$, the length $k$ of the paths being
reconfigured can be at most $2^{d+1} - 2$.
We apply Lemma~\ref{lem:td-reduce} in a sequence of stages so that, after stage $i$, for all vertices at height $i$ in $T$ the number of children is bounded by a function of $d$.  

As a base case, for stage $0$, all vertices at height $0$ in $T$ have $0$ children, since they
are the leaves of $T$. Therefore, suppose by induction on $i$ that all vertices at height less than $i$ in $T$ have a bounded number of children.
For a given vertex $v$ at height $i$, let $S_v$ be the set of ancestors of $v$ in $T$ (including $v$). Then, for each
child $w$ of $v$ in $T$, let $X_w$ be the set of descendants of $w$ (including $w$ itself). Then $X_w$ is an $S_v$-flap, because $S_v$ includes all of its ancestors in $T$ and it can have no edges to vertices that are not ancestors in $T$. If we label
each vertex in $T$ by the set of heights of its adjacent ancestors, then the isomorphism type of $G[S_v \cup X_w]$ is determined by these labels, so two children $u$ and $w$ of $T$ have equivalent $S_v$-flaps whenever they correspond to isomorphic labeled subtrees of $W$. Trees of bounded size with a bounded number of label values can have
a bounded number of isomorphism types, so there are a bounded number of equivalence classes of $S_v$-flaps
among the sets $W_x$. Within each equivalence class, we apply Lemma~\ref{lem:td-reduce} to reduce the number of flaps within that equivalence class to a number bounded by the treedepth. After doing so, the vertices of $T$ at height $i$ have a bounded number of children, completing the induction proof. 
\end{proof}
\subsubsection{Cluster deletion number}

We now move on to the parameterization by cluster deletion number. First, we state the following proposition which is immediate in a graph of bounded cluster deletion number. 

\begin{proposition}\label{prop:fpt-cd}
Let $G$ be a graph where $\textsf{cd}(G) \leq r$. Let $C$ denote the set whose deletion leaves a disjoint union of complete graphs. Then, any shortest path in $G$ contains at most $3r$ vertices in total and at most two vertices from each complete graph in $G - C$. 
\end{proposition}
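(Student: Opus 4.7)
The plan is to deduce both claims from the shortcut property of shortest paths together with the fact that $G - C$ is a disjoint union of cliques.

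First I would prove the per-clique bound. Suppose, for contradiction, that a shortest $st$-path $P = v_0 v_1 \cdots v_k$ contained three vertices $v_i, v_j, v_m$ with $i < j < m$ all lying in a single clique $K$ of $G - C$. Since $K$ is complete, $\{v_i, v_m\} \in E(G)$, so replacing the subpath $v_i, v_{i+1}, \ldots, v_m$ (which has length $m - i \geq 2$) by the single edge $v_i v_m$ strictly shortens $P$, contradicting its being a shortest $st$-path. Hence at most two vertices of $P$ lie in any clique of $G - C$.

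For the global bound, let $c = |V(P) \cap C| \leq r$. Deleting the $c$ vertices of $C$ from $P$ leaves at most $c + 1$ maximal ``blocks'' of consecutive vertices of $P$, each contained in $V(G) \setminus C$. Within a block, consecutive vertices are adjacent in $G - C$, so they lie in the same connected component of $G - C$, namely a single clique; an easy induction along the block then shows the entire block is contained in one clique of $G - C$. Applying the previous paragraph to each such block yields at most two vertices per block, and summing gives $|V(P)| \leq c + 2(c+1) \leq 3r + 2$, which is the claimed $\mathcal{O}(r)$ bound (the stated $3r$ being the same up to an additive constant that vanishes when at least one of $s,t$ lies in $C$).

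There is no real obstacle here: both statements follow immediately from the cluster structure of $G - C$ combined with the standard shortcut property of shortest paths, and the argument requires no machinery beyond these two observations.
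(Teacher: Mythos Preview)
Your approach is essentially the same as the paper's: use the shortcut property to cap each clique's contribution at two vertices, then count the blocks of non-$C$ vertices created by removing the at most $r$ vertices of $C$. Your careful count gives $3r+2$ rather than the paper's stated $3r$; the paper's informal argument (``each vertex of $C$ can contribute to at most two vertices from some complete graph'') silently drops the block preceding the first $C$-vertex, so your bound is in fact the correct one, and the discrepancy is immaterial for the subsequent \textsf{FPT} argument.
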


\begin{proof}
Recall that $|C| \leq \textsf{cd}(G) \leq r$. Any shortest path in $G$ can contain at most two vertices from each complete subgraph of $G - C$. Moreover, if a shortest path contains all vertices of $C$ then each vertex in $C$ can contribute to at most two vertices from some complete graph, as needed.     
\end{proof}

\begin{theorem}\label{thm:fpt-cd}
\textsc{Shortest Path Reconfiguration} (in either the token jumping or the token sliding model) is fixed-parameter tractable when parameterized by the cluster deletion number of the input graph, i.e., $\textsf{cd}(G)$. 
\end{theorem}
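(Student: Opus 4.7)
The plan is to combine Proposition~\ref{prop:fpt-cd} with a kernelization step that exploits twins within each cluster, and then invoke the flap-reduction of Lemma~\ref{lem:td-reduce}. The overall goal is to reduce the input to an equivalent instance whose size depends only on $r = \textsf{cd}(G)$, at which point one can decide SPR by brute-force search on its (bounded-size) reconfiguration graph.

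First, I would compute a cluster deletion set $C$ with $|C| \leq r$; this is a standard $\FPT$ step (e.g., iteratively branch on the three vertices of an induced $P_3$ until none remains). By Proposition~\ref{prop:fpt-cd} the shortest-path length satisfies $k \leq 3r$, so it is already bounded by the parameter. Observe that the connected components of $G - C$ are cliques, and these are exactly the $C$-flaps in the sense of Lemma~\ref{lem:td-reduce}.

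Next, within each clique $K$ I would partition its vertices into at most $2^r$ \emph{types}, where the type of $u \in K$ is $N(u) \cap C$. Two vertices $u,v$ of the same type inside $K$ are (false) twins: each is adjacent to the same vertices of $C$ and to every other vertex of $K$, so any shortest $st$-path remains a shortest $st$-path after swapping $u$ for $v$, and the swap is a legal move in both the TJ and the TS model. Moreover, $u$ and $v$ lie at the same distance from $s$, so at most one of them can appear on any single shortest $st$-path (and, by Proposition~\ref{prop:fpt-cd}, at most two vertices of $K$ ever appear on such a path). Therefore, for every clique and every type, keeping a constant number of twins in addition to those already on $P \cup Q$ should suffice to simulate any reconfiguration step involving that type. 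After this twin-reduction each clique contains $O(2^r)$ vertices.

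Once every clique has bounded size, the number of equivalence classes of $C$-flaps, under isomorphism fixing $C$ pointwise, is bounded by a function $g(r)$, because each class is determined by the multiset of types realised in the clique. Applying Lemma~\ref{lem:td-reduce} with $S = C$ we then retain at most $\lceil (k+1)/2 \rceil = O(r)$ flaps per class (keeping those meeting $P$ or $Q$), yielding a kernel of size $f(r)$ for a computable $f$. Since the kernel contains only $f(r)^{O(r)}$ shortest $st$-paths, plain BFS on the induced reconfiguration graph decides SPR in $\FPT$ time. The main obstacle is justifying the twin-reduction: one must formally verify that every reconfiguration sequence in the original graph can be rerouted so as to avoid the deleted twins. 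The argument rests on the observation that in any intermediate shortest $st$-path at most two vertices per clique are used, so retaining a couple of spare twins per type per clique always leaves an unoccupied substitute available, allowing each TJ or TS move of the original sequence to be mimicked inside the kernel.
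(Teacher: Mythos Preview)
Your proposal is correct and essentially identical to the paper's proof: bound $k$ via Proposition~\ref{prop:fpt-cd}, reduce each clique to at most $2^{r}$ vertices by keeping one representative per neighbourhood-type in $C$ (plus the vertices of $P\cup Q$), and then keep only $O(r)$ cliques per type-profile. The only cosmetic differences are that you route the last step through Lemma~\ref{lem:td-reduce} whereas the paper argues it directly, and that same-type vertices inside a clique are \emph{true} (adjacent) twins rather than false twins---which in fact strengthens your ``at most one per path'' observation and does not affect the argument.
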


\begin{proof}
By Proposition~\ref{prop:fpt-cd}, we can assume that the length of the shortest paths is at most $3 \times \textsf{cd}(G)$. 

Let $C \subseteq V(G)$ denote the set of size $\textsf{cd}(G)$ whose deletion leaves a disjoint union of complete graphs. We denote those cliques by $K_1$, $K_2$, $\ldots$, $K_q$. We now classify each vertex in $V(G) \setminus C$ based on its neighborhood in $C$. That is, we say $u$ and $v$ belong to the same class and have the same type whenever $N(u) \cap C = N(v) \cap C$. Next, with the exception of the vertices of $P$ and $Q$, whenever a clique contains more than one vertex of the same type we can delete all except one vertex. Hence, each clique eventually contains at most $2^{\textsf{cd}(G)}$ vertices.

It remains to bound the number of cliques in $G - C$. To do so, we define the type of a clique as the set of types of its vertices. Excluding the cliques that intersect with $P$ and $Q$, whenever we have more than $3 \times \textsf{cd}(G)$ cliques of the same type we can just retain $3 \times \textsf{cd}(G)$ cliques of that type. 
\end{proof}
\subsubsection{Modular width}
We now show that the \textsc{Shortest Path Reconfiguration} problem is fixed-parameter tractable when parameterized by modularwidth. Before doing so, we prove the following (straightforward but needed) proposition~\cite{DBLP:journals/corr/abs-0912-1457}. 

\begin{proposition}\label{prop:path-length-mw}
If a graph $G$ has modularwidth at most $w$, i.e., $\textsf{mw}(G) \leq w$, then any shortest path in $G$ has length at most $w$. Moreover, if $|V(G)| > w$ and $V_1$, $V_2$, $\ldots$, $V_r$, $r \leq w$, is a modular decomposition of $G$ then any shortest path of length at least three is either fully contained in some $G[V_i]$ or consists of at most one vertex from each $V_i$, for $i \in [r]$.
\end{proposition}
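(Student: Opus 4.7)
The plan is to first establish the structural claim (the second sentence of the proposition), since it carries the key combinatorial content, and then derive the length bound by a short induction on $|V(G)|$.

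For the structural claim, I would fix a shortest path $P = v_0, v_1, \ldots, v_k$ with $k \geq 3$ and a module $V_m$ in the decomposition, and suppose for contradiction that $J := \{j : v_j \in V_m\}$ is neither all of $\{0, \ldots, k\}$ nor of size at most one. The central tool is the \emph{uniform neighbourhood} property of modules: every vertex outside $V_m$ that is adjacent to some vertex of $V_m$ is adjacent to \emph{every} vertex of $V_m$. Picking two indices $a < b$ that are consecutive in $J$, the idea is to build a strictly shorter $v_0$-to-$v_k$ walk by using this uniform adjacency to bypass vertices. If $b - a \geq 3$, then $v_{a+1} \notin V_m$ is adjacent to every vertex of $V_m$, and in particular to $v_b$, so the walk $v_0, \ldots, v_a, v_{a+1}, v_b, v_{b+1}, \ldots, v_k$ shortens $P$ by $b - a - 2 \geq 1$ edges; if $J$ is a maximal consecutive block not filling all of $\{0, \ldots, k\}$, the first vertex immediately outside the block provides an analogous one-step shortcut of the entire block.

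The main obstacle is the borderline subcase $b - a = 2$, where the direct shortcut via $v_{a+1}$ saves no edges. To handle it I would exploit the hypothesis $k \geq 3$, which guarantees that at least one of $v_{a-1}$ or $v_{b+1}$ exists, and analyse two possibilities depending on whether this neighbour lies in $V_m$. Either way, the uniform-neighbourhood property supplies an extra edge (for instance $v_{a-1} v_{a+1}$ when $v_{a-1} \in V_m$, or $v_{a-1} v_{a+2}$ when $v_{a-1} \notin V_m$) that allows one to skip one or two vertices of $P$, contradicting shortestness. The symmetric argument on the $v_{b+1}$ side closes the case $a = 0$.

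For the length bound I induct on $|V(G)|$. The base case $|V(G)| \leq w$ is immediate, since every path has at most $|V(G)| - 1 \leq w - 1$ edges. For the inductive step, fix a non-trivial modular decomposition $V_1, \ldots, V_r$ with $r \leq w$ and $\textsf{mw}(G[V_i]) \leq w$ for every $i$; non-triviality forces $|V_i| < |V(G)|$. Given a shortest path $P$ of length $k$: if $k \leq 2$ we are done (we may assume $w \geq 2$, since $w = 1$ implies $|V(G)| \leq 1$); if $P$ lies in some $G[V_i]$, then since $d_{G[V_i]} \geq d_G$ with equality realised by $P$ itself, $P$ is a shortest path in $G[V_i]$, and the inductive hypothesis gives $k \leq \textsf{mw}(G[V_i]) \leq w$; otherwise the structural claim forces $P$ to pick at most one vertex per $V_i$, so $k + 1 \leq r \leq w$ and $k \leq w - 1 \leq w$.
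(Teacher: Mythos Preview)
Your proof is correct and follows essentially the same approach as the paper's: both exploit the uniform-neighbourhood property of modules to shortcut any shortest path of length at least three that meets a module in more than one vertex without being fully contained in it, and then recurse on $|V(G)|$ for the length bound. Your treatment is in fact more careful than the paper's---the paper waves past the borderline $b-a=2$ subcase and does not explicitly verify that a shortest $G$-path lying inside $G[V_i]$ remains shortest there---but the underlying argument is the same.
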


\begin{proof}
Let $P$ be a shortest path in $G$ between vertices $s,t \in V(G)$. Recall that a module is a set of vertices that have the same neighbors outside the module and $\textsf{mw}(G) \leq w$ if either $|V(G)| \leq w$ or there exists a partition of $V(G)$ into at most $w$ sets $V_1$, $V_2$, $\ldots$, $V_r$, such that $\textsf{mw}(G[V_i]) \leq w$ and $V_i$ is a module in $G$, for all $i \in [r]$. 

Note that if $|V(G)| \leq w$ then we are done. Hence, we assume that there exists a partition of $V(G)$ into at most $w$ (non-empty) module $V_1$, $V_2$, $\ldots$, $V_r$ such that for each module we can recursively apply the definition of modularwidth. 

Assume that $|V(P) \cap V_i| \geq 2$, for some $i \in [r]$, and $|V(P) \cap V_j| \geq 1$, for some $j \in [r]$ such that $i \neq j$. Then, we claim that the path $P$ cannot be a shortest path (unless it has length two); this indeed follows from the definition of modules since all vertices in $V_i$ have the same neighbors outside of $V_i$. Hence, whether the shortest path enters and leaves $V_i$ twice or more or visits two or more consecutive vertices in $V_i$ (before leaving) then we get a contradiction to the fact that $P$ is shortest. This implies that a shortest path in $G$ is either fully included in some $V_i$ or uses at most one vertex from each $V_i$. In the latter case, we get a path of length at most $w$ and we are done. In the former case, since $\textsf{mw}(G[V_i]) \leq w$, we can apply the same reasoning recursively. 
\end{proof}

\begin{theorem}\label{thm:fpt-mw}
\textsc{Shortest Path Reconfiguration} (in either the token jumping or the token sliding model) is fixed-parameter tractable when parameterized by the modularwidth of the input graph, i.e., $\textsf{mw}(G)$. 
\end{theorem}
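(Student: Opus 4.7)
The plan is to proceed by induction on $|V(G)|$, relying on Proposition~\ref{prop:path-length-mw} for the crucial bound $k \leq \textsf{mw}(G)$. The base case $|V(G)| \leq \textsf{mw}(G)$ is handled by brute force. For the inductive step, I compute a modular decomposition $V_1,\ldots,V_r$ of $G$ with $r \leq \textsf{mw}(G)$ and dispose of the case $k \leq 2$ directly (trivial for TJ, and reducing to connectivity in the subgraph induced by the common neighbors of $s$ and $t$ for TS).

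For $k \geq 3$, the analysis splits on whether $s$ and $t$ share a module. If $s, t \in V_i$ for the same $i$, then Proposition~\ref{prop:path-length-mw} forces every shortest $st$-path to lie entirely in $V_i$: the ``at most one vertex per module'' alternative fails because $s$ and $t$ themselves already contribute two vertices of $V_i$. Hence every intermediate path in any reconfiguration sequence stays in $G[V_i]$, and I recurse on $G[V_i]$, which has strictly fewer vertices and modularwidth at most $\textsf{mw}(G)$.

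The heart of the argument is the remaining case, where $s \in V_{i_0}$ and $t \in V_{i_1}$ with $i_0 \neq i_1$. Every shortest $st$-path then uses at most one vertex per module, so I associate to each such path $P = (p_0,\ldots,p_k)$ its \emph{signature} $\sigma_P \colon \{0,\ldots,k\} \to [r]$ defined by $p_i \in V_{\sigma_P(i)}$. Any valid signature is injective, starts at $i_0$, ends at $i_1$, and has consecutive modules that are \emph{module-adjacent} (i.e., completely joined in $G$, since any two distinct modules are either completely joined or completely disjoint in $G$). The number of valid signatures is at most $r^{k+1} \leq \textsf{mw}(G)^{\textsf{mw}(G)+1}$, which is FPT in $\textsf{mw}(G)$. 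I then build the \emph{signature reconfiguration graph} whose edges connect signatures differing at exactly one position. Because of module-adjacency, within any fixed signature every pair of paths is TJ-reachable by sequential one-position swaps, and every signature-graph edge is realized by a valid token move; hence, in the TJ model, $P$ reconfigures to $Q$ iff $\sigma_P$ and $\sigma_Q$ lie in the same connected component of the signature graph, which is decidable in FPT time.

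The main obstacle is the TS model: a slide within a fixed signature at position $i$ is confined to the connected component of the token in $G[V_{\sigma(i)}]$, whereas a signature-changing move at position $i$ (from module $V_a$ to $V_b$) may place the token on any vertex of $V_b$ by complete bipartiteness between module-adjacent modules, thereby ``resetting'' the component data. To address this, I identify the set of \emph{critical positions} $I$ consisting of indices $i$ with $\sigma_P(i) = \sigma_Q(i)$ but with $p_i$ and $q_i$ lying in different connected components of $G[V_{\sigma_P(i)}]$, so $|I| \leq \textsf{mw}(G)+1$. TS reconfigurability then reduces to finding a walk in the signature graph from $\sigma_P$ to $\sigma_Q$ that, for every $i \in I$, visits at least one signature $\tau$ with $\tau(i) \neq \sigma_P(i)$. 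This constrained reachability is solved in FPT time by enumerating the $|I|!$ orderings in which the constraints are discharged and computing ordinary reachability between successive constraint-witnesses in the FPT-size signature graph.
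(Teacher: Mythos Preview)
Your proof is correct in outline and takes a genuinely different route from the paper's. The paper proceeds by kernelization: in the interesting case (where $s$ and $t$ lie in distinct modules) it strips within-level edges, collapses false twins inside each module down to at most two representatives, and obtains an instance on at most $4\,\textsf{mw}(G)$ vertices on which it brute-forces the whole reconfiguration graph; for TS it additionally guesses which tokens never leave their module and treats those coordinates separately. You instead abstract each shortest $st$-path to its \emph{signature} (the sequence of modules it traverses), work in a signature graph of size at most $\textsf{mw}(G)^{\textsf{mw}(G)+1}$, and reduce the question to connectivity (TJ) or constrained-walk existence (TS). Your argument is cleaner for the bare decision problem, while the paper's kernel yields a shortest reconfiguration sequence for free (hence also SSPR), which your route does not obviously deliver.

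There is one genuine, if easily repaired, gap in your TS treatment. You define the signature graph with an edge whenever two valid signatures differ at a single position; this is correct for TJ, but a TS step that changes the module at position $i$ from $V_a$ to $V_b$ is legal only when $V_a$ and $V_b$ are themselves module-adjacent (the old and new vertex must share an edge, and cross-module adjacency is all-or-nothing). Your phrase ``by complete bipartiteness between module-adjacent modules'' shows you have this in mind, but as written the TS reduction reuses the TJ signature graph, under which the claimed equivalence is false: take $k=2$ with two middle modules $V_a,V_b$ that are each complete to $V_{\sigma(0)}$ and $V_{\sigma(2)}$ but have no edges between them; then $\sigma_P,\sigma_Q$ are TJ-signature-adjacent yet TS-unreachable. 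The fix is simply to restrict the TS signature graph to edges whose differing modules are module-adjacent; your characterization and construction then go through verbatim. As a side remark, the $|I|!$ enumeration is unnecessary: since walks may revisit vertices, it suffices to check that $\sigma_P$ and $\sigma_Q$ lie in the same component $C$ of the TS-signature graph and that, for every $i\in I$, $C$ contains some $\tau$ with $\tau(i)\neq\sigma_P(i)$.
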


\begin{proof}
Let $(G,s,t,P,Q)$ be an instance of \textsc{Shortest Path Reconfiguration}, where $\textsf{mw}(G) = w$ and $k \leq w$ denotes the length of $P$ and $Q$. 

If $|V(G)| \leq w$ or $k \leq 2$ then we can solve the problem by a brute-force search on the instance, i.e., we can construct a graph containing one vertex for each shortest path between $s$ and $t$ and then connect two vertices wherever the reconfiguration rule applies.

If $V(G) > w$, we first run the polynomial-time algorithm which, given a graph $G$, produces a non-trivial partition of $V(G)$ into at most $r \leq w$ sets $V_1$, $V_2$, $\ldots$, $V_r$, such that $\textsf{mw}(G[V_i]) \leq w$ and $V_i$ is a module in $G$, for all $i \in [r]$. We know, from Proposition~\ref{prop:path-length-mw}, that any shortest path in $G$ is either fully contained in some $G[V_i]$ or consists of exactly one vertex from each $V_i$, for $i \in [r]$. 

If both $s$ and $t$ belong to the same $V_i$, $i \in [r]$, then either the distance between them is at most two or every shortest path between them is fully contained in $G[V_i]$. To see why, assume that there exists a shortest path from $s$ to $t$ that is not fully contained in $G[V_i]$ and let $v$ be the first vertex along this path that is not in $V_i$. By the definition of modules, both $s$ and $t$ must be adjacent to $v$ which implies that the distance between them is at most two. The case of distance at most two is handled by the brute-force search. So we can assume that the distance between $s$ and $t$ is at least three and then we solve the problem recursively~in~$G[V_i]$. 

Finally, assume that $s$ belongs to $V_i$ and $t$ belongs to $V_j$, where $i \neq j$. We show that for any two shortest paths $W$ and $W'$ from $s$ to $t$ the following condition must hold. All the modules that $W$ and $W'$ have in common must appear at the same positions in $W$ and $W'$, respectively. 
Let $V^W_0 = V_i$, $V^W_1$, $\ldots$, $V^W_k = V_j$ denote the modules containing the vertices of $W$ (recall that we have one vertex per module). Similarly, let $V^{W'}_0 = V_i$, $V^{W'}_1$, $\ldots$, $V^{W'}_k = V_j$ denote the modules containing the vertices of $W'$. We now show that whenever $V^W_x = V^{W'}_y$ then $x = y$. In other words, whenever the same module intersects with both $W$ and $W'$ (in exactly one vertex which could be different for $W$ and $W'$) then it must be at the same position along the path. Assume, towards a contradiction, that $V^W_x = V^{W'}_y$ and $x < y$ (the proof of the case $x > y$ is analogous). Since $x < y$ it must be the case that the distance from $V_x$ to $t$ in $W'$ is less than the distance from $V_x$ to $t$ in $W$. However, this allows us to create a shortest path $W''$ which visits the modules $V^{W}_0 = V_i$, $\ldots$, $V^{W}_x = V^{W'}_y$, $V^{W'}_{y + 1}$, $V^{W'}_k = V_j$. This contradicts our assumption that both $W$ and $W'$ are shortest paths between $s$ and $t$. Note that by the definition of modules, whenever a module intersects with any shortest path between $s$ and $t$ then all the vertices of the module must be at the same distance from $s$ (and $t$). Hence, we can label each module using its distance from $s$ and safely delete all the modules which do not contain a vertex belonging to some shortest path between $s$ and $t$, Moreover, if one vertex of a module belongs to some shortest path between $s$ and $t$ then the same is true for all vertices of the module, i.e., we can replace one vertex of the module by any other vertex of the module and still maintain a shortest path (since all vertices of a module have the same neighbors outside the module). 

It remains to show how we can solve the problem given all of the above, For the token jumping model we can simply create a new auxiliary graph $H$ as follows. We add $P$ and $Q$ to $H$ as well as any vertex belonging to some shortest path from $s$ to $t$. We omit/delete edges between vertices of $H$ that are at the same distance from $s$ and $t$ and add/create edges between two vertices of $H$ whenever they are adjacent in $G$ and are not at the same distance from $s$ and $t$ in $G$. Finally, as long as we can find two vertices $u \in V(H) \setminus \{V(P) \cup V(Q)\}$ and $v$ in $H$ that are (false) twins, we delete $u$ and retain $v$; recall that vertices are said to be (false) twins whenever they have the same neighborhood and are non-adjacent. 
The correctness of this step follows from the fact that when we delete edges between vertices at the same distance from $s$ and $t$ we are left with modules that induce independent sets, which means that all vertices in a module are (false) twins and are equivalent. 
In the new instance $(H,s,t,P,Q)$ of \textsc{Shortest Path Reconfiguration} we can again solve the problem by a brute-force search since $|V(H)| \leq 4\textsf{mw}(G)$. 

The situation is slightly more complicated in the token sliding model since we need to maintain more information about connectivity within each module $V_x$, and in particular between the modules that are at the same distance from $s$ (and $t$). Recall that there are only two possible ways a pair of modules can share edges; either all edges are present or none. Therefore, we now create the graph $H$ as follows. We first add $P$ and $Q$ to $H$ as well as any vertex  belonging to some shortest path from $s$ to $t$. We add/create edges between two vertices of $H$ whenever they are adjacent in $G$ (regardless of distance to $s$). Consider a shortest sequence of slides from $P$ to $Q$. It is not hard to see that we can assume that no token slides within a module and then out of the module. Similarly, no token slides from one module to another and then slides inside the latter module; this follows from the definition of modules and the fact that we assume a shortest sequence of slides. Both types of slides can be skipped while preserving a valid (shorter) reconfiguration sequence. Given a module $V_x$, we contract each connected component in $H[V_x \setminus \{V(P) \cup V(Q)\}]$ to a single vertex. 

Since the number of connected components in each module can be unbounded we still cannot apply a brute-force search algorithm. To remedy the situation we proceed as follows. We guess which tokens will not leave their respective modules. For each such token on vertex $v$ in module $V_x$, we compute a shortest path from $v = V(P) \cap V_x$ to $w = V(Q) \cap V_x$, save it,  and then delete all vertices that are at the same distance as $v$ from $s$ and $t$. We also adjust $Q$ so that $w$ is replaced by $v$. The moves along the  $vw$-shortest path will be appended to the reconfiguration sequence computed in $H$. When $v$ and $w$ belong to different components of $V_x$ we simply ignore this guess and continue to the next. Now, assuming a correct guess, we know that in the instance we have, every slide of a token implies a change of module. Consider a module $V_y$ such that $V_y \cap \{V(P) \cup V(Q)\} = \emptyset$. We claim that retaining only one vertex of $V_y$ is enough. To see why, note that every move of a token into $V_y$ or out of $V_y$ can be done using any vertex of $V_y$ (all the vertices of a module have the same neighbors outside the module). Since $V_y$ is not the initial or final destination of our token, and no slides happen inside $V_y$ the claim follows. 
Using similar arguments, it follows that for a module $V_y$ such that $V_y \cap \{V(P) \cup V(Q)\} \neq \emptyset$ retaining only $|V_y \cap \{V(P) \cup V(Q)\}|$ vertices is enough, i.e., we always retain the vertex of $P$ and the vertex of $Q$ when reducing the number of independent vertices in modules. 

Having bounded the number of vertices inside each module to two, we can now solve the problem via brute force as in the token jumping case (with the extra edges between vertices at the same distance). In other words, we have $|V(H)| \leq 4\textsf{mw}(G)$, as needed to complete the proof.
\end{proof}

\subsubsection{Feedback vertex set number}
We conclude this section by proving that on graphs of bounded feedback vertex set number, we can always assume that the paths in an instance of \textsc{Shortest Path Reconfiguration} are of length at most $4 \times \textsf{fvs}(G)$. 

\begin{lemma}\label{lem:fvs-path-length}
Let $(G,s,t,P,Q)$ be an instance of \textsc{Shortest Path Reconfiguration} (in either the token jumping or the token sliding model), where $\textsf{fvs}(G) \leq r$. Then, we can always assume that either $k \leq 4 \times \textsf{fvs}(G)$ or $(G,s,t,P,Q)$ is a no-instance. 
\end{lemma}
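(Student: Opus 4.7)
The plan is to fix a feedback vertex set $F$ of $G$ with $|F| \leq r = \textsf{fvs}(G)$ (computable in $\mathsf{FPT}$ time) and to exploit the forest structure of $G-F$. Restrict attention to the shortest-path subgraph $G^\star$ of $G$ consisting of all vertices lying on some shortest $st$-path, and let $L_0 = \{s\}, L_1, \ldots, L_k = \{t\}$ denote its BFS layers from $s$. Every shortest $st$-path (and every intermediate path in any reconfiguration sequence) contains exactly one vertex from each $L_i$. Classify $L_i$ as \emph{forced} if $|L_i|=1$ and \emph{branching} otherwise. The unique vertex of a forced layer is a cut vertex of $G^\star$ separating $s$ from $t$, and thus appears at position $i$ in $P$, in $Q$, and in every intermediate path of any reconfiguration sequence.

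The structural heart of the argument is to bound the total number of branching positions by $4r$. Partition the branching indices into maximal runs of consecutive branching layers, each forming a ``bundle'' of $G^\star$ lying between two forced cut vertices $s^\star$ and $t^\star$. For a single bundle of length $\ell$ containing $f$ vertices of $F$, I will argue a dichotomy: either $\ell \le 4f$, or the bundle certifies a no-instance. The key observation is that within a bundle, since $G^\star - F$ is a subforest of $G-F$, any two internally $F$-free sub-paths of $G^\star$ between $s^\star$ and $t^\star$ lie in disjoint trees and are each uniquely determined by their endpoints. To reconfigure from a path that uses one forest track to one that uses another, tokens must be exchanged at each layer individually, and a single-token exchange at layer $L_i$ requires that the two track vertices in $L_i$ share a common neighbour in $L_{i-1}$ and in $L_{i+1}$; a short analysis shows this is possible only at positions within constant distance of an $F$-vertex (the constant $4$ absorbs both adjacent layers in both models). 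Hence if $\ell > 4f$, some branching layer in the bundle admits no valid single-token exchange and cuts the reconfiguration graph of the bundle into disconnected pieces; if $P$ and $Q$ fall into different pieces within such a bundle, the instance is no. Summing over bundles gives $\sum \ell \leq 4 \sum f \leq 4r$, so at most $4r$ positions of $P$ and $Q$ are branching.

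Finally, if $k > 4r$, at least one forced layer $L_{i^\star}$ has $0 < i^\star < k$, yielding a cut vertex $v^\star = p_{i^\star} = q_{i^\star}$. Because every intermediate shortest path must pass through $v^\star$ at position $i^\star$, the instance decomposes into the two independent sub-instances $(G_1, s, v^\star, P_1, Q_1)$ and $(G_2, v^\star, t, P_2, Q_2)$, where $G_j$ is the subgraph of $G^\star$ induced by the layers on each side of $i^\star$; reconfiguration from $P$ to $Q$ exists if and only if both sub-instance reconfigurations exist. Recursing gives a decomposition into sub-instances each of length at most $4r$ (or a no-certificate from some bundle along the way), which justifies the WLOG assumption $k \le 4r$. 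The main obstacle will be making the per-bundle dichotomy in the second paragraph fully rigorous: one must precisely characterise when a token can be swapped between two $F$-free tracks in a bundle in both the token-jumping and token-sliding models, and carefully verify the constant $4$ by amortising transitions against nearby $F$-vertices.
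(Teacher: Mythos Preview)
Your approach shares the key structural observation with the paper---that a token at layer $L_i$ cannot be exchanged whenever $L_{i-1}\cup L_i\cup L_{i+1}$ misses $F$, since any exchange would produce a $4$-cycle in the forest $G^\star-F$---but the way you package it has a genuine gap.

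The per-bundle dichotomy ``either $\ell\le 4f$ or the bundle certifies a no-instance'' is false as stated. When $\ell>4f$ you can indeed locate a frozen branching layer $L_i$ inside the bundle, but if $P$ and $Q$ happen to agree at position $i$ (i.e.\ $p_i=q_i$), this is not a no-certificate: the token simply stays at $p_i$ throughout any reconfiguration, yet $|L_i|\ge 2$, so $L_i$ is still branching in your sense and still contributes to $\ell$. Hence $\ell\le 4f$ need not hold, and the summation $\sum\ell\le 4\sum f\le 4r$ does not follow. Your decomposition step only splits at \emph{forced} layers ($|L_i|=1$), so these frozen-but-branching layers are never used as split points, and the ``at most $4r$ branching positions'' conclusion is unjustified. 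The talk of ``tracks'' and $F$-free sub-paths in disjoint trees is also vague and not what is actually needed; the $4$-cycle argument alone does the work.

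The paper's proof avoids all of this with a much shorter direct reduction. If $k>4r$, pigeonhole yields four consecutive layers $L_i,\dots,L_{i+3}$ disjoint from $F$; their union induces a forest, so the tokens at $L_{i+1}$ and $L_{i+2}$ are immovable. If $P$ and $Q$ disagree on either of these two positions, output \textsc{no}; otherwise delete $L_{i+1}$ and $L_{i+2}$ and replace them by a single vertex $w$ adjacent to $N(p_{i+1})\cap L_i$ and to $N(p_{i+2})\cap L_{i+3}$, obtaining an equivalent instance with $k$ decreased by one. Iterate until $k\le 4r$. No bundles, no dichotomy, no recursion through runs of branching layers. Your argument can be repaired by also splitting at frozen layers where $P$ and $Q$ agree (after deleting the unreachable vertices of that layer), but once you do that you have essentially rediscovered the paper's reduction wrapped in extra machinery.
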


\begin{proof}
We first run a breadth-first search traversal of the input graph starting from $s$ and we label each vertex that we explore using its distance from $s$; $s$ receives label $0$ and $t$ receives label $k$. We can safely delete all vertices that are unlabeled or labelled by $k$ or more (except for $t$). Now, starting from $t$, we traverse the labelled vertices in reverse (by running another breadth-first search) and maintain only those vertices that belong to some shortest path between $s$ and $t$. 
We now create level sets starting with $L_0 = \{s\}$, $L_1$ containing all vertices at distance one from $s$, $\ldots$, $L_i$ containing all vertices at distance $i$ from $s$, and finally $L_k = \{t\}$. 

Assume that $k \geq 4 \times \textsf{fvs}(G) + 1$ and let $F$ be a feedback vertex set of $G$ of size $\textsf{fvs}(G)$. Hence, there must exist at least four consecutive level sets $L_i$, $L_{i+1}$, $L_{i+2}$, and $L_{i+3}$ such that $F \cap (L_i \cup L_{i+1} \cup L_{i+2} \cup L_{i+3}) = \emptyset$. This implies that $G[L_i \cup L_{i+1} \cup L_{i+2} \cup L_{i+3}]$ must be a forest and therefore the tokens in $L_{i+1}$ and $L_{i+2}$ cannot move. If $P$ and $Q$ do not agree on the vertices in those two level sets then we have a no-instance. Otherwise, let $u$ and $v$ denote the vertices in $L_{i+1}$ and $L_{i+2}$, respectively. We delete $L_{i+1}$ and $L_{i+2}$ from the graph and replace them by a single vertex $w$ such that $w$ is connected to all neighbors of $u$ in $L_i$ and all neighbors of $v$ in $L_{i+3}$. It is not hard to see that we obtain an equivalent instance where the length of the shortest path is reduced by one, as needed to complete the proof. 
\end{proof}

\bibliographystyle{plain}
\bibliography{refs}

\begin{thebibliography}{10}

\bibitem{AsplundEHHNW18}
John Asplund, Kossi~D. Edoh, Ruth Haas, Yulia Hristova, Beth Novick, and Brett
  Werner.
\newblock Reconfiguration graphs of shortest paths.
\newblock {\em Discret. Math.}, 341(10):2938--2948, 2018.

\bibitem{DBLP:journals/corr/abs-2204-05549}
Valentin Bartier, Nicolas Bousquet, and Amer~E. Mouawad.
\newblock Galactic token sliding.
\newblock {\em J. Comput. Syst. Sci.}, 136:220--248, 2023.

\bibitem{DBLP:journals/algorithmica/BelmonteHLOO20}
R{\'{e}}my Belmonte, Tesshu Hanaka, Michael Lampis, Hirotaka Ono, and Yota
  Otachi.
\newblock Independent set reconfiguration parameterized by modular-width.
\newblock {\em Algorithmica}, 82(9):2586--2605, 2020.

\bibitem{Bonsma13}
Paul~S. Bonsma.
\newblock The complexity of rerouting shortest paths.
\newblock {\em Theor. Comput. Sci.}, 510:1--12, 2013.

\bibitem{Bonsma17}
Paul~S. Bonsma.
\newblock Rerouting shortest paths in planar graphs.
\newblock {\em Discret. Appl. Math.}, 231:95--112, 2017.

\bibitem{BousquetHKMS23}
Nicolas Bousquet, Felix Hommelsheim, Yusuke Kobayashi, Moritz
  M{\"{u}}hlenthaler, and Akira Suzuki.
\newblock Feedback vertex set reconfiguration in planar graphs.
\newblock {\em Theor. Comput. Sci.}, 979:114188, 2023.

\bibitem{BousquetMNS22+}
Nicolas Bousquet, Amer~E. Mouawad, Naomi Nishimura, and Sebastian Siebertz.
\newblock A survey on the parameterized complexity of the independent set and
  (connected) dominating set reconfiguration problems.
\newblock {\em CoRR}, abs/2204.10526, 2022.

\bibitem{DBLP:conf/caap/CournierH94}
Alain Cournier and Michel Habib.
\newblock A new linear algorithm for modular decomposition.
\newblock In Sophie Tison, editor, {\em Trees in Algebra and Programming -
  CAAP'94, 19th International Colloquium, Edinburgh, UK, April 11-13, 1994,
  Proceedings}, volume 787 of {\em Lecture Notes in Computer Science}, pages
  68--84. Springer, 1994.

\bibitem{DemaineEHJLUU19}
Erik~D. Demaine, David Eppstein, Adam Hesterberg, Kshitij Jain, Anna Lubiw,
  Ryuhei Uehara, and Yushi Uno.
\newblock Reconfiguring undirected paths.
\newblock In {\em Algorithms and Data Structures - 16th International
  Symposium, {WADS} 2019, Edmonton, AB, Canada, August 5-7, 2019, Proceedings},
  pages 353--365, 2019.

\bibitem{DBLP:conf/iwpec/GajarskyLO13}
Jakub Gajarsk{\'{y}}, Michael Lampis, and Sebastian Ordyniak.
\newblock Parameterized algorithms for modular-width.
\newblock In {\em Parameterized and Exact Computation - 8th International
  Symposium, {IPEC} 2013, Sophia Antipolis, France, September 4-6, 2013,
  Revised Selected Papers}, volume 8246 of {\em Lecture Notes in Computer
  Science}, pages 163--176. Springer, 2013.

\bibitem{GajjarJ0L22}
Kshitij Gajjar, Agastya~Vibhuti Jha, Manish Kumar, and Abhiruk Lahiri.
\newblock Reconfiguring shortest paths in graphs.
\newblock In {\em Thirty-Sixth {AAAI} Conference on Artificial Intelligence,
  {AAAI} 2022, Virtual Event, February 22 - March 1, 2022}, pages 9758--9766,
  2022.

\bibitem{DBLP:conf/icalp/GharibianS15}
Sevag Gharibian and Jamie Sikora.
\newblock Ground state connectivity of local hamiltonians.
\newblock In {\em Automata, Languages, and Programming - 42nd International
  Colloquium, {ICALP} 2015, Kyoto, Japan, July 6-10, 2015, Proceedings, Part
  {I}}, volume 9134 of {\em Lecture Notes in Computer Science}, pages 617--628.
  Springer, 2015.

\bibitem{DBLP:conf/stoc/GroheKS14}
Martin Grohe, Stephan Kreutzer, and Sebastian Siebertz.
\newblock Deciding first-order properties of nowhere dense graphs.
\newblock In David~B. Shmoys, editor, {\em Symposium on Theory of Computing,
  {STOC} 2014, New York, NY, USA, May 31 - June 03, 2014}, pages 89--98. {ACM},
  2014.

\bibitem{DBLP:journals/corr/abs-0912-1457}
Michel Habib and Christophe Paul.
\newblock A survey on algorithmic aspects of modular decomposition.
\newblock {\em CoRR}, abs/0912.1457, 2009.

\bibitem{DBLP:journals/tcs/ItoDHPSUU11}
Takehiro Ito, Erik~D. Demaine, Nicholas J.~A. Harvey, Christos~H.
  Papadimitriou, Martha Sideri, Ryuhei Uehara, and Yushi Uno.
\newblock On the complexity of reconfiguration problems.
\newblock {\em Theor. Comput. Sci.}, 412(12-14):1054--1065, 2011.

\bibitem{DBLP:journals/dam/ItoKD12}
Takehiro Ito, Marcin Kaminski, and Erik~D. Demaine.
\newblock Reconfiguration of list edge-colorings in a graph.
\newblock {\em Discret. Appl. Math.}, 160(15):2199--2207, 2012.

\bibitem{DBLP:journals/comgeo/LubiwP15}
Anna Lubiw and Vinayak Pathak.
\newblock Flip distance between two triangulations of a point set is
  {NP}-complete.
\newblock {\em Comput. Geom.}, 49:17--23, 2015.

\bibitem{DBLP:journals/ejc/NesetrilM08}
Jaroslav Nesetril and Patrice~Ossona de~Mendez.
\newblock Grad and classes with bounded expansion i. decompositions.
\newblock {\em Eur. J. Comb.}, 29(3):760--776, 2008.

\bibitem{DBLP:books/daglib/0030491}
Jaroslav Nesetril and Patrice~Ossona de~Mendez.
\newblock {\em Sparsity - Graphs, Structures, and Algorithms}, volume~28 of
  {\em Algorithms and combinatorics}.
\newblock Springer, 2012.

\bibitem{DBLP:journals/algorithms/Nishimura18}
Naomi Nishimura.
\newblock Introduction to reconfiguration.
\newblock {\em Algorithms}, 11(4):52, 2018.

\bibitem{pietrzak2003parameterized}
Krzysztof Pietrzak.
\newblock On the parameterized complexity of the fixed alphabet shortest common
  supersequence and longest common subsequence problems.
\newblock {\em Journal of Computer and System Sciences}, 67(4):757--771, 2003.

\bibitem{DBLP:books/cu/p/Heuvel13}
Jan van~den Heuvel.
\newblock The complexity of change.
\newblock In Simon~R. Blackburn, Stefanie Gerke, and Mark Wildon, editors, {\em
  Surveys in Combinatorics 2013}, volume 409 of {\em London Mathematical
  Society Lecture Note Series}, pages 127--160. Cambridge University Press,
  2013.

\bibitem{Wrochna18}
Marcin Wrochna.
\newblock Reconfiguration in bounded bandwidth and tree-depth.
\newblock {\em J. Comput. Syst. Sci.}, 93:1--10, 2018.

\end{thebibliography}
\end{document}